\documentclass[12pt]{article}

\usepackage{amssymb,amsmath,amsthm}
\usepackage[colorlinks,citecolor=blue,urlcolor=blue]{hyperref}

\usepackage{cases}
\usepackage{graphicx}
\usepackage{epstopdf}
\usepackage{algorithm}
\usepackage{algpseudocode}
\usepackage[round]{natbib}
\usepackage{color}

\usepackage[toc,page,title,titletoc,header]{appendix}
\usepackage{multirow}
\usepackage{booktabs}
\usepackage{setspace}
\usepackage{url}
\usepackage{arydshln}
\usepackage[overload]{empheq}
\usepackage{cleveref}

\usepackage[normalem]{ulem}

\DeclareMathOperator*{\argmin}{arg\,min} 

\usepackage{caption}
\usepackage{subcaption}

\usepackage{amsfonts}
\usepackage{bibentry}  
\nobibliography*

\newtheorem{theorem}{Theorem}
\newtheorem{lemma}{Lemma}

\newtheorem{definition}{Definition}[section]
\newtheorem{example}{Example}[section]
\newtheorem{remark}{Remark}
\newtheorem{corollary }{Corollary }

\graphicspath{ {figures/} }

\newcommand{\rd}{\,\mathrm{d}}
\newcommand{\EP}{\,\mathbb{P}}

\newcommand{\rhob}{B}

\newcommand{\nd}{A}
\newcommand{\vc}{\mathcal{V}}
\newcommand{\dd}{D}
\newcommand{\kk}{K}
\newcommand{\cc}{\chi_{0}}

\newcommand{\ev}{A}
\newcommand{\unv}{\zeta}

\newcommand{\betadiff}{\xi}
\newcommand{\mtbb}{\varrho}
\newcommand{\etaa}{\tau}

\DeclareMathAlphabet{\mathpzc}{OT1}{pzc}{m}{it}

\newcommand{\lbb}{\mathfrak{L}}
\newcommand{\lbbd}{\mathfrak{L}'}

\newcommand{\rhotwoinf}{\kappa_{2,\infty}}
\newcommand{\rhotwo}{\kappa_{2}}
\newcommand{\deltay}{\mathpzc{Y}}

\DeclareMathOperator*{\sign}{sign}

\usepackage[top=1in, bottom=1in, left=1in, right=1in]{geometry}



\title{Skewed Pivot-Blend Modeling with Applications to Semicontinuous Outcomes
}
\date{ }
\author{Yiyuan She, Xiaoqiang Wu, Lizhu Tao, and Debajyoti Sinha
}

\begin{document}

\maketitle

\begin{abstract}

Skewness is a common occurrence in statistical applications. In recent years, various distribution families have been proposed to model skewed data by introducing unequal scales based on the median or mode. However, we argue that the point at which unbalanced scales occur may be at any quantile and cannot be reparametrized as  an ordinary shift parameter in the presence of skewness.  In this paper, we introduce a novel skewed pivot-blend technique  to create a skewed density family based on any continuous density, even those that are asymmetric and nonunimodal. Our framework enables the simultaneous estimation of scales, the pivotal point, and other location parameters, along with various extensions. We also introduce a skewed two-part model tailored for semicontinuous outcomes,   which identifies relevant variables across the entire population and mitigates the additional skewness induced by commonly used transformations. Our theoretical analysis   reveals the influence of skewness without assuming asymptotic conditions. Experiments on synthetic and real-life data demonstrate the excellent performance of the proposed method.

\end{abstract}

\textbf{Keywords}: semicontinuous outcomes; skewed data; two-piece densities; two-part models;  variable selection; composite models.

\section{Introduction}
\label{sec:intro}
Statisticians frequently encounter skewed data in biomedical, econometric, environmental, and social research. Commonly used  models, such as linear regression, least absolute deviations, and robust  regression,  presume symmetric errors and are prone to significant distortions when confronted with skewness.  To mitigate the issue, many researchers   prefer transforming the data beforehand, with logarithmic-type transformations being among the most popular choices. 
Alternatively, some researchers use modal regression \citep{lee1989mode} or median-based methods, which are less sensitive to the assumption of symmetric errors. However,  these approaches do not explicitly account for and describe skewness.

To comprehensively address this issue, adopting a ``joint" modeling approach becomes essential and beneficial. This paper simultaneously estimates location, scale, and skewness parameters, thus avoiding the risk of either concealing true skewness (\textit{masking}) or erroneously detecting spurious skewness (\textit{swamping}). This risk is present when using a stepwise procedure, such as fitting a modal regression and then assessing skewness based on residuals  \citep{boos1987}. Our primary aim   is not only to accommodate skewness, as many papers do, but to \textit{explicitly} capture and  characterize its effects.

Various distributions have been proposed in the literature for  modeling skewed data. \cite{azzalini1985class} proposed a skewed density family including the skewed normal density as an example.
\cite{fernandez1998bayesian} proposed a  two-piece skewed distribution family that sets the mode at zero, including the skewed Student  and Laplace distributions  for Bayesian quantile regression \citep{arellano2005statistical,yu2001bayesian}.  \cite{rubiosteel} extended the family by use of   two scale parameters and additional shape parameters. \cite{kottas2001bayesian} described an alternative two-piece skewed distribution family  that   keeps the median at zero, but the resulting density is discontinuous. For a historical account of two-piece distributions, interested readers may consult \cite{rubio2020family}.

The existing constructions  rely on a symmetric and unimodal raw density, introducing asymmetric scales based on either the mode or the median of the raw density. However, in numerous real-life applications, these assumptions may not hold. Particularly, the point at which skewness is enforced, termed the ``\textbf{pivotal point}" in this paper, could be situated at any position or quantile. Intriguingly, this pivotal point   distinguishes  itself from the commonly used shift parameter, as opposed to  the prevailing assumption in the existing literature. To overcome these limitations, there is a demand for a novel skewed distribution family that offers flexibility, continuity, and  adaptability to any pivotal point of interest.

\begin{figure}[htp!]
\includegraphics[width=0.48\textwidth, height=2.5in]{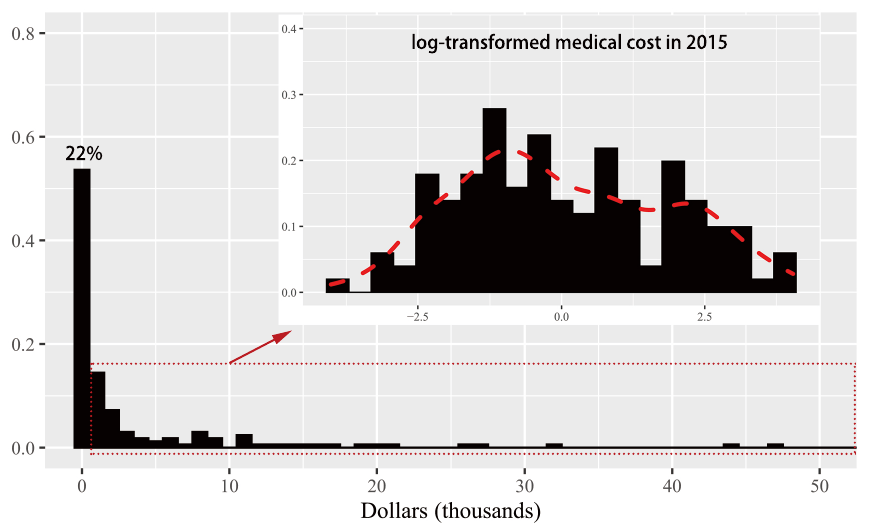}
\includegraphics[width=0.48\textwidth, height=2.5in]{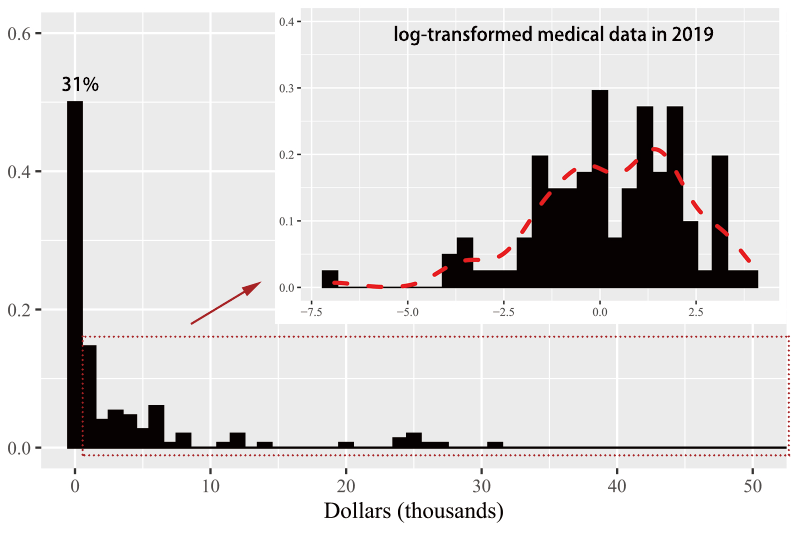}
\centering
\caption{\footnotesize Some histograms of  medical expenditures. Left: stratum ID 1098 (first PSU) of MEPS 2015, right: stratum ID 2109  (third PSU) of MEPS 2019.  In each example, the main plot  shows an excessive portion of zeros; the top right panel, excluding the zeros, plots the  log-transformed positive values of response (with estimated density  in red), which  still exhibit asymmetry despite the transformation.    \label{mepsraw}}
\end{figure}

This study draws inspiration from the Medical Expenditure Panel Survey (MEPS) data, which is obtained from national surveys investigating the impact of various demographic variables on the medical expenses of patients in the United States. Notably, this dataset features a skewed response that includes a significant number of zeros, a phenomenon known as ``\textbf{semicontinuous outcomes}" in the realms of economics and longitudinal studies \citep{olsen2001two}. To provide a visual representation, we employed two datasets \citep{us2015medical,us2019medical}, as depicted in Figure \ref{mepsraw}, for illustration.

According to Figure  \ref{mepsraw}, more than    $20\%$ patients have zero medical expenditure,  while the remaining exhibit highly skewed positive medical costs. Given that these zeros represent   precisely zero medical expenses, rather than truncation, a \textit{two-part} (or \textit{hurdle}) model \citep{mullahy1998much} is a more appropriate choice   than the \textit{Tobit} model \citep{tobin1958estimation}. In this approach, the binary part of the model captures zero-nonzero patterns, while the continuous part of the model addresses strictly positive outcomes.  However, it is important to highlight that applying a standard log-normal two-part model may not yield sufficient power, owing to the asymmetry depicted in the upper-right panels of  Figure \ref{mepsraw}.
We have frequently observed that conventional transformations, such as logarithmic or power functions, not only fail to entirely eliminate skewness but also introduce nontrivial points around which asymmetric scales arise.    Consequently, there may be a necessity to ``reinforce" the transformed model to counteract the  skewness effectively.

Another closely related challenge within the context of MEPS data analysis involves developing an  {interpretable} two-part model. This entails the identification of a subset of medical cost-relevant predictors that apply to the \textit{entire} population, serving as valuable guidance for policymakers. To the best of our knowledge, very few existing two-part models have considered the issue of joint variable selection, wherein each predictor can contribute to the response in a  composite  manner  through the binary and continuous parts.

This paper attempts to address some aforementioned challenges for possibly   skewed, semicontinuous outcomes. Our contributions are as follows.
\begin{enumerate}
\item
We introduce a novel    skewed pivotal-point adaptive  family, designed  to  infuse skewness around   {an unknown} pivotal point. The key ``\textbf{skewed  pivot-blend}'' technique is versatile and can be applied  to any raw density, regardless of its symmetry or unimodality. The resulting density remains continuous and  accommodates  many previous proposals. \item We introduce the \textbf{SPEUS} framework (Skewed Pivot-Blend Estimation with Unsymmetric Scales) for simultaneous estimation of scales, pivotal point, and other location parameters. This  framework offers  useful variants, especially for     modeling  semicontinuous outcomes  with joint variable selection. The resulting two-part method is capable of identifying relevant variables across the entire population and concurrently addressing the excessive skewness introduced by imperfect transformations


\item

We conduct nonasymptotic   analysis for sparse skewed two-part models, utilizing a notion of effective noise and  Orlicz norms to derive sharp statistical error bounds in the presence of skewness and heavy tails. Our work quantifies  how skewness and tail decay impact regularization parameters, prediction and estimation errors.\end{enumerate}

\paragraph{Notations and symbols.}  Given two vectors $\alpha,\beta\in \mathbb{R}^n$, their inner product is $\langle\alpha,\beta\rangle=\alpha^T\beta$ and  their elementwise product is denoted by the vector $\alpha\circ\beta$. Given a scalar function $l$ and a vector $a$, $l(a)=[l(a_i)]^n_{i=1}$, i.e., $l$ is applied componentwise. Throughout the paper, we use $1_A(x)$ to denote the indicator function of $A$,  taking 1 if $x\in A$ and 0 otherwise. In particular,     given any vector $a\in\mathbb{R}^n$, define two indicator vectors
$
    1_{+}(a)=\left[1_{a_i>0}\right]^n_{i=1}, 1_{-}(a)=\left[1_{a_i<0}\right]^n_{i=1}.
$ 
Define $\mathbb R_+ = [0, \infty)$. Given a continuous density $f$ (with respect to the Lebesgue measure $\mu$), we use $f(\cdot| A)$ to denote the conditional density given $A$, or $f(\cdot | A) = \frac{f(\cdot)}{\int_A f \rd \mu}$.
 Given any matrix $A=[a_1,\ldots,a_p]^T\in \mathbb{R}^{p\times m}$, its spectral norm and Frobenius norm are denoted by $\|A\|_2$ and $\|A\|_F$, respectively.  The (2,1)-norm of $A$ is defined as $\|A\|_{2,1}=\sum^{p}_{j=1}\|a_j\|_2$.   We use $A_k$ to denote the $k$th column of $A$. 
Given $a,b\in \mathbb{R}$, we use the shorthand notation $a\lor b$ ($a\land b$) to denote the maximum (minimum) of $a$ and $b$.


\section{Skewed Pivotal-Blend Estimation}
\label{sec:pb-speus}
\subsection{Skewed Pivot-Blend for Density  Pasting}
\label{subsec:spb}

How to define a skewed distribution family from a  unimodal, continuous, and symmetric  density $\phi$ has attracted a lot of attention in the literature.
\cite{azzalini1985class}   multiplied $\phi$ by a perturbation function to define a so-called ``skewed symmetric distribution" family, one well-known example being  the skewed normal  distribution. We   refer the reader to \cite{nadarajah2003skewed}, \cite{wang2004skew}, and \cite{azzalini2005skew} for   variants and further extensions.  On the other hand, the associated skewed distribution function often lacks an explicit form, and determining its mode can be a challenging task \citep{ma2004flexible}.

``Two-piece'' skewed distributions are  popularly used in recent years. \cite{fernandez1998bayesian} introduced a  two-piece transformation  that rescales $\phi$'s negative   and positive parts differently using  an asymmetry parameter, allowing it to maintain the mode at zero. A  reparametrization   of the approach, following \cite{arellano2005statistical},  includes the skewed Student  and  epsilon-skew-normal distributions \citep{fernandez1998bayesian,Mudholkar:2000}. Later, \cite{rubiosteel} extended this idea to
include two scale parameters (and additional shape parameters). The motivation for our work largely stems from their two-piece form, even though it assumes that the median of $\phi$ is zero.
Another two-piece distribution family due to  \cite{kottas2001bayesian} can guarantee a  median at zero, but the resulting density  is  discontinuous, which  may cause difficulties and instability in parameter  estimation.  Interested readers may refer to \cite{jones2014generating} for a systematic framework of how to construct skewed distributions from a given symmetric density.

Despite the     research in this area, two issues have caught our particular attention and deserve further investigation. Firstly,  the majority of existing works stipulate that $\phi$ should be unimodal and symmetric. However, situations can arise where skewness   manifests when dealing with non-unimodal data. 
There might also be a need for additional reinforcement to counteract skewness, even when employing an asymmetric density. Another more critical concern is that in previous works, the transition point at which unequal scales are imposed, referred to  as the {\textbf{pivotal point}} in this paper, is typically set at the mode or the median. Nevertheless, skewness can persist when the density deviates from  the assumptions above or below \textit{any} quantile, a common occurrence when using an imperfect transformation.


In the following, we introduce a process known as  ``\emph{skewed pivot-blend}''   (or   sometimes pivot-blend for brevity) to  characterize skewness as a combination of both first-order and higher-order statistical effects. We define a versatile two-piece distribution framework with skewness, designed to (i) accommodate any asymmetric or non-unimodal $\phi$, (ii) model skewness associated with any pivotal point of interest, and (iii) maintain continuity. See Figure \ref{fig:forward} for an illustration.

\begin{figure}[!htp]
\centering
\includegraphics[width=1.0\textwidth, height=2.5in]{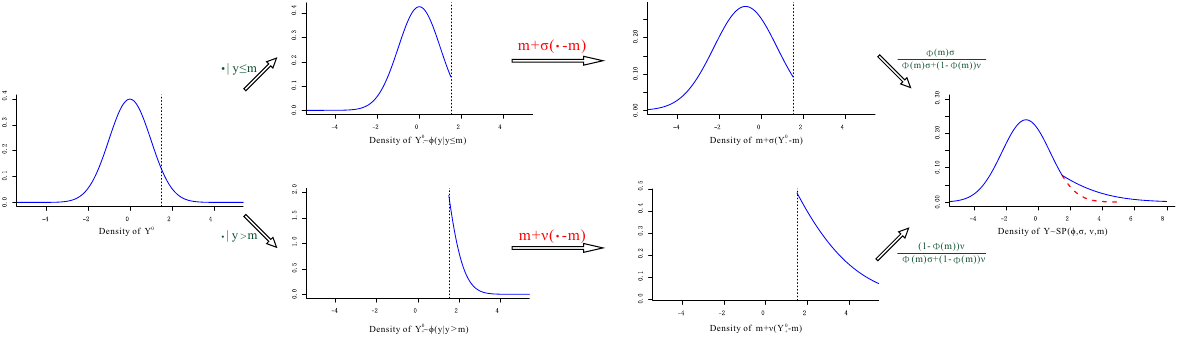}
\caption{\footnotesize   Diagram  showing the process of ``skewed pivot-blend'' for constructing a skewed density: conditioning, affine transformations, and mixing. The two affine transformations ensure that the   cut points remain fixed  (alignment in the \textit{$x$-direction}), and the mixing process guarantees the continuity of the resulting density (alignment in the \textit{$y$-direction}).\label{fig:forward} }
\end{figure}

We provide a step-by-step guide for  constructing a new density function from an arbitrary continuous density (denoted as $\phi$), during which skewness is imposed around a pivotal point $m$ ($0 < \Phi(m) < 1$) and is defined by the left and right scales, $\sigma$ and $\nu>0$.

 a)   \textit{Pivotal-point conditioning}: The density $\phi$  is  conditioned into two separate densities,  one for $y \leq m$,  and the other for $y > m$,  resulting in   $\frac{\phi(y)1_{y \leq m}}{\Phi(m)}$ and $\frac{\phi(y)1_{y > m}}{1 - \Phi(m)}$.

 b) \textit{Affine transformation}:  Apply two separate affine transformations to the random variables associated with the aforementioned densities, concerning the pivotal point $m$:  $m + \sigma(\cdot - m)$ and $m + \nu(\cdot - m)$. The resulting densities are $\frac{\phi\left(\frac{y-m}{\sigma}+m\right)1_{y \leq m}}{\Phi(m)\sigma}$ and $\frac{\phi\left(\frac{y-m}{\nu}+m\right)1_{y > m}}{(1-\Phi(m))\nu}$. It is crucial to emphasize that these transformations are not simple scalings, but are designed to ensure that the cut points of the density functions remain aligned.

 c) \textit{Continuous mixing}: Probability masses $p$ and $1-p$ are assigned to the two  densities obtained from the last step,   resulting in a new density function:
\begin{align}\label{twopiecederivequestion}
    f_{}(y)=  p\times \frac{1}{\Phi(m)\sigma}\phi\big(\frac{y-m}{\sigma}+m\big)1_{y\leq m}+(1-p)\times\frac{1}{\{1-\Phi(m)\}\nu}\phi\big(\frac{y-m}{\nu}+m\big)1_{y>m},
 \end{align}
 where $\Phi$ denotes the distribution function of $\phi$ throughout the paper unless otherwise specified.
   Given that $\phi(m) > 0$ typically holds,  ensuring the continuity of  $f_{}$  at $m$ requires that  $ {p}/({\Phi(m)\sigma})=({1-p})/\{ (1-\Phi(m)\nu\}$, which  leads to a \textbf{unique} choice of $p$:
\begin{align}\label{probdef}
p=\frac{\Phi(m)\sigma}{\Phi(m)\sigma+\{1-\Phi(m)\}\nu}, \mbox{ or } \ \    \frac{\mathbb{P}(Y\leq m)}{\mathbb{P}(Y>m)}=\frac{\Phi(m)\sigma}{\{1-\Phi(m)\}\nu}.
\end{align}
 We sometimes refer to the process as the ``forward'' pivot-blend transform (to contrast with the ``backward'' pivot-blend transform  to be introduced in Remark \ref{conversionremark}). When $\sigma=\nu$ or $m$ is   not in the support  of $\phi$, pivot-blend  operates as a location-scale transformation. Otherwise,  it  serves as a versatile tool for modeling skewed data, encompassing various existing skewed density functions.   Notably, the incorporation of a single pivotal point parameter $m$ substantially improves skewed data modeling in practical applications.

\begin{definition}[Skewed pivot-blend (\textbf{SP})  family]\label{skewdensitydef}
Given a  continuous density $\phi$ and a pivotal point $m$, we say that $Y$ is a skewed random variable with $m$-associated  left- and right-scale parameters $\sigma$ and $\nu$, i.e., $Y\sim \mbox{SP}^{(\phi)}(\sigma,\nu,m)$, if its density is given by
\begin{align}\label{skeweddensity}
   f (y; m, \sigma, \nu)= \frac{\phi\big(\frac{y-m}{\sigma}+m\big)1_{y\leq m}+\phi\big(\frac{y-m}{\nu}+m\big)1_{y>m}}{\Phi(m)\sigma+\{1-\Phi(m)\}\nu}.
\end{align}
\end{definition}

We occasionally write $Y \sim \mbox{SP}^{(\phi)}$ and omit the parameters when there is no ambiguity.
Throughout the paper, we use the term  \textit{skewness} to refer to asymmetric scales ($\sigma\ne \nu$), regardless of the shape   of $\phi$.

The pivotal location     $m$   can be translated to a \textit{pivotal  quantile}   $q$. Let $q=\Phi(m)$, then an equivalent form of \eqref{skeweddensity} is
\begin{align*}
  \frac{\phi\big(\frac{y-\Phi^{-1}(q)}{\sigma}+\Phi^{-1}(q)\big)1_{\Phi(y)\leq   q }+\phi\big(\frac{y-\Phi^{-1}(q)}{\nu}+\Phi^{-1}(q)\big)1_{\Phi(y)>  q }}{q\sigma+(1-q)\nu}.
\end{align*}
For the distribution function $F(y) =  \frac{1}{\sigma \Phi(m) + \nu(1 - \Phi(m))} \{\sigma \Phi(\frac{y-m}{\sigma}+m) 1_{y\le m} +  [\nu \Phi(\frac{y-m}{\nu}+m)+(\sigma - \nu)\Phi(m)]  1_{y> m} \}$,   the new quantile at $m$ is related to the original quantile $q$ by
$
F(m) = \frac{q\sigma}{q\sigma + (1-q)\nu}\lessgtr  q$ when $\sigma \lessgtr \nu$.

When working with the family described   in Definition \ref{skewdensitydef}, it  is a   common practice to add a shift or intercept
$\alpha \in \mathbb{R}$ and assume $Y-\alpha\sim\mbox{SP}^{(\phi)}(\sigma,\nu,m)$; however, it is crucial to note that $\alpha$ and $m$ are   generally   \textbf{not} redundant. This distinction arises because the operations of translation and asymmetric rescaling  utilized in the skewed pivot-blend process do not commute (cf.  Remark \ref{Pivotvsintercept}).
%


\begin{figure}[!ht]
\begin{subfigure}{0.45\textwidth}
  \includegraphics[width=\textwidth, height=3.5cm]{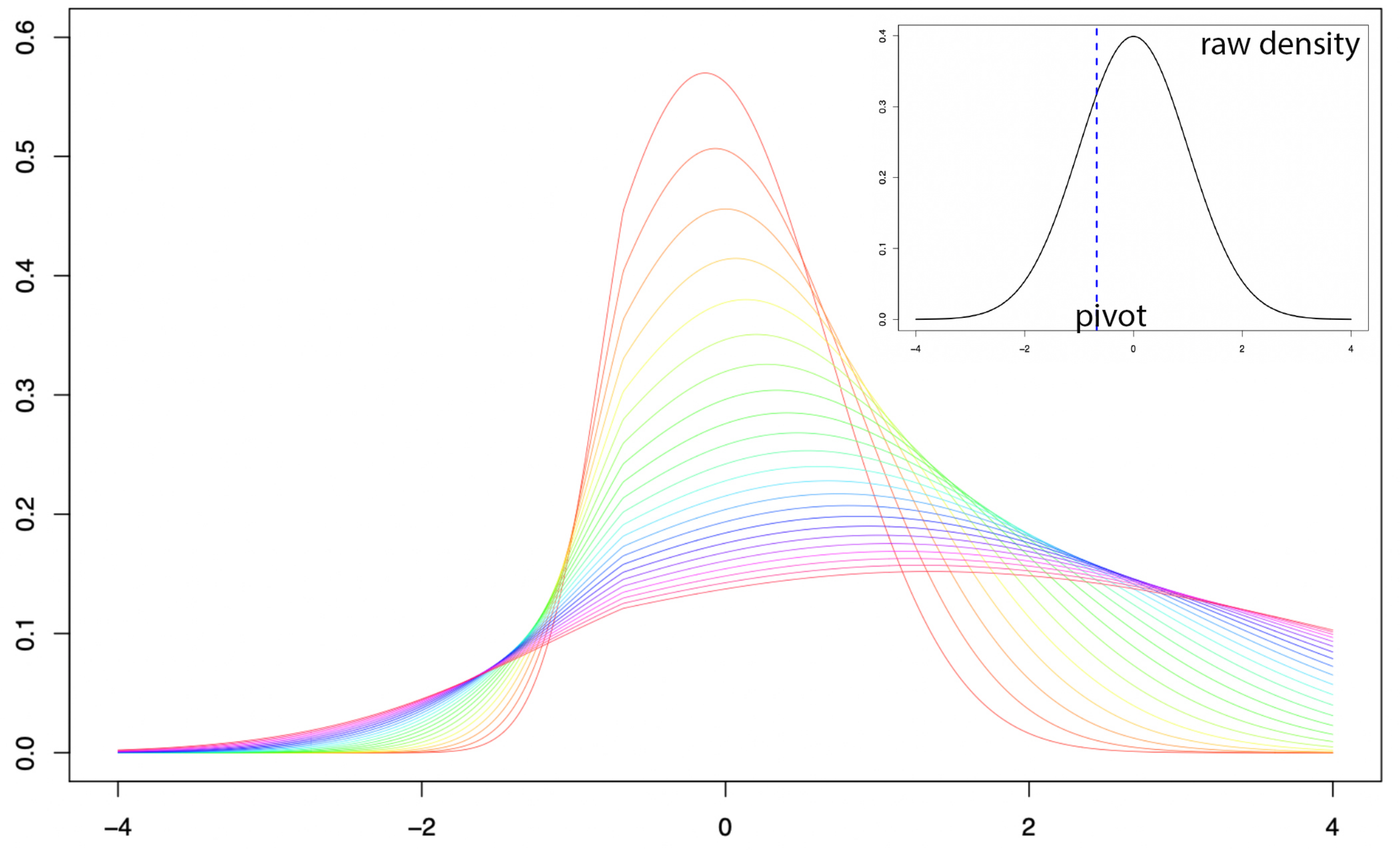}
  \caption{\scriptsize  $\phi\sim N(0,1)$,   $m=\Phi^{-1}(0.25)$}
  \label{normalspeusdemo}
\end{subfigure}\hfill
\begin{subfigure}{0.45\textwidth}
  \includegraphics[width=\textwidth, height=3.5cm]{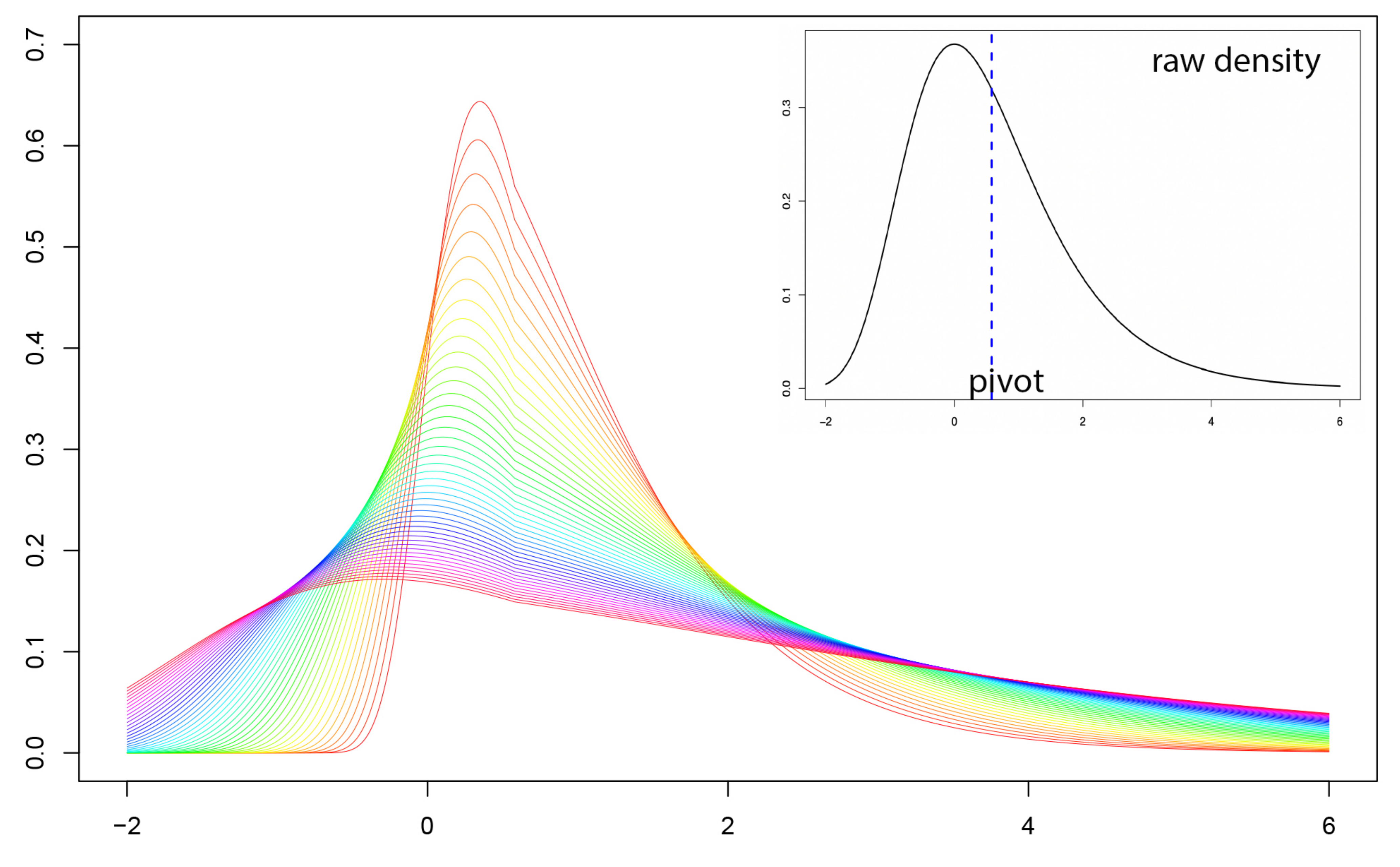}
  \caption{\scriptsize  $\phi$: Gumbel,   $m$: mean}
  \label{gumbelsdemo}
\end{subfigure}\\
\begin{subfigure}{0.45\textwidth}
  \includegraphics[width=\textwidth, height=3.5cm]{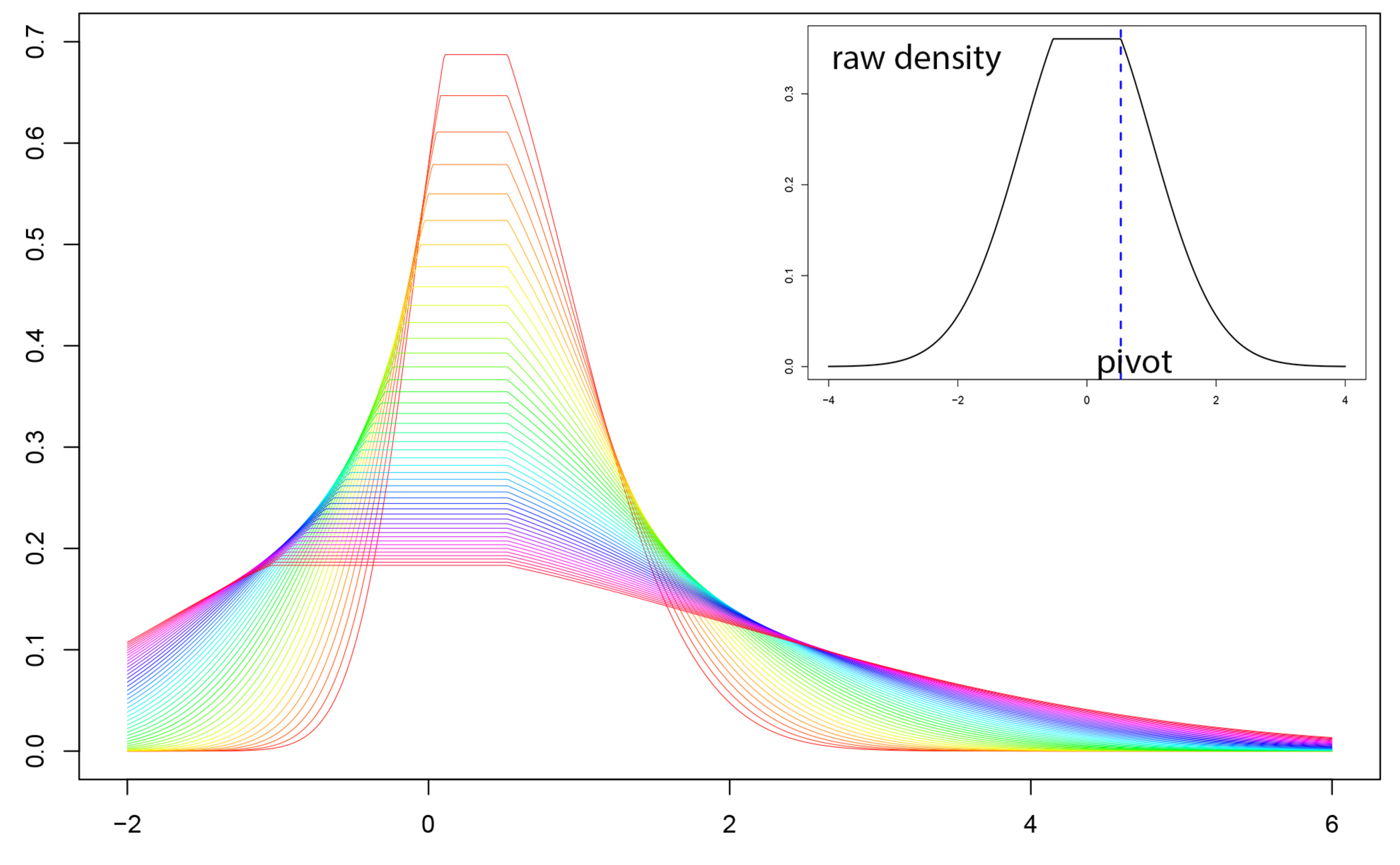}
  \caption{\scriptsize  $\phi$: flat modal,   $m$: (rightmost) mode}
  \label{flatnormalsdemo}
\end{subfigure}\hfill
\begin{subfigure}{0.45\textwidth}
  \includegraphics[width=\textwidth, height=3.5cm]{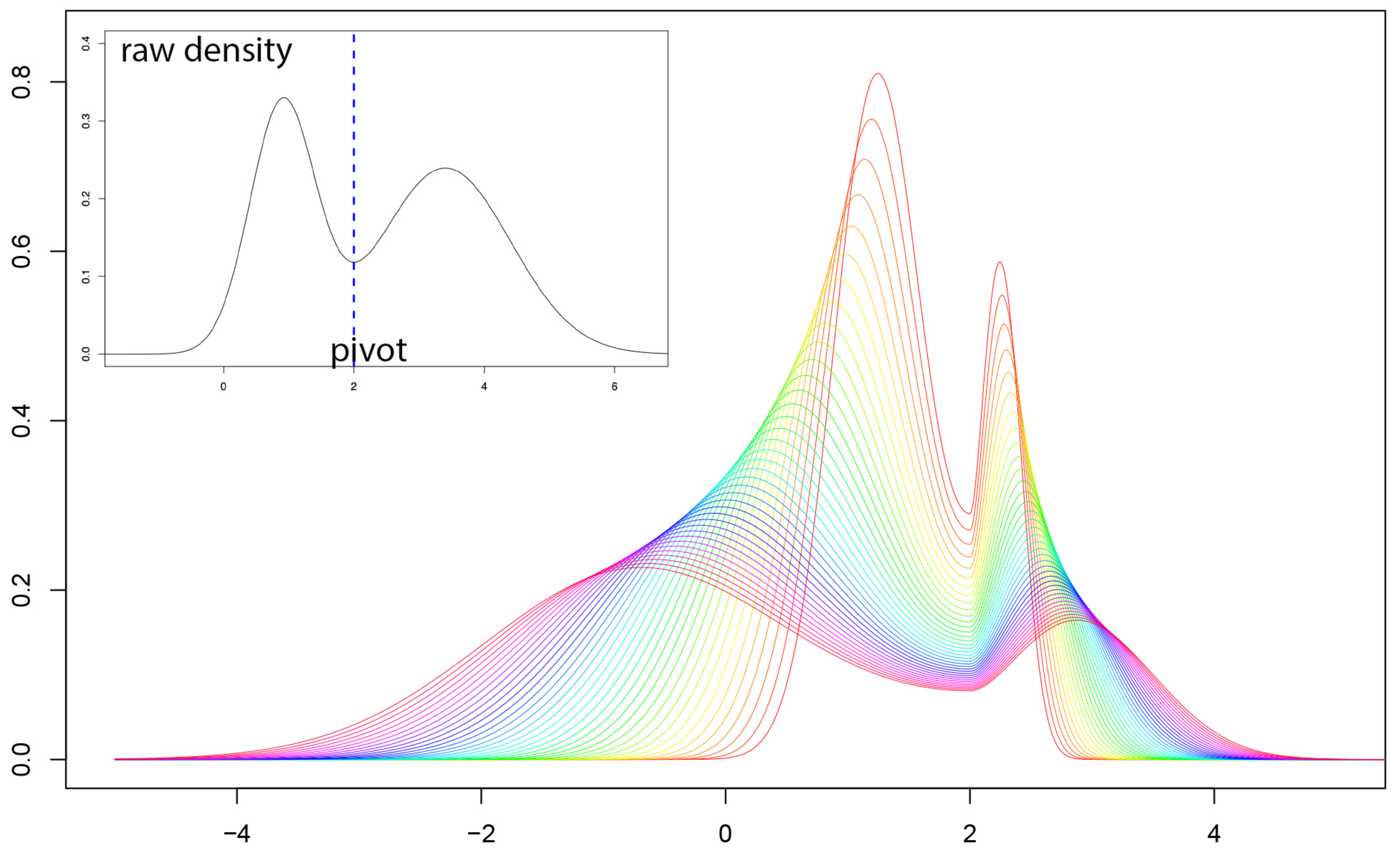}
  \caption{\scriptsize  $\phi\sim 0.4N(0.9,0.25)+0.6N(3.4,1), m=2$}
  \label{bimodaldemo}
\end{subfigure}
\caption{\footnotesize Illustration of some SP  families with \textit{varied} $\sigma$ and $\nu$ (while maintaining a constant ratio). These plots demonstrate the {\textit{versatility}} of skewed pivot-blend in generating a wide range of distributions for practical modeling, including asymmetry and diverse tails (which contrasts with traditional methods assuming symmetry in $\phi$ and median/mode in $m$). \label{fig:spexs} }
\end{figure}

    Figure \ref{fig:spexs} provides visual examples of introducing skewness through pivot-blend   around a nontrivial pivotal point, with variations in $\sigma$ and $\nu$ to demonstrate different tail decay behaviors. 

\begin{example}[Skewed double-gamma family]
 Skewed densities may involve heavy tails and multimodality. When applied to the deneralized double-Gamma density (GDG),   $p/\{2\gamma^d\Gamma(d/p)\} \allowbreak |y|^{d-1}\exp\{-|y|^p/\gamma^p\}$, an extension of  \cite{stacy1962generalization},  skewed pivot-blend reveals
{\small
\begin{align*}
\mbox{\emph{SP\textsuperscript{{\tiny (GDG)}}:}}\, \,
    \,\,&\frac{p}{2[\Phi(m)\sigma+\{1-\Phi(m)\}\nu] \gamma^d\Gamma(d/p)}\Big[\big|\frac{y-m}{\sigma}+m\big|^{d-1}\exp\big\{-\frac{|(y-m)/\sigma+m|^p}{\gamma^p}\big\}1_{y\leq m}
\\
&\quad +\big|\frac{y-m}{\nu}+m\big|^{d-1}\exp\big\{-\frac{|(y-m)/\nu+m|^p}{\gamma^p}\big\}1_{y> m}\Big],
\end{align*}}
\normalsize
where  $\gamma ,d,p>0$  are
 parameters.
 The skewed GDG family comprises bimodal types  such as the skewed double gamma ($p=1, m=0$) and    skewed double Weibull ($d=p, m=0$).
The   unimodal  skewed exponential power distribution family  \citep{zhu2009properties}  is  another instance ($\gamma=1$, $d=1$),  including  the   skewed Laplace distribution and skewed normal distributions \citep{arellano2005statistical,Mudholkar:2000}.
\end{example}

\subsection{SPEUS for Skewed Regression}

Skewed pivot-blend  is a valuable tool for statistical modeling of a skewed outcome $y \in \mathbb{R}^n$  associated with $p$ predictors collected in the matrix $X \in \mathbb{R}^{n\times p}$. Given a  density function $\phi$, if we  assume
\begin{align*}
y-X\beta^{*}\sim \mbox{SP}^{(\phi)}(\sigma^{*},\nu^{*}, m^{*}) 
\end{align*} and define
$\rho=-\log\phi$,   the  estimation of     $\beta^{*},\sigma^{*},\nu^{*},m^{*}$  can be formulated as a joint optimization problem
\begin{align}\label{locationfamily}
\begin{split}
        \min_{\beta,\sigma,\nu,m}\ & n\log\big[\sigma\Phi(m)+\nu\{1-\Phi(m)\}\big]+ \sum^{n}_{i=1}\Big\{\rho\big(\frac{r_i-m}{\sigma}+m\big)1_{r_i-m\leq 0}\\&+\rho\big(\frac{r_i-m}{\nu}+m\big)1_{r_i-m>0}\Big\},  \quad \mbox{s.t.} \quad
r=y-X\beta, \sigma>0,\nu>0,
\end{split}
\end{align}
where the first term arises from the so-called ``normalizing constant'' which is a joint function of $m , \sigma, \nu$.
   Henceforth, we refer to  the framework of \eqref{locationfamily}   as the  Skewed  Pivot-blend Estimation with  Unsymmetric  Scales (\textbf{SPEUS}). We always assume that $\rho$ is constructed from a given  density function  unless  otherwise specified.  \eqref{locationfamily}  is thus an instance of   maximum likelihood estimation (MLE), and  standard MLE asymptotic theory  guarantees  consistency and other  properties.   In practical implementation,   the values of $r_i$ are rarely equal to $m$ and so conventional optimization algorithms like gradient descent, Newton's method, and quasi-Newton methods can  be readily applied. Given the nonconvex nature, initialization impacts estimates, especially for small sample sizes. We usually start location parameters at 0, but using a preliminary  estimate like \cite{yang2019robust} tends to yield better performance. 
A Bayesian approach can be developed as well. It is also worth pointing out that skewed pivot-blend, like other skewness-introducing methods, operates on a given   density with asymmetric scales to handle skewed data. We do not explore nonparametric approaches in this paper (but refer to Appendix \ref{app:furtherext} for potential ideas  involving kernels and data ranks).

\begin{remark}[\bf Pivotal Point vs. Intercept]\label{Pivotvsintercept}
Typically, an intercept  $\alpha$ is included the model, and so $r=y-X\beta=y-X^\circ\beta^{\circ}-1\alpha$, where $X=[1,X^\circ], X^\circ = [\tilde x_1, \ldots, \tilde x_n]^T, \beta=[\alpha,(\beta^\circ)^{T}]^T$. Interestingly, when skewness is present, the pivotal point $m$ diverges from the intercept $\alpha$.

Specifically, based on previous discussions, we have the following density form
$$\sum_{i=1}^n\frac{\phi\big(\frac{y_i- \tilde x_i^T \beta^{\circ}-\alpha-m}{\sigma}+m\big)1_{y_i- \tilde x_i^T \beta^{\circ}\leq m+\alpha}+\phi\big(\frac{y_i- \tilde x_i^T \beta^{\circ}-\alpha-m}{\nu}+m\big)1_{y_i- \tilde x_i^T \beta^{\circ}>m+\alpha}}{\Phi(m)\sigma+\{1-\Phi(m)\}\nu} .$$
It is evident that $m$ plays a more intricate role compared to $\alpha$. If $\sigma=\nu$, the  expression within the sum  can be rewritten in a location-scale form: $(1/\sigma)\phi((y_i- \tilde x_i^T \beta^{\circ} - \alpha')/\sigma)$, where $\alpha' = \alpha + (1 - \sigma)m$. In this special case,  $m$ can be absorbed  into the combined intercept  $\alpha'$, which  is unique (ensuring the final model has no ambiguity).
  This also  applies to $m \le \min r_i$ or $m\ge \max  r_i$,   regardless of scale differences. However, in situations beyond the simple  unskewed case (e.g., when $\sigma$ and $\nu$ are not exactly equal and  $m$ is within the support of $r_i$ or $ 1/n < q< 1-1/n$), $m$ \emph{cannot}    be incorporated into the intercept or casually   discarded.

To the best of our knowledge, the distinct roles of   pivotal point and
intercept in the context of skewness have  received little attention in  existing literature. Our proposal is one of the first attempts to introduce pivotal point estimation into the statistical modeling of skewed data.
\end{remark}

\begin{remark}[\bf Backward Pivot-blend  for Residual Diagnostics]\label{conversionremark}

Let's start by rewriting the forward pivot-blend transform for generating a random variable following $\mbox{SP}^{(\phi)}(\sigma,\nu,m)$:   With   $Y^0_-\sim\phi(y\mid y\leq m),Y^0_+\sim\phi(y\mid y> m)$  and  an independent Bernoulli variable   $U\sim \mbox{Ber}(\sigma\Phi(m)/[\Phi(m)\sigma+\{1-\Phi(m)\}\nu])$, we can construct
\begin{align*} 
\begin{split}
    Y=&\,\begin{cases} m+\sigma(Y^0_- - m) \mbox{ if }  {U=1} \\
                       m+\nu(Y^0_+ - m) \mbox{ if }  {U=0},\end{cases}
\end{split}
\end{align*}
and guarantee that $Y$ follows   $\mbox{SP}^{(\phi)}( \sigma, \nu, m)$.  Conversely, given   $f $ representing $ \mbox{SP}^{(\phi)}(\sigma,\nu,m)$, we can use   $ Y_-\sim f (y\mid y\leq m)$, $ Y_+\sim f (y\mid y> m)$ and an independent Bernoulli random variable      $ V\sim \mbox{Ber}(\Phi(m))$\ to construct a  random variable $Y^0\sim \phi$ using  the  ``\emph{backward}'' pivot-blend:
\begin{align}\label{backtransform}
 Y^0= \Big(\frac{Y_--m}{\sigma}+m\Big) 1_{V=1}+\Big(\frac{Y_+-m}{\nu}+m\Big)1_{ V=0}.  \end{align}
In addition to employing the inverse of the affine transformations in the forward process, the Bernoulli distribution here features a different probability.
 Thus, an $\mbox{SP}^{(\phi)}$ sample can be transformed into a \emph{\textbf{weighted}} sample that follows $\phi$. Importantly,  the functional form of the distribution $\Phi$ is not required to calculate the probability weights; instead,  we can turn to the mixing probability formula \eqref{probdef}
\begin{align}\label{probwts-backPB}
\EP(V=1)=\Phi(m)  = \frac{\nu \EP(Y\le m)}{\nu \EP(Y\le m) + \sigma  \EP(Y > m)}, \EP(V=0)  = \frac{ \sigma  \EP(Y > m) }{\nu \EP(Y\le m) + \sigma  \EP(Y > m)}
\end{align}
and directly estimate these quantities    from the data (also applicable to nonparametric skew estimation in   Appendix \ref{app:furtherext}).

In the context of SPEUS,  \eqref{backtransform} can be used to generate  ``back-transformed" residuals for model diagnostics:   once  the parameters $\beta, \sigma, \nu, m$ are determined, a weighted sample can be created from the residual vector $r = y - X\beta$, which, if the model assumption holds, should adhere to   $\phi$.   First,    define    $\mathcal L=\{i: r_i\leq m\}$,  $L=|\mathcal L| $,   $\mathcal R=\{i: r_i > m\}$, and $R=|\mathcal R|$. As aforementioned,         $\Phi(m)$ can be estimated by $L\nu/\{L\nu+R\sigma\}$. Next, define  $\tilde{r}=[\tilde{r}_i]\in \mathbb{R}^n$:
\begin{align*}
    \tilde{r}_i= \begin{cases} (\frac{r_i-m}{\sigma}+m ) & \mbox{ if }  {r_i\leq m} \\ (\frac{r_i-m}{\nu}+m ) & \mbox{ if }  { r_i> m,}\end{cases} \quad\mbox{ for } 1\le i \le n. 
\end{align*}
   Finally, assign two sets of \emph{nonuniform} probabilities $p_i$ to      $\tilde{r}_i$ based on \eqref{probwts-backPB}: \begin{align*}
p_i =   \nu/(L\nu+R\sigma) \text{ for } i\in \mathcal L, \text{ and }   \sigma/(L \nu+R \sigma) \text{ for } i\in \mathcal R.\end{align*}

Now we can use the R software to plot a \emph{weighted histogram}  of $\tilde r_i$  and compare it to the hypothetical density $\phi$.  This is akin to standard OLS diagnostics for checking the goodness-of-fit of residuals under the Gaussian assumption. In practical data analysis,  after fitting the SPEUS model, one can display the back-transformed residual plot to verify if skewness has been adequately addressed (cf. Figure \ref{fig:backward}).  
\begin{figure}[!h]
\centering
\includegraphics[width=1\textwidth,height=4.5cm]{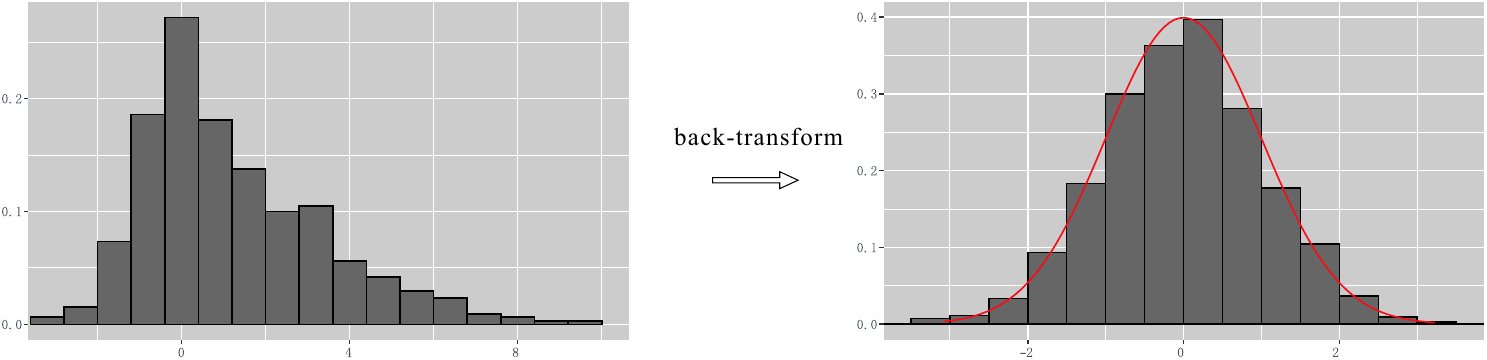}
\caption{\footnotesize An illustration of the {weighted histogram} of back-transformed residuals. The left panel shows residuals from a {skewed} SP$^{(\phi)}$  model with   a symmetric $\phi$, and the right panel displays backward pivot-blend   residuals using estimated parameters. The model assumption is considered valid when the back-transformed residuals closely resemble $\phi$ (the red curve) and,  importantly, exhibit \textit{symmetry}, indicating effective handling of skewness. \label{fig:backward} }
\end{figure}

\end{remark}

\subsection{Expansions of Skewed Pivot-Blend and Relevant Works}\label{subsec:exts}
\subsubsection{Extensions and Beyond}
Our main focus is on applications where the loss function $\rho$ is   derived from a single density function $\phi$. However, there are also variations of skewed pivot-blend   that hold value across different applications and fields. 


\paragraph{Skewed Pivot-blend for two  densities.} Skewed pivot-blend extends  capabilities to seamlessly   ``paste'' two distinct densities with varying scales, while ensuring continuity at the pivotal point.  Consider $\phi$ and $\psi$ as two continuous densities with respective distributions  $\Phi$ and $\Psi$, and  $m$ an interior point within the support of both densities. The   process of conditioning, affine transformations, and mixing,  using two scales, $\sigma$ and $\nu$ in relation to $m$, leads to
\begin{align}
p \frac{\phi(\frac{y-m}{\sigma} + m)}{\Phi(m) \sigma}1_{y\le m} + (1-p) \frac{\psi(\frac{y-m}{\nu} + m)}{(1 - \Psi(m)) \nu}1_{y > m}.  \label{pb-2den-starting}
\end{align}
Choosing
\begin{align*} p = \frac{\psi(m) \Phi(m)\sigma}{\phi(m)(1 - \Psi(m))\nu+\psi(m) \Phi(m)\sigma}\end{align*} results in the following continuous density:
\begin{align}
\frac{\psi(m) \phi(\frac{y-m}{\sigma} + m)1_{y\le m}+\phi(m) \psi(\frac{y-m}{\nu} + m)1_{y > m}}{\phi(m)(1 - \Psi(m))\nu+\psi(m) \Phi(m)\sigma }.\label{twodenpasted}
\end{align}
This offers a means of fusing two distinct tail types with exceptional  flexibility.
 Remarkably, even when     $\sigma=\nu$, the pasted density in \eqref{twodenpasted}  does not conform to a location-scale form (in contrast to  the single-density scenario, cf. Remark \ref{Pivotvsintercept}), and  $m$ cannot be simply interpreted  as  a  location shift.
Beyond the estimation of scales, an  intriguing  question   is to determine  the pivotal point at which the two densities coalesce. Furthermore, the concept of skewed pivot-blend can be iteratively applied to paste multiple densities with varying scales. In multidimensional spaces, the pivotal point can be extended to a \textit{pivotal hyperplane} for combining two densities, which is another intriguing  topic for future exploration.

\paragraph{Skewed pivot-blend for bounded losses.}
In our discussions, we generally assume that $\rho$ is a negative log-likelihood---for example,   a convex $\rho$ function like Huber's loss corresponds to a log-concave density. However,  it is well established in robust statistics that bounded nonconvex losses are  more effective     in handling  extreme outliers with high leverage. Two prominent examples are Tukey's bisquare loss and Hampel's three-part loss, both of which  are bounded (or winsorized, preventing them from reaching $+\infty$) and are   constructed using piecewise polynomials  \citep{hampel2011robust}. We can   formulate a general objective for estimating the location parameters   
\begin{align}\label{skrobustloss}
  \sum^{n}_{i=1}\big\{\rho (\frac{r_i-m}{\sigma}+m )1_{r_i\le m }+\rho (\frac{r_i-m}{\nu}+m )1_{r_i> m}\big\} +  n\cc\log \big(\frac{\sigma\Phi(m)/\{1-\Phi(m)\}+\nu}{1+\Phi(m)/\{1-\Phi(m)\}}\big).
\end{align}
Here,    $r=y-X\beta$, $0\le \Phi(m) \le 1$, 
and $\chi_0$ is    for the purpose of calibration.
The user can specify the particular forms of $\rho$ and $\Phi$ (and in the convex-$\rho$ case,  $\Phi(m)$ may take $ \lim_{M\to\infty} {\int^{m}_{-M}\exp(-\rho(t))\rd t}/{\int^{M}_{-M}\exp(-\rho(t))\rd t}$). 
 In robust statistics, it is often recommended to first perform a separate ad-hoc robust scale estimation \citep{maronna2006robust}, before proceeding to optimize \eqref{skrobustloss} for the location parameters $\beta, m$.  But various selections for $\sigma$ and $\nu$ influence the structure of the resulting  asymmetric loss. This practice prompts a theoretical inquiry:  is it possible to set a finite-sample error bound for location estimation using data-dependent scales, regardless of scale construction or the data distribution of   $y$?
For a nonasymptotic analysis using statistical learning theory, see Theorem \ref{riskbound} for     insights on how skewness adds to problem complexity and increases  ``excess risk".

Finally, extensions of the skewed pivot-blend in nonparametric estimation, such as methods based on data ranks and kernels, can be found  in Appendix \ref{app:furtherext} due to space constraints.
\subsubsection{Other Related Works}

Section \ref{sec:intro} and Section \ref{subsec:spb} provide a list of relevant works on two-piece distributions. Moreover, as pointed out by  a reviewer, our skewed pivot-blend idea shares similarities with the  ``composite models'' in the fields of finance and actuarial sciences.

 In such a context, researchers often aim to create a new size distribution by combining two distributions: one that is lighter-tailed on the left (e.g., lognormal or Weibull), and another that is heavier-tailed on the right (e.g., Pareto) \citep{cooray2005modeling}. This can be expressed     as  $p \cdot \frac{\phi(y)}{\Phi(m)} \cdot 1_{y\leq m} + (1-p) \cdot \frac{\psi(y)}{1 - \Psi(m)} \cdot 1_{y > m}$  \citep{Scollnik2007}, with the choice of $p$   to ensure the resulting density is smooth. An alternative proposal  appeared in   \cite{bernardi2018two},  which, however, does not guarantee continuity.   For further  discussions on composite models, we refer to \cite{klugman2012loss} and \cite{dominicy2017distributions}.

It is not difficult to see that  the composite form described above corresponds to a specific instance of our skewed pivot-blend density  \eqref{pb-2den-starting} with $\sigma=1, \nu=1$. However, our research is   driven by the need to tackle data skewness,  where the values of $\sigma$ and $\nu$ are typically unknown and can vary. Our primary goal is to estimate these potentially distinct scales while also identifying the central pivotal point in the context of skew data analysis.  \eqref{pb-2den-starting} or \eqref{twodenpasted}  is notably different from the composite distribution even when $\sigma=\nu$ but not equal to 1.

  Additionally,   the composite model may be rigid and    restrictive due to limited  choices for the mixing parameter $p$
\citep{Scollnik2007}. In contrast, Figure \ref{fig:spexs}   illustrates the flexibility of SP distributions, taking on various shapes through adjustments in $\sigma$ and $\nu$. The versatility of the skewed pivot-blend, capturing both asymmetry and varied tail behaviors, offers a variety of  distributions for practical modelingg.

\section{Skewed Two-Part Model with Joint Sparsity}
\label{skewedtwo-partmodelsection}

As  mentioned in Section \ref{sec:intro}, our work is driven by the study of ``\textit{semicontinuous} \textit{outcomes}'' \citep{olsen2001two}, as defined by a significant proportion of values equaling 0, with the remaining values following a continuous, often skewed, distribution. For example, the  MEPS datasets  have  many  patients showing no medical expenditure (including  the sum of out-of-pocket payment, insurance, Medicaid, Medicare, and other payments), and the rest with positive, highly skewed and heavy-tailed medical costs (see Figure~\ref{mepsraw}). Semicontinuous outcomes   are  frequently encountered in    biomedical and economic  applications, as well as rainfall levels and  daily drinking records \citep{hyndman2000applications,liu2008multi, sarul2015application}.

   Because zero medical cost means no medical service, rather than an outcome resulting from  truncation or sampling, commonly used biometric models like the Tobit model and zero-inflated models \citep{tobin1958estimation, lambert1992zero}    are not suitable.  Instead, the \textit{two-part} model \citep{cragg1971some, mullahy1998much}, sometimes also referred to as a {\textit{hurdle}} model, is more appropriate. This model can be expressed as:       \begin{align}\label{generaltwo-part}
 \mbox{\small Two parts for semicontinuous $y$:}\quad     \begin{cases}
  \mathbb{P}(y_i=0)=\pi_i=1/\big\{1+\exp(-\tilde{x}^T_{i}b)\big\}\\
 y_i\mid y_i>0\sim f(y_i;\tilde{x}^T_{i}\beta).
\end{cases}
  \end{align}
In the \uline{binary} part, the probability of observing a zero response is typically modeled using logistic regression or a probit model; in the \uline{continuous} part, the density function $f$ represents a {\textit{positive}} random variable with  parameter $\tilde{x}_i^T \beta$.     Without loss of generality, let's assume that for $1\leq i\leq n$,   $y_i>0$, while for $n<i\leq N$,  $y_i=0$, and so the response and   the overall prediction matrix $\widetilde{X}$ can be partitioned as
\begin{align}
     y=
  \big[[y_1, \ldots, y_n ] \,\,  [ 0, \ldots, 0  ]\big]^T, \quad
     \widetilde{X}=\big[
        [\tilde{x}_{1}, \dots, \tilde{x}_{n}] \,\,  [\tilde{x}_{n+1}, \dots, \tilde{x}_{N} ]\big]^T= [X^T\,\,  Z^T]^T. \label{extyX}
\end{align}
We  then derive the negative log-likelihood from the distribution defined in \eqref{generaltwo-part}, with respect to a combination of the Lebesgue measure on $\mathbb R_+$ and a counting measure at 0:
\begin{align}\label{new two-part model loss}
  &-\sum^N_{i=1}\big[1_{y_i=0}\log \pi_i + 1_{y_i>0} \log \{(1-\pi_i) f(y_i;\tilde{x}^T_{i}\beta)\}\big] \notag\\
 = &   \sum^N_{i=1}\big[-\tilde{x}^T_{i} b 1_{y_i=0}+\log\big\{1+\exp(\tilde{x}^T_{i} b)\big\}\big]+\sum^{n}_{i=1}-\log f(y_i;\tilde{x}^T_{i}\beta).
\end{align}
Each predictor makes a composite contribution to the response through two parts, but   \eqref{new two-part model loss} is separable with respect to   $b$ and $\beta$, making it amenable to optimization. Below, we will introduce two  modifications to the classical two-part model to better address some  challenges in modern applications: (a) mitigating {skewness} in the positive part through the use of pivot-blend, and (b) enhancing {interpretability} by incorporating joint variable selection across both model components.

First, specifying an ideal density function for the positive values of $y_i$
 can be challenging.   As a result, many researchers opt to employ a transformation   $T(\cdot): (0, \infty) \rightarrow \mathbb R$, and assume that the transformed response $T(y_i)$ ($1\le i \le n$) follows a symmetric distribution, such as   a normal or a Laplace. With    $\rho$ denoting  the corresponding symmetric negative log-likelihood, the last term $ \sum^{n}_{i=1}-\log f(y_i;\tilde{x}^T_{i}\beta)  $ in \eqref{new two-part model loss} now takes the form
\begin{align}\label{2part-plainpositive}
\sum^{n}_{i=1} \rho ( {r_i }  )  \quad \text{ with } r=T(y)-X\beta
\end{align}
where $r\in\mathbb R^n$ is the residual vector associated with the   continuous component of the model. Nevertheless,  asymmetry continues to manifest  in the transformed data in various scenarios, as observed by  \cite{chai2008use}. Our experience shows that routine transformations may not only fail to completely rectify skewness but also introduce a nontrivial pivotal point around which asymmetric scales arise. The technique detailed in Section \ref{sec:pb-speus} offers an effective remedy by replacing  \eqref{2part-plainpositive} with the following
\begin{align}\label{two-partspeuspart}
\begin{split}
  &n\log [\sigma\Phi(m)+\nu\{1-\Phi(m)\} ]+\sum^{n}_{i=1}\big[\rho\big(\frac{r_i-m}{\sigma}+m\big)1_{r_i\le m}+\rho\big(\frac{r_i-m}{\nu}+m\big)1_{r_i>m}\big],
\end{split}
\end{align}
where $\sigma, \nu, m$ are all unknown.

Second, practitioners of two-part models encounter another pressing challenge---the abundance of predictors  collected. Variable selection provides a valuable tool for enhancing model  interpretation and prediction, but   in the context of two-part models, it is crucial to identify predictors that are relevant to the entire population, rather than just focusing on  the subpopulation with positive responses or the subpopulation with zero responses. In other words, a predictor can only be eliminated if it bears zero coefficients in \textit{both} the binary and continuous parts of the model.

Combining both elements, the incorporation of   regularization and  skewed pivot-blend   allows us to  formulate a sparse skewed 2-part $(\textbf{S}^2)$ criterion for modeling semicontinuous outcomes with joint variable selection:\begin{align} \label{penalizedsparsesemiloss}
\begin{split}
&\text{\textbf{S}$^2$}:  \min_{b,\beta,\sigma,\nu,m} n\log\big[\sigma\Phi(m)+\nu\{1-\Phi(m)\}\big]+\sum^{n}_{i=1}\big\{\rho\big(\frac{r_i-m}{\sigma}+m\big)1_{r_i-m\leq0}\\
 &\quad + \rho\big(\frac{r_i-m}{\nu}+m\big)1_{r_i-m>0}\big\}+\sum^N_{i=1}\big[-\tilde{x}^T_{i} b 1_{y_i=0}+\log\big\{1+\exp(\tilde{x}^T_{i} b)\big\}\big]\\& \quad  + \lambda\|B\|_{2,1}+ P_2( \sigma, \nu, m, \beta; \tau)\  \ \mbox{s.t.}  \,\,\, r=T(y)-X\beta, B=[ \sqrt n \beta, \sqrt N b], \sigma>0,\nu>0.
\end{split}
\end{align}
Practically, it is common to include two intercepts, one for the binary part and one for the continuous part of the model, which are not subject to any penalty.
 The (2,1)-norm applied to matrix $B$ enforces the desired row-wise sparsity for joint variable selection, but can be substituted with a row-wise nonconvex penalty like group SCAD or MCP.
Incorporating the scaling factors in the construction of
$B$ is essential for the use of a single regularization parameter.
The term $P_2$ represents an $\ell_2$-penalty, akin to ridge regression, to account for significant noise and design collinearity. An example is adding      $(\tau/2) (1/\sigma^2 + 1/\nu^2)$    (especially  when $p\ge n$), which from a Bayesian perspective  amounts to an inverse gamma prior on $\sigma^2$ and $\nu^2$. Empirical studies show that $\tau$ is not  sensitive, and a small  $\etaa$      often suffices. Likewise, we suggest including an $\ell_2$-penalty on $m$ which translates to   a Gaussian prior (or alternatively,   a beta prior  for the quantile parameter $q$), because in cases involving asymmetric scales $\sigma\ne \nu$, a moderate $|m|$ value   (or $q$ near 1/2) can   exert a considerable influence on the model,  warranting deeper exploration in  applications. (This might contrast with outlier effects which  exhibit more of a tail behavior  inconsistent with most data  and are complex to model with a single distribution due to heterogeneity.) Adding the $\ell_2$ penalties also facilitates the  theoretical analysis in the next section.

\eqref{penalizedsparsesemiloss}  involves the estimation of coefficients $b$, $\beta$, a pivotal point $m$, and two scales $\sigma, \nu$ and is one-sided directionally differentiable \citep{shebregman2021}. In contrast to the conventional two-part  \eqref{new two-part model loss}, this criterion no longer shows separability in $(b, \beta)$ and includes a non-differentiable penalty. Efficient computation of the estimates can be achieved through optimization techniques. In handling the nondifferentiable (2,1)-penalty, we can express $\|B\|_{2,1}$ as $(1/2)\sum_{j=1}^p (B_{j,1}^2+ B_{j,2}^2)/a_j + a_j$ with each $a_j>0$. This  yields a differentiable criterion, facilitating the use of standard optimization solvers such as Newton or quasi-Newton methods. 
 Alternating optimization can also be used to improve scalability. 

\section{Analysis of Sparse Skewed Two Parts}
\label{sec:THspeus2parts}
In this section, we delve into the theoretical underpinnings of the  sparse skewed two-part estimation \eqref{penalizedsparsesemiloss} introduced earlier. Our  investigation differs from classical asymptotics which assume a fixed number of predictors and an infinite sample size. Nonasymptotic theory remains relatively unexplored when considering the interplay of skewness, regularization, and heavy tails collectively. A significant challenge entails comprehending the impact of asymmetric scales, pivotal points, and sparsity on both statistical accuracy and the choice of the regularization parameter in finite sample sizes.

  The key implications and contributions of our theoretical framework are as follows:
(i)  {Applicability}. Unlike conventional consistency studies, which frequently assume an i.i.d. data structure and require fixed $p$ with $n\rightarrow +\infty$,  our theory is applicable to {any} values of $n$ and $p$, and does not require the design matrix to have i.i.d. rows.
(ii)
{Effective Noise and Flexibility in Tails}. Our investigation reveals that prediction and estimation errors, as well as the choice of the regularization parameter, are  linked to the tail decay characteristics of the   ``effective noise".   The concept diverges from raw noise and often exhibits \textit{lighter} tails  (cf. Remark \ref{effectivenoisevsrawnoise}).
  Moreover, we employ   Orlicz $\psi$-norms  to model  various tail behaviors (cf. Lemmas \ref{componenttovectorlemma}--\ref{psinormconvertlemma}),  providing flexibility in real applications.  In essence, when the effective noise shows light  tails,  implying  a finite Orlicz norm with a ``large" $\psi$ function, the presence of $\psi^{-1}$ in the choice of the regularization parameter leads to a reduced error bound (cf. \eqref{thm1result} or \eqref{l1estimationerrorrate}).
(iii)
 Misspecification Tolerance. 
 The core theorems do not require a zero-mean effective noise, as demonstrated in Theorems \ref{globalgroupl1}, \ref{globalgrouplreconstanttheta}, \ref{globalgrouplregeneral}, \ref{inftyboundprop}, \ref{globalthmsubgaussian}. Consequently, our   analysis applies to {misspecified} models (where the risk function associated with the given loss  does not necessarily vanish  at the statistical truth). 

\subsection{Preliminaries: Reparametrization and Effective Noise }
Recall the loss  function    in \eqref{penalizedsparsesemiloss}, which  can be represented by
\begin{align}\label{lossdefthm}
\begin{split}
 l_0(\beta,b,m,\sigma,\nu)=&\sum^{n}_{i=1}\bigg\{ \rho\big(\frac{r_i-m}{\sigma}+m\big)1_{r_i\leq m}+\rho\big(\frac{r_i-m}{\nu}+m\big)1_{r_i>m} \\ &+\log\big[\sigma\Phi(m)+\nu\{1-\Phi(m)\}\big]\bigg\}+\langle \lbb(Z b;\deltay),1_N\rangle,
\end{split}
\end{align}
where $r=y-X\beta$ denotes the plain residuals,   $\rho$ and $\lbb$ are two  differentiable losses that  respectively operate on   the continuous part and  binary part. Here, the observed data are represented by    $y, \deltay, X, Z$, with    $y\in \mathbb{R}^n$,
$X\in \mathbb{R}^{n\times p}$,   $\deltay\in\{0,1\}^N$ and   $Z\in\mathbb{R}^{N\times p}$ (cf. \eqref{extyX}).
  For simplicity,   the section
assumes that the number of rows in  $X$ and $Z$ is \textit{each} bounded  by $cn$, with $n$ representing the order of the sample size and $c$  a positive constant.

To ease theory and presentation, we introduce some   concatenated symbols. First,
$\beta$ and $b$ can be combined into $\bar\beta$ as the coefficient vector for an extended design matrix  $\bar{X}$:
\begin{align*} 
  \renewcommand{\arraystretch}{0.7}
  \bar\beta=\mbox{vec}([\beta, b]) = \begin{bmatrix}
\beta \\
b  \\
\end{bmatrix},\,\,    \bar{X}=\begin{bmatrix}
X & 0\\
0 & Z \\
\end{bmatrix}.
\end{align*}
We denote $[\beta_k,b_k]^T$ by $\bar\beta_k$, the coefficients associated with the $k$th and $(p+k)$th columns of $\bar X$.  Based on the problem structure in \eqref{lossdefthm},   we define
\begin{align*}
\bar{m}=m1_n,\quad \renewcommand{\arraystretch}{0.7}
  \varsigma=\begin{bmatrix}
(1/\sigma)1_n\\
(1/\nu)1_n  \\
\end{bmatrix},
\end{align*}
where  $1_{n}$ is   to match the scale of the design matrices   when considering  prediction errors. Introduce $\unv$ as
the overall unknown vector,  as well as $\gamma$ and $\mu$
\begin{align}\label{systemnota}
    \unv=[\bar\beta^T,\bar m^T,\varsigma^T]^T,\quad \gamma=[\bar m^T,\varsigma^T]^T,\quad \mu=[\bar\eta^T,\bar m^T,\varsigma^T]^T,\quad \bar\eta=\bar{X}\bar{\beta}.
\end{align}

With the above notations, we can rewrite the general problem of interest  as
\begin{align}\label{quadraticpenaltyloss}
   l(\mu)+\|\mtbb\bar{\beta}\|_{2,P}+\frac{\etaa}{2}\|\gamma\|^2_2,
\end{align}
where $l$ is the  loss on $\mu$,
 $\|\bar{\beta}\|_{2,P}:=\sum^p_{k=1}P(\|\bar{\beta}_k\|_2;\lambda)$ and $P$  is a  sparsity-promoting penalty. Including $\varrho$ in the penalty allows for a scale adjustment based on the size of the designs, enabling a universal choice of $\lambda$ that is independent of the sample size in later theorems.    In alternating optimization algorithms, $\varrho$ can takes  $\rhotwoinf $ which   represents the maximum column $\ell_2$-norm of $\bar X$,   as a measure of the size  of the design:
 \begin{align*}
  \rhotwoinf=\max_{1\le k \le 2p} \| \bar X_k\|_2 = \max  \big\{\|X_1\|_2,
\ldots, \|X_p\|_2,  \|Z_1\|_2,\ldots,  \|Z_p\|_2
\big\}.
 \end{align*}
This quantity is typically on the order of   $\sqrt{n}$.
When
$P$ is the $\ell_1$-penalty,   \eqref{quadraticpenaltyloss} reduces to the previous  $\textbf{S}^2 $-criterion \eqref{penalizedsparsesemiloss} for two-part models  with skew and sparsity,
\begin{align}\label{l1thmloss}
      l(\mu)+\lambda\|\mtbb\bar{\beta}\|_{2,1}+\frac{\etaa}{2}\|\gamma\|^2_2,
\end{align}
where $\|\bar{\beta}\|_{2,1}$
is short for $\sum^p_{k=1}\|\bar\beta_k\|_2$.\\

Next, we introduce the notion of ``effective noise"
to account for randomness, conditional on the design matrices $X,Z$. Given $l(\mu)$, define the effective noises  associated with $\bar{\eta}^*,\gamma^*$ as
\begin{align}\label{subvectoreffectivenoise}
    &\epsilon_{\bar\eta}=-\nabla_{\bar\eta} l(\mu)\big|_{\mu=\mu^*},\quad  \epsilon_{\bar m}=-\nabla_{\bar m} l(\mu)\big|_{\mu=\mu^*},\quad  \epsilon_\varsigma=-\nabla_{\varsigma} l(\mu)\big|_{\mu=\mu^*},
\end{align}
where  $l$ is assumed to be differentiable  at  the statistical truth $\mu^*$.

When formulating statistical assumptions related to effective noises, it is important to account for different types of tail decay. We employ  Orlicz $\psi$-norms, as well as  some  nonconvex variants capable of  handling significantly  heavier tails (cf. Appendix \ref{subsec:orlicz}). In the context of the Orlicz $\psi$-norm for a random variable (or vector) $X$, represented as $\|X\|_\psi$, $\psi(\cdot)$ is consistently assumed to be a nondecreasing, nonzero function defined on $\mathbb{R}_+$ with $\psi(0)=0$ (but not necessarily convex). For the Orlicz-norm of a random vector, please see \eqref{orliczvecdef}. The inverse of $\psi$ is defined as     $\psi^{-1}(x)=\sup\{t\in\mathbb{R}_+:\psi(t)\leq x\}$.

   Some notable  examples encompass   the  sub-Weibull $\psi_q$-norms, with $\psi$ defined as
\begin{align}\psi_q(x)=\exp(x^q)-1, \ x\in \mathbb R_+ \label{subW-psi}\end{align} for  $q> 0$. \eqref{subW-psi}   covers   sub-Gaussian   ($q=2$) and sub-Exponential  ($q=1$) random variables, but  as $q<1$, the sub-Weibull tails become much heavier \citep{gotze2021concentration}.  Another class is the  $L_q$-norms, with $\psi(x)=x^q$ ($q\geq1$). Orlicz norms provide a useful framework for analyzing skewed random variables (even when they lack a zero mean).

\begin{remark}[\bf Effective Noise vs. Raw Noise]\label{effectivenoisevsrawnoise}
 The effective noise, jointly determined by the data and  the loss function, may differ from the plain ``raw noise'' defined by
  \begin{align}\label{rawnoisedef}
     \epsilon^{\mbox{\scriptsize{raw}}}:=y-\bar m^*-\eta^*,
 \end{align}
 where $\eta^*=X\beta^*$.
 Comparing \eqref{subvectoreffectivenoise} with \eqref{rawnoisedef}, one appealing aspect of $\epsilon_{\bar \eta}$ is that it tends to have \emph{light} tails, even when $ \epsilon^{\mbox{\scriptsize{raw}}}$ does not.
Indeed,  a straightforward derivative calculation based on \eqref{lossdefthm} shows that
\small
\begin{subequations}
\begin{align}
       &\,\epsilon_{\bar\eta,i}=
   \begin{cases}
    \frac{\rho'(\epsilon^-_i)}{\sigma^*}1_{-}(\epsilon^{\mbox{\scriptsize{raw}}}_i)+\frac{\rho'(\epsilon^+_i)}{\nu^*}1_{+}(\epsilon^{\mbox{\scriptsize{raw}}}_i),\quad  1\leq i\leq n,\\\lbb'(Z^T_ib^*;\deltay_i),\quad  n< i\leq N,
   \end{cases}\\
         &\,\epsilon_{\bar{m},i}=(\frac{1}{\sigma^*}-1)\rho'(\epsilon^-_i)  1_{-}( \epsilon^{\mbox{\scriptsize{raw}}}_i)+(\frac{1}{\nu^*}-1)\rho'(\epsilon^+_i)  1_{+}( \epsilon^{\mbox{\scriptsize{raw}}}_i)+\frac{ \Phi'(m^*)(\nu^*-\sigma^*)}{\sigma^*\Phi(m^*)+\nu^*(1-\Phi(m^*))},  \\
     &\, \epsilon_{\varsigma,i}=\begin{cases}
       - \epsilon^{\mbox{\scriptsize{raw}}}_i\rho'(\epsilon^-_i)  1_{-}( \epsilon^{\mbox{\scriptsize{raw}}}_i)+\frac{ \sigma^{*2}\Phi(m^*)}{\sigma^*\Phi(m^*)+\nu^*(1-\Phi(m^*))},\quad 1\leq i\leq n,\\  -\epsilon^{\mbox{\scriptsize{raw}}}_i\rho'(\epsilon^+_i) 1_{+}( \epsilon^{\mbox{\scriptsize{raw}}}_i)+\frac{ \nu^{*2}\{1-\Phi(m^*)\}}{\sigma^*\Phi(m^*)+\nu^*(1-\Phi(m^*))},\quad n< i\leq 2n,
     \end{cases}
\end{align}
\end{subequations}
\normalsize
where $\epsilon^-_i=\epsilon^{\mbox{\scriptsize{raw}}}_i/\sigma^*+m^*, \epsilon^+_i=\epsilon^{\mbox{\scriptsize{raw}}}_i/\nu^*+m^*$.
Therefore, if $|\rho'|\leq M,|\lbbd|\leq B$ for some positive $M,B$ (e.g., when using Huber's loss for $\rho$ and logistic deviance for $\mathfrak L$),
then  {\small \begin{subequations}\label{epscal-bounded}
 \begin{align}
  &\,  \big|\epsilon_{\bar\eta,i}\big|\leq \frac{M}{\sigma^*\land \nu^*} + B,\label{etabound}\\
    &\, \big|\epsilon_{\bar{m},i}\big|\leq \big(|1-\frac{1}{\sigma^*}|\lor |1-\frac{1}{\nu^*}|\big) M+\frac{|\sigma^*-\nu^*|}{\sigma^*\land\nu^*}, \label{mbound} \\
    &\,\big|\epsilon_{\varsigma,i}\big|\lor \big|\epsilon_{\varsigma,i+n}\big|\leq M|\epsilon^{\mbox{\scriptsize{raw}}}_i|+(\sigma^*\lor \nu^*). \label{varsigmabound}
\end{align}
\end{subequations}}\normalsize
 It is evident that {all} components of $\epsilon_{\bar\eta }$ and $\epsilon_{\bar m}$ are   {\emph{bounded}},  thereby possessing a finite $\psi_2$-norm regardless of  heavy tails that $\epsilon^{\mbox{\scriptsize{raw}}}$ may exhibit. Finally, it is worth noting that our theorems below impose Orlicz-norm conditions on the entire random vectors in \eqref{subvectoreffectivenoise}, which is   more flexible  than assuming that the vectors have independent  components, each with a finite Orlicz norm and a mean of 0 (cf. Lemma \ref{componenttovectorlemma}).  Furthermore, one can employ  generalized Bernstein-Orlicz norms for random vector marginals, as described in \cite{kuchibhotla2018moving} to  develop  sharper bounds under an additional minimum sample size constraint. We will not explore this further in the current paper.

\end{remark}

\subsection{Nonasymptotic Error Bounds}

This part demonstrates some error bounds when using the $(2,1)$-penalty. Additional results can be found in the appendices, such as Theorem \ref{globalgrouplregeneral} providing a universal form for $\lambda$      applicable to  a  broad range of  tails, Theorem \ref{inftyboundprop} presenting an elementwise error bound, and Theorem \ref{globalthmsubgaussian} examining a general sparsity-inducing penalty. 

In what follows, we denote  the group support of $\bar{\beta}$ as  $\mathcal{J}(\bar{\beta})=\{k:\bar\beta_{k}=[\beta_k,b_k]^T\neq 0,1\leq k\leq p\}$ and $J(\bar\beta)$  is the cardinality of  $\mathcal{J}(\bar{\beta})$. Also,   define   $   \mathcal{J}^*=\mathcal{J}(\bar{\beta}^*), J^*=J(\bar\beta^*)$, and $\hat{\mathcal{J}}=\mathcal{J}(\hat{\bar\beta})$ for short, and let $\mathcal J^{*C}\subset\{1, \ldots, p\}$ denote the complement of $\mathcal J^*$.
The   {generalized  Bregman function} $    \mathbf{\Delta}_{l}$  is useful in defining an appropriate error measure and making regularity conditions: given a function $l$ differentiable at $\eta'$,
$
    \mathbf{\Delta}_{l}(\eta,\eta'):=l(\eta)-l(\eta')-\left\langle \nabla l(\eta'),\eta-\eta'\right\rangle
$  and   $\bar{\mathbf{\Delta}}_{l}(\eta,\eta'):=\{\mathbf{\Delta}_{l}(\eta,\eta')+\mathbf{\Delta}_{l}(\eta',\eta)\}/2$. The  differentiability can be replaced by directional differentiability   \citep{shebregman2021}.
If $l$ is also strictly convex, $\mathbf{\Delta}_{l}(\eta,\eta')$ becomes the standard
Bregman divergence $\mathrm{\mathbf{D}}_l(\eta,\eta')$ \citep{bregman1967relaxation}. For the specific case of  $l(\eta)=\|\eta-y\|_{2}^{2}/2$, $\mathbf{\Delta}_{l}(\eta,\eta')=\|\eta-\eta'\|_{2}^{2}/2$, or $  \mathrm{\mathbf D}_{2}(\eta,\eta')$ for short.

\begin{theorem}\label{globalgroupl1}
Assume that the effective noises $\epsilon_{\bar\eta}, \epsilon_{\bar m}$, and $\epsilon_\varsigma$ are bounded in Orlicz norms:  $\|\epsilon_{\bar\eta}\|_{\psi}\leq \omega_{\bar\eta},\|\epsilon_{\bar m}\|_{\psi}\leq \omega_{\bar m}$, and $\|\epsilon_\varsigma\|_{\varphi}\leq \omega_\varsigma$,  where $\psi, \varphi$ satisfy:   i)  $\psi(x)$ is convex and    $ \psi(x)\psi(y)\le c_1 \psi(c_0 xy) $,   $\forall x, y \ge c_2$, for some positive  $c_0, c_1, c_2$ (dependent on $\psi$ only), (ii)
 $\{\psi^{-1}(t)\}^2$ is concave  or $\{\psi^{-1}(t)\}^2\lesssim t$ on $\mathbb{R}_+$; iii) $\{\varphi^{-1}(t)\}^2$ is concave  or $\{\varphi^{-1}(t)\}^2\lesssim t$ on $\mathbb{R}_+$.
 Consider the estimator  $\hat\unv =[\hat{\bar\beta}^{\,T},\hat\gamma^{\,T}]^T$  by minimizing \eqref{l1thmloss} with $\varrho\geq \rhotwoinf$ and   $ \lambda=A\big\|\bar X^T\epsilon_{\bar\eta}\big\|_{\infty}/\mtbb $, where $A$
  a large enough constant. Then
  \small
\begin{align}\label{thm1result}
  \mathbb{E}\Big\{ \mathbf{\Delta}_{l}(\hat{\mu},\mu^*)\vee  {\etaa} \mathrm{\mathbf{D}}_2(\hat{\gamma},\gamma^*)\Big\}\,  \lesssim \  & c_\psi\omega_{\bar\eta}\,\mtbb\,\psi^{-1}(   p)\|\bar\beta^*\|_{2,1}  +\frac{1}{\etaa}\{\psi^{-1}(1)\}^2\omega^2_{\bar m}+\frac{1}{\etaa}\{\varphi^{-1}(1)\}^2\omega_\varsigma^2+\etaa\|\gamma^*\|_{2}^2.
\end{align}
 where $c_\psi =c_0    (1\vee c_1   \vee 2 \psi(c_2))^2  \psi^{-1}(1)$. \normalsize
\end{theorem}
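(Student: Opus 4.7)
The strategy is the standard ``basic inequality'' template for penalized $M$-estimators, with careful bookkeeping to handle the three separate effective noises $\epsilon_{\bar\eta}, \epsilon_{\bar m}, \epsilon_\varsigma$ and to use the Tikhonov penalty $(\etaa/2)\|\gamma\|_2^2$ as an absorber. First, the optimality of $\hat\unv$ for \eqref{l1thmloss} evaluated at $\unv^*$ gives $l(\hat\mu) - l(\mu^*) \le \lambda(\|\mtbb\bar\beta^*\|_{2,1} - \|\mtbb\hat{\bar\beta}\|_{2,1}) + (\etaa/2)(\|\gamma^*\|_2^2 - \|\hat\gamma\|_2^2)$. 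Decomposing the left side through the generalized Bregman function,
\begin{align*}
l(\hat\mu) - l(\mu^*) = \mathbf{\Delta}_l(\hat\mu, \mu^*) - \langle \epsilon_{\bar\eta}, \bar X(\hat{\bar\beta} - \bar\beta^*)\rangle - \langle \epsilon_{\bar m}, \hat{\bar m} - \bar m^*\rangle - \langle \epsilon_\varsigma, \hat\varsigma - \varsigma^*\rangle
\end{align*}
(legitimate for the directionally differentiable $l$ via the generalized Bregman framework cited in the paper), and expanding $(\etaa/2)(\|\gamma^*\|_2^2 - \|\hat\gamma\|_2^2) = -(\etaa/2)\|\hat\gamma - \gamma^*\|_2^2 - \etaa\langle \gamma^*, \hat\gamma - \gamma^*\rangle$, I move the negative quadratic to the left to obtain the working inequality bounding $\mathbf{\Delta}_l(\hat\mu, \mu^*) + (\etaa/2)\|\hat\gamma - \gamma^*\|_2^2$ from above.

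Next, the $\epsilon_{\bar\eta}$ cross term is controlled by the duality between the group $(2,1)$-norm and its dual ($\|\cdot\|_\infty$ being interpreted as the corresponding dual norm, with any discrepancy between elementwise and group max absorbed into the constant $A$): $\langle \epsilon_{\bar\eta}, \bar X(\hat{\bar\beta} - \bar\beta^*)\rangle \le \|\bar X^T\epsilon_{\bar\eta}\|_\infty \|\hat{\bar\beta} - \bar\beta^*\|_{2,1}$. The calibration $\lambda\mtbb = A\|\bar X^T\epsilon_{\bar\eta}\|_\infty$ then bounds this by $(\lambda/A)(\|\mtbb\hat{\bar\beta}\|_{2,1} + \|\mtbb\bar\beta^*\|_{2,1})$, which, combined with the penalty contribution $\lambda(\|\mtbb\bar\beta^*\|_{2,1} - \|\mtbb\hat{\bar\beta}\|_{2,1})$, makes the coefficient of $\|\mtbb\hat{\bar\beta}\|_{2,1}$ nonpositive for $A$ large enough (e.g.\ $A \ge 2$) and leaves a $\lesssim \lambda\|\mtbb\bar\beta^*\|_{2,1}$ contribution. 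The remaining noise inner products and the linear residue $-\etaa\langle \gamma^*, \hat\gamma - \gamma^*\rangle$ are bounded by Cauchy--Schwarz followed by Young's inequality with a small coefficient (e.g.\ $\etaa/4$): each produces a quadratic residue in $\|\hat{\bar m}-\bar m^*\|_2^2$, $\|\hat\varsigma-\varsigma^*\|_2^2$, or $\|\hat\gamma-\gamma^*\|_2^2$, which together sum to at most $(\etaa/4)\|\hat\gamma - \gamma^*\|_2^2$ (using $\|\hat\gamma - \gamma^*\|_2^2 = \|\hat{\bar m} - \bar m^*\|_2^2 + \|\hat\varsigma - \varsigma^*\|_2^2$) and are absorbed into $(\etaa/2)\|\hat\gamma - \gamma^*\|_2^2$ on the left, leaving dual contributions on the order of $(1/\etaa)\|\epsilon_{\bar m}\|_2^2$, $(1/\etaa)\|\epsilon_\varsigma\|_2^2$, and $\etaa\|\gamma^*\|_2^2$.

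Finally, I take expectations and invoke the Orlicz-norm machinery enabled by the three hypotheses. Sub-multiplicativity (i) on $\psi$, together with the column-norm bound $\|\bar X_k\|_2 \le \rhotwoinf \le \mtbb$, yields $\|\bar X_k^T\epsilon_{\bar\eta}\|_\psi \lesssim c_\psi \mtbb\,\omega_{\bar\eta}$, with the constants $c_0, c_1, c_2$ assembling into $c_\psi$; applying the standard maximum-over-Orlicz bound to the $2p$ coordinates of $\bar X^T\epsilon_{\bar\eta}$ gives $\mathbb{E}\|\bar X^T\epsilon_{\bar\eta}\|_\infty \lesssim c_\psi\,\omega_{\bar\eta}\,\mtbb\,\psi^{-1}(p)$, and hence $\mathbb{E}[\lambda\|\mtbb\bar\beta^*\|_{2,1}] \lesssim c_\psi\,\omega_{\bar\eta}\,\mtbb\,\psi^{-1}(p)\,\|\bar\beta^*\|_{2,1}$. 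Conditions (ii) and (iii), i.e.\ concavity (or linear domination) of $\{\psi^{-1}\}^2$ and $\{\varphi^{-1}\}^2$, allow a Jensen step applied to the Orlicz defining expectation: if $\mathbb{E}\psi(\|\epsilon_{\bar m}\|_2/\omega_{\bar m}) \le 1$, then $\mathbb{E}\|\epsilon_{\bar m}\|_2^2 \le \omega_{\bar m}^2\,\mathbb{E}\{\psi^{-1}(\psi(\|\epsilon_{\bar m}\|_2/\omega_{\bar m}))\}^2 \le \omega_{\bar m}^2\{\psi^{-1}(1)\}^2$, and likewise $\mathbb{E}\|\epsilon_\varsigma\|_2^2 \lesssim \{\varphi^{-1}(1)\}^2\omega_\varsigma^2$. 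Using $a \vee b \le a + b$ and $\mathrm{\mathbf{D}}_2(\hat\gamma, \gamma^*) = \|\hat\gamma - \gamma^*\|_2^2/2$ then delivers \eqref{thm1result}. \emph{The main obstacle} is the Orlicz-norm accounting: verifying that the sub-multiplicativity constants in (i) compose correctly into $c_\psi$ when lifting the single-coordinate bound $\|\bar X_k^T \epsilon_{\bar\eta}\|_\psi$ to the maximum over the $2p$ coordinates, and passing from a vector Orlicz norm $\|\epsilon_{\bar\eta}\|_\psi$ to scalar linear functionals $\bar X_k^T\epsilon_{\bar\eta}$ without loss in the $\psi^{-1}(p)$ factor --- these are exactly the steps that the lemmas on Orlicz norms of sums and of suprema (referenced in the excerpt) are designed to support.
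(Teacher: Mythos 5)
Your overall architecture---the basic inequality from optimality, the calibration of $\lambda$ against $\|\bar X^T\epsilon_{\bar\eta}\|_\infty$, the maximal-inequality/sub-multiplicativity accounting that assembles $c_\psi\psi^{-1}(p)$, and the Young-inequality absorption into the ridge term with $b=\etaa/4$---is exactly the paper's route. There is, however, one genuine gap: your treatment of the $\epsilon_{\bar m}$ and $\epsilon_\varsigma$ cross terms. You bound $\langle\epsilon_{\bar m},\hat{\bar m}-\bar m^*\rangle$ by Cauchy--Schwarz against the full Euclidean norm $\|\epsilon_{\bar m}\|_2$ and then claim $\mathbb{E}\|\epsilon_{\bar m}\|_2^2\le\{\psi^{-1}(1)\}^2\omega_{\bar m}^2$ by a Jensen step on ``the Orlicz defining expectation.'' But the vector Orlicz norm here is defined by \eqref{orliczvecdef}, i.e.\ $\|\langle\epsilon_{\bar m},\alpha\rangle\|_\psi\le\omega_{\bar m}\|\alpha\|_2$ for every direction $\alpha$; it is \emph{not} $\inf\{t:\mathbb{E}\psi(\|\epsilon_{\bar m}\|_2/t)\le1\}$. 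Under the actual definition one only gets $\mathbb{E}\|\epsilon_{\bar m}\|_2^2\lesssim n\,\omega_{\bar m}^2$ (take $\epsilon_{\bar m}$ with i.i.d.\ standard Gaussian coordinates: its vector $\psi_2$-norm is $O(1)$ yet $\mathbb{E}\|\epsilon_{\bar m}\|_2^2=n$), so your argument as written inflates the second and third terms of \eqref{thm1result} by a factor of $n$.

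The missing idea is structural: $\hat{\bar m}-\bar m^*=(\hat m-m^*)1_n$ lies in the one-dimensional span of $1_n$, and $\hat\varsigma-\varsigma^*$ is likewise built from two copies of $1_n$. Hence $\langle\epsilon_{\bar m},\hat{\bar m}-\bar m^*\rangle=\langle P_{1_n}\epsilon_{\bar m},\hat{\bar m}-\bar m^*\rangle\le|u^T\epsilon_{\bar m}|\,\|\hat{\bar m}-\bar m^*\|_2$ with $u=(1/\sqrt n)1_n$, which reduces the noise to the single scalar $u^T\epsilon_{\bar m}$, whose $\psi$-norm is at most $\omega_{\bar m}$ by \eqref{orliczvecdef} since $\|u\|_2=1$; Lemma \ref{psinormconvertlemma} (via hypothesis (ii)) then gives $\mathbb{E}[(u^T\epsilon_{\bar m})^2]\le\{\psi^{-1}(1)\}^2\omega_{\bar m}^2$ with no dimension factor, and the same projection applies to the two $1_n$-blocks of $\epsilon_\varsigma$ under (iii). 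With this projection step inserted before your Young-inequality absorption, the remainder of your argument goes through and coincides with the paper's proof.
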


Theorem \ref{globalgroupl1} provides a bound on prediction and estimator errors,  measured using   generalized Bregman functions. Notably, this bound does \textit{not} necessitate any regularity conditions on the design matrices or signal strength.

The assumptions (i)--(iii) on effective noise tails are mild, and the functions $\varphi$ and $\psi$ can be applied to a wide range of cases.   For example, it is straightforward to verify that    $\|\cdot\|_\psi$ can represent a   $\psi_q$-norm with $q\geq 1$  \citep{vanderVaart1996} where we can take $c_1=1, c_2=1, c_0 =2 ^{1/q}$; $\|\cdot\|_\varphi$  can be sub-Weibull  for some $q> 0$, or an   $L_q$-norm ($q\geq 2$) with  heavy polynomial tails.

The first term on the right-hand side of \eqref{thm1result}, $c_\psi\omega_{\bar\eta}\,\mtbb\,\psi^{-1}(p)\big\|\bar\beta^*\big\|_{2,1}$, is the dominant term scaling  with $p$. Remark \ref{effectivenoisevsrawnoise} emphasizes that $\epsilon_{\bar\eta}$ can have considerably lighter tails, enabling the choice of a large $\psi$ function. For instance, when $|\rho'|\leq M,|\lbbd|\leq B$, we can take  $\psi=\psi_2$,   $\omega_{\bar \eta}=c\{M/(\sigma^*\land \nu^*) + B\}$. Such a substantial $\psi$ function ensures that the error rate, which incorporates $\psi^{-1}$,  stays well controlled, even when the raw noise \eqref{rawnoisedef} exhibits heavy tails.

Furthermore,    with proper regularity conditions,     another   error bound that  depends on  $\bar\beta^*$ though its support $J^*$ can be derived.

\begin{theorem}\label{globalgrouplreconstanttheta}
Assume that the tails of effective noises  are bounded in Orlicz norms:  $\|\epsilon_\varsigma\|_{\varphi}\leq \omega_\varsigma$,    $\|\epsilon_{\bar\eta}\|_{\psi_q}\leq \omega_{\bar\eta},\|\epsilon_{\bar m}\|_{\psi_q}\leq \omega_{\bar m}$    for some  $q>0$.    Let  $\hat\unv$ denote the  optimal solution  for \eqref{l1thmloss} with $\varrho\geq \rhotwoinf$ and
 \begin{align*}
     \lambda=A\omega_{\bar\eta}(\log p)^{\frac{1}{q}} 
 \end{align*}
 for some large enough $A>0$.
Suppose that  there exist  a large $K >0$ and  a constant $\vartheta$  such that  for any $\bar\beta,\gamma$ 
\begin{align}\label{l1fastratere}
      (1+\vartheta)\lambda\mtbb\big\|(\bar{\beta}-\bar{\beta}^*)_{\mathcal{J}^*}\big\|_{2,1}\leq \mathbf{\Delta}_{l}(\mu,\mu^*)+\lambda\mtbb\big\|(\bar{\beta}-\bar{\beta}^*)_{\mathcal{J}^{*C}}\big\|_{2,1}+K\lambda^2J^*.
\end{align}
Then \begin{align}\label{l1estimationerrorrate}
   \mathbf{\Delta}_{l}(\hat{\mu},\mu^*)\vee{\etaa}\mathrm{\mathbf{D}}_2(\gamma,\gamma^*)\lesssim KA^2\omega^2_{\bar\eta}(\log p)^{\frac{2}{q}}J^*+ A^2\frac{\omega^2_{\bar m}}{\etaa}(\log p)^{\frac{2}{q}}+\frac{\omega^2_{\varsigma}}{\etaa}\{\varphi^{-1}(p^{A^q})\}^{2}+\etaa\|\gamma^*\|_{2}^2
\end{align}
holds with probability at least $1-C p^{-cA^q}$, where  $C,c$ are  positive constants.
\end{theorem}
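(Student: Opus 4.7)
The plan is to derive a deterministic oracle inequality from the optimality of $\hat\unv$ in \eqref{l1thmloss}, then upgrade it to the high-probability statement \eqref{l1estimationerrorrate} by controlling the three effective noises on a single event of probability $1-Cp^{-cA^q}$. The reasoning parallels the proof of Theorem \ref{globalgroupl1}, but exploits the stronger $\psi_q$-assumption on $\epsilon_{\bar\eta}$ and $\epsilon_{\bar m}$ to replace the generic $\psi^{-1}(p)$ factor by the sharper logarithmic rate $(\log p)^{1/q}$, and to upgrade the expectation bound to a high-probability bound with fast rate.

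Starting from the optimality of $\hat\unv$ and applying the definition of the generalized Bregman function together with the effective-noise identities \eqref{subvectoreffectivenoise}, I would first obtain
\begin{align*}
\mathbf{\Delta}_l(\hat\mu,\mu^*) + \frac{\etaa}{2}\|\hat\gamma-\gamma^*\|_2^2 &\le \langle\epsilon_{\bar\eta},\hat{\bar\eta}-\bar\eta^*\rangle + \langle\epsilon_{\bar m},\hat{\bar m}-\bar m^*\rangle + \langle\epsilon_\varsigma,\hat\varsigma-\varsigma^*\rangle \\
&\quad + \lambda\mtbb\bigl(\|\bar\beta^*\|_{2,1}-\|\hat{\bar\beta}\|_{2,1}\bigr) + \etaa\langle\gamma^*,\gamma^*-\hat\gamma\rangle,
\end{align*}
using the algebraic identity $\|\gamma^*\|_2^2-\|\hat\gamma\|_2^2 = -\|\hat\gamma-\gamma^*\|_2^2 + 2\langle\gamma^*,\gamma^*-\hat\gamma\rangle$. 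A group-wise Cauchy--Schwarz yields $\langle\epsilon_{\bar\eta},\hat{\bar\eta}-\bar\eta^*\rangle\le\max_{1\le k\le p}\|(\bar X^T\epsilon_{\bar\eta})_k\|_2\cdot\|\hat{\bar\beta}-\bar\beta^*\|_{2,1}$. The central stochastic event is that this maximum is at most $\lambda\mtbb/(1+\vartheta')$ for a small $\vartheta'>0$, which, combined with the $\psi_q$-bound $\|\epsilon_{\bar\eta}\|_{\psi_q}\le\omega_{\bar\eta}$, the inequality $\rhotwoinf\le\varrho$, and Lemmas \ref{componenttovectorlemma}--\ref{psinormconvertlemma}, occurs with probability at least $1-C_1 p^{-cA^q}$ once $A$ is large enough; here $\psi_q^{-1}(p^{A^q})\asymp A(\log p)^{1/q}$ is precisely what dictates the choice $\lambda=A\omega_{\bar\eta}(\log p)^{1/q}$.

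For the remaining two effective noises, Cauchy--Schwarz followed by Young's inequality absorbs the cross terms into the $\etaa$-quadratic, leaving tail contributions of the form $(C/\etaa)\|\epsilon_{\bar m}\|_2^2 + (C/\etaa)\|\epsilon_\varsigma\|_2^2 + \etaa\|\gamma^*\|_2^2$. The $\psi_q$-Orlicz bound gives $\|\epsilon_{\bar m}\|_2\lesssim\omega_{\bar m}(\log p)^{1/q}$ on the same high-probability event, whereas the weaker $\varphi$-assumption only yields $\|\epsilon_\varsigma\|_2\lesssim\omega_\varsigma\,\varphi^{-1}(p^{A^q})$, which matches exactly the slower third term in \eqref{l1estimationerrorrate}. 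A standard triangle-inequality decomposition of $\|\bar\beta^*\|_{2,1}-\|\hat{\bar\beta}\|_{2,1}$ along $\mathcal J^*$ and $\mathcal J^{*C}$, combined with the controlled $\bar\eta$-term, produces precisely the left-hand side of \eqref{l1fastratere}. Invoking that regularity assumption absorbs the $\mathcal J^*$-restricted $(2,1)$-norm back into $\mathbf\Delta_l(\hat\mu,\mu^*)$ up to a constant factor, leaving the residual $K\lambda^2 J^*$; substituting $\lambda=A\omega_{\bar\eta}(\log p)^{1/q}$ then gives \eqref{l1estimationerrorrate}.

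The hard part is the sub-Weibull probabilistic control in the second paragraph when $q<1$: $\|\cdot\|_{\psi_q}$ is no longer a true norm, and deriving a union bound on the $p$ group statistics $\|(\bar X^T\epsilon_{\bar\eta})_k\|_2$ with the target rate $p^{-cA^q}$ requires the nonconvex Orlicz machinery from Appendix \ref{subsec:orlicz} rather than classical Bernstein-type arguments; the vector Orlicz norm (as opposed to i.i.d.\ marginals) must be handled via Lemmas \ref{componenttovectorlemma}--\ref{psinormconvertlemma} so the bound does not deteriorate under dependence. Once this concentration is in place, the remaining steps mirror those used in Theorem \ref{globalgroupl1} almost verbatim.
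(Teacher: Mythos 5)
Your overall route is the same as the paper's: start from the optimality of $\hat\unv$ in \eqref{l1thmloss} to get a basic inequality, bound $\langle\epsilon_{\bar\eta},\bar X(\hat{\bar\beta}-\bar\beta^*)\rangle$ by a group-wise H\"older inequality against $\max_k$ of the group statistics, control that maximum by a union bound plus Markov's inequality under the $\psi_q$-assumption (which is exactly what forces $\lambda\asymp\omega_{\bar\eta}(\log p)^{1/q}$ and the probability $1-Cp^{-cA^q}$), split the penalty difference along $\mathcal J^*$ and $\mathcal J^{*C}$, and close with the regularity condition \eqref{l1fastratere} and Young's inequality on the remaining noise terms. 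The only organizational difference is that the paper proves a more general statement (Theorem \ref{globalgrouplregeneral}, with an arbitrary $\psi$ and $\lambda\asymp\omega_{\bar\eta}\psi^{-1}[p\psi\{A\psi^{-1}(p)\}]$) and then specializes to $\psi=\psi_q$; you argue the $\psi_q$ case directly, which is fine and loses nothing for this theorem.

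There is, however, one concrete gap in your handling of $\epsilon_{\bar m}$ and $\epsilon_\varsigma$. You apply Cauchy--Schwarz to the full $n$-dimensional inner product and claim a tail contribution $(C/\etaa)\|\epsilon_{\bar m}\|_2^2$ together with the bound $\|\epsilon_{\bar m}\|_2\lesssim\omega_{\bar m}(\log p)^{1/q}$. Under the vector Orlicz-norm assumption \eqref{orliczvecdef}, the full Euclidean norm $\|\epsilon_{\bar m}\|_2$ is typically of order $\sqrt n\,\omega_{\bar m}$, not $\omega_{\bar m}(\log p)^{1/q}$, so as written the second and third terms of \eqref{l1estimationerrorrate} would each inflate by a factor of $n$. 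The step that rescues this --- and that the paper uses --- is to exploit that $\hat{\bar m}-\bar m^*=(\hat m-m^*)1_n$ and $\hat\varsigma-\varsigma^*$ lie in the span of $1_n$ (blockwise), so that
\begin{align*}
\langle\epsilon_{\bar m},\hat{\bar m}-\bar m^*\rangle=\langle P_{1_n}\epsilon_{\bar m},\hat{\bar m}-\bar m^*\rangle\le|u^T\epsilon_{\bar m}|\,\|\hat{\bar m}-\bar m^*\|_2,\qquad u=\tfrac{1}{\sqrt n}1_n,
\end{align*}
and the scalar $u^T\epsilon_{\bar m}$ has $\psi_q$-norm at most $\omega_{\bar m}$ directly from \eqref{orliczvecdef} (no union bound over $n$ coordinates is needed). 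Only after this projection does Young's inequality give the dimension-free contributions $\omega_{\bar m}^2(\log p)^{2/q}/\etaa$ and $\omega_\varsigma^2\{\varphi^{-1}(p^{A^q})\}^2/\etaa$, with the Bregman pieces $\mathrm{\mathbf D}_2(\hat{\bar m},\bar m^*)$ and $\mathrm{\mathbf D}_2(\hat\varsigma,\varsigma^*)$ absorbed into $\etaa\,\mathrm{\mathbf D}_2(\hat\gamma,\gamma^*)$ on the left. Everything else in your argument, including the role of \eqref{l1fastratere} and the final substitution of $\lambda$, matches the paper.
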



To obtain a sufficient condition for the regularity condition  \eqref{l1fastratere}, we can  confine $\betadiff=\bar\beta-\bar\beta^*$ within a cone: $(1+\vartheta)\|\betadiff_{\mathcal{J}^*}\|_{2,1}\geq \|\betadiff_{\mathcal{J}^{C*}}\|_{2,1}$, and require either $\varrho^2\|\betadiff_{\mathcal{J}^*}\|^2_{2,1}\leq \tilde{K}J^*\mathbf{\Delta}_{l}(\mu,\mu^*)$  or $\varrho^2\|\betadiff_{\mathcal{J}^*}\|^2_{2}\leq \tilde{K}\mathbf{\Delta}_{l}(\mu,\mu^*)$ for some large $\tilde{K}>0$. These  conditions extend the compatibility    and restricted eigenvalue conditions which are   widely used in sparse  regression \citep{bickel2009simultaneous,van2009conditions}. But  \eqref{l1fastratere} is less technically demanding.

In proving the theorem, we   establish a more general result where $\psi_q$ can be replaced with a  general  $\psi$ and the appropriate choice for $\lambda$ is of the order
$
\omega_{\bar\eta}\psi^{-1}\left(p\psi {A\psi^{-1}(p) }\right).
$ 
 For more, please refer to Appendix \ref{appsub:proofofth2}.


 To  illustrate the bound \eqref{l1estimationerrorrate}, let's consider  a scenario where   $|\rho'| \vee |\lbbd|\leq M$ for some  $M>0$. Under the assumptions of  independent centered effective noise components and
 $\|\epsilon^{\mbox{\scriptsize{raw}}}_i\|_{\psi_q}\leq \omega$ for some $q\in(0,2]$,    we can deduce based on \eqref{epscal-bounded} in  Remark \ref{effectivenoisevsrawnoise} that the error bound in \eqref{l1estimationerrorrate} is of the following order (treating $K,A$ as constants):
\begin{align*} 
\begin{split}
&   \log p\cdot\Big [ \big(1\vee\frac{1}{\sigma^{*2}   }\vee \frac{1} { \nu^{*2}}  \big) M^2J^*+\frac{1}{\etaa} \big\{[(1-\frac{1}{\sigma^{*}})^2\lor (1-\frac{1}{\nu^{*}})^{2}]M^{2}+ \big(\frac{\sigma^{*}\lor\nu^{*}}{\sigma^{*}\land\nu^{*}}-1\big)^2 \big\}\Big]\\&\quad +(\log p)^{2/q}\cdot\frac{1}{\etaa} \{M\omega+  \sigma^{*}\lor\nu^{*}  \}^2+\etaa\|\gamma^*\|_{2}^2.
\end{split}\end{align*}
   The bound  varies with the number of predictors logarithmically, and  quantifies the impact of asymmetric scales in the context of sparse skewed two-part models.


\begin{remark}\label{rem-moreres}
  Under suitable regularity conditions similar to those in Theorem \ref{globalgrouplreconstanttheta},   the following $(2,\infty)$-norm bound holds (cf. Theorem \ref{inftyboundprop}   in Appendix \ref{inftyboundsubsect}): \begin{align}\label{l1infboundtext}
  \|\hat{\bar{\beta}}-\bar{\beta}^*\|_{2,\infty}
    \leq C\frac{\sqrt{K\alpha}\lor\vartheta}{\alpha\sqrt{n}}A\Big\{\omega_{\bar\eta}+\frac{\omega_{\bar m}+\omega_{\varsigma}}{\sqrt{J^*}} \Big\}(\log p)^{\frac{1}{q}}
\end{align}
with probability at least $1-C p^ {-(A\vartheta)^q  }-1/\varphi(cA\vartheta(\log p)^{ {1}/{q}})$, where  $C,c$ are positive constants and $A,K,\alpha,\vartheta$ can often be  treated as constants.  Hence with a proper signal strength
$\min_{k\in\mathcal{J}(\bar\beta^*)}\|\bar\beta_k^*\|_2>  2CA(\sqrt{K\alpha}\lor\vartheta) \big\{\omega_{\bar\eta}+(\omega_{\bar m}+\omega_{\varsigma})/\sqrt{J^*}\big\}(\log p)^{\frac{1}{q}}/(\alpha\sqrt{n})$,  \eqref{l1infboundtext} guarantees faithful variable
selection,   $\mathcal{J}^*\subset \hat{\mathcal{J}}$,  with high probability.

Finally,  our analysis can be extended to a  general $P$  beyond the $\ell_1$-type penalty.
See Appendix \ref{generalpenaltysubsect} for more details.
\end{remark}

\section{Experiments}
\label{sec:exp}
We conducted a variety of synthetic and real data experiments to evaluate the performance of the proposed method.  Due to limited space, we  only present  a selection of   our  data analyses in the following subsections. Interested readers may refer to
Appendix \ref{appendixb} for more experiment results.
\subsection{Simulations}
\label{subsec:basicsimu}
In this part, we conduct simulation experiments to compare our proposed methods with some popularly used approaches for  skewed estimation. The predictor matrix $X=[X_1,\ldots,X_n]^T\in \mathbb{R}^{n\times p}$ is generated by $X_i\overset{i.i.d.}{\sim}N(0,\Sigma)$, where   $\Sigma=[\kappa^{|i-j|}]$ has a Toeplitz structure.  The response vector is generated according to $y=X\beta^*+1\alpha^*+\epsilon$, where $\epsilon_i \overset{i.i.d.}{\sim}\mbox{SP}^{(\phi)}( \sigma^*,\nu^*,m^*)$. In the   setups to be introduced, we  set  $\alpha^*=0$ and $\beta^*=[12,13,14]^T$.
The pivotal point will be  chosen as  the mean, median, mode,  quartiles, and more. 

The following methods are included for comparison: quantile regression (QR) \citep{koenker1978regression}, Bayesian quantile regression (BQR) \citep{yu2001bayesian}, $Z$-estimation quantile regression (ZQR) \citep{bera2016asymmetric},
adaptive M-estimation  (AME) \citep{yang2019robust}, epsilon-skew-normal regression (ESN) \citep{Mudholkar:2000}, in addition to SPEUS.
  The first two methods  require the user to specify a quantile parameter \citep{koenker2009quantreg,benoit2017bayesqr}; we employ    the \textit{oracle  quantile} ($\Phi(m^*)\sigma^*/[\Phi(m^*)\sigma^*+\{1-\Phi(m^*)\nu^*\}]$ (computed using  the truth)  in all experiments, and so      denote the  methods by QR$^*$ and BQR$^*$, respectively. In BQR$^*$,   the posterior mean  estimate is obtained   with 4000  MCMC draws after 1000 burn-in samples.
For  SPEUS, the scale parameters $\sigma$ and $\nu$, as well as the pivotal point $m$, are all considered {unknown} and are estimated from the data.  Given each   setup, we repeat the experiment for 50
times and evaluate the performance of each method based on       $\mbox{Err}(\beta)$,    $\mbox{Err}$($\sigma$) and  $\mbox{Err}$($\nu$).        $\mbox{Err}(\beta)$ is   the (absolute) root-mean-square error   on $\beta$, and   $\mbox{Err}$($\sigma$) and  $\mbox{Err}$($\nu$)       denote the (relative) root-mean-square errors on $\sigma$ and $\nu$, i.e.,   $\mbox{Err}(\sigma) =\{ \sum^{N}_{t=1}(\hat{\sigma}_t/\sigma^*-1)^2/N\}^{1/2}$, where $\hat{\sigma}_t$ is the estimate on the $t$th simulation dataset.


\textbf{Ex 1.} (Skewed Gaussian and skewed Laplace, $m^*=0$): Let $\phi$ be the standard normal or Laplace density,    $n=300 $,  $\kappa= 0.5$,    $\sigma^*=0.2$ and  $\nu^*=0.4,0.6,1.2$.

\textbf{Ex 2.} (Skewed Gaussian and skewed Laplace, $m^*\neq 0$): Let $\phi$ be the standard normal or Laplace density,     $n=300 $,  $\kappa=0.2$, $m^*=\Phi^{-1}(0.75)$ (third quartile),   $\sigma^*=0.3$ and  $\nu^*=0.5,0.7,0.9$. (We also tried $m = \Phi^{-1}(0.25)$ but the results are similar and omitted.)

\begin{table}[!h]
\scriptsize
\centering
\resizebox{\columnwidth}{!}{
\begin{tabular}{ccccccccccccc}
\hline\hline\noalign{\vskip 1mm}
     \multicolumn{13}{c}{Skewed Gaussian }\\
   \cline{3-13}
          &       &   \multicolumn{3}{c}{$\nu^*/\sigma^*=2$ }                    &  & \multicolumn{3}{c}{$\nu^*/\sigma^*=3$ }                     &  & \multicolumn{3}{c}{$\nu^*/\sigma^*=6$ }
     \\
     \cline{3-5} \cline{7-9} \cline{11-13}

 &    & Err($\beta$)  & Err($\sigma$) & Err($\nu$) &   & Err($\beta$)  & Err($\sigma$) & Err($\nu$) &    & Err($\beta$)  & Err($\sigma$) & Err($\nu$) \\
 \hline
QR$^*$ &  & 0.05 & --- & --- &  & 0.06 & ---   & --- &          & 0.08  &   ---    &  ---   \\
BQR$^*$   &  & 0.04 & 0.20 & 0.20 &  & 0.05 & 0.20  & 0.20 &          & 0.07  &  0.20    &  0.20   \\
AME   &  & 0.04 & 0.44 & 0.44 &  & 0.04 & 0.44   & 0.44 &          & 0.06  &   0.44   &  0.44  \\
ESN  &  & 0.04 & 0.42 & 0.42 &  & 0.04 & 0.42   & 0.42 &          & 0.06  &   0.42  &  0.42 \\
 SPEUS &  & 0.04 & 0.13 & 0.08 &  & 0.05 & 0.18   & 0.06 &          & 0.06  &  0.29    &  0.05 \\
  \hline\noalign{\vskip 1.5mm}
 \multicolumn{13}{c}{Skewed Laplace}\\   \cline{3-13}
          &       &   \multicolumn{3}{c}{$\nu^*/\sigma^*=2$ }                    &  & \multicolumn{3}{c}{$\nu^*/\sigma^*=3$ }                     &  & \multicolumn{3}{c}{$\nu^*/\sigma^*=6$ }
     \\
     \cline{3-5} \cline{7-9} \cline{11-13}

 &    & Err($\beta$)  & Err($\sigma$) & Err($\nu$) &   & Err($\beta$)  & Err($\sigma$) & Err($\nu$) &    & Err($\beta$)  & Err($\sigma$) & Err($\nu$) \\
  \hline
QR$^*$ &  & 0.04 & --- & --- &  & 0.05 & ---   & --- &          & 0.17  &   ---    &  ---   \\
BQR$^*$   &  & 0.04 & 0.05 & 0.05 &  & 0.05 & 0.05   & 0.05 &          & 0.17  &   0.06    &  0.06    \\
AME   &  & 0.04 & 0.23 & 0.20 &  & 0.05 & 0.25   & 0.21 &          & 0.17  &   0.24    &  0.20   \\
ZQR &  & 0.04 & 0.08 & 0.06 &  & 0.05 & 0.09   & 0.06 &          & 0.17  &   0.12   &  0.06  \\
 SPEUS &  & 0.04 & 0.10 & 0.07 &  & 0.05 & 0.12   & 0.07 &          & 0.17  &  0.18    &  0.06 \\
        \hline\hline
\end{tabular}
}
\caption{\footnotesize Skewed normal and skewed Laplace with   pivotal point at zero ({Ex 1}) }
\label{sknormallaplace_trivialm}
\end{table}

\begin{table}[!htp]
\scriptsize
\centering
\resizebox{\columnwidth}{!}{
\begin{tabular}{ccccccccccccc}
\hline\hline\noalign{\vskip 1mm}
     \multicolumn{13}{c}{Skewed Gaussian }\\
      \cline{3-13}
          &       &   \multicolumn{3}{c}{$\nu^*/\sigma^*=1.7$ }                    &  & \multicolumn{3}{c}{$\nu^*/\sigma^*=2.3$ }                     &  & \multicolumn{3}{c}{$\nu^*/\sigma^*=3$ }
     \\
     \cline{3-5} \cline{7-9} \cline{11-13}

 &    & Err($\beta$)  & Err($\sigma$) & Err($\nu$) &   & Err($\beta$)  & Err($\sigma$) & Err($\nu$) &    & Err($\beta$)  & Err($\sigma$) & Err($\nu$) \\
 \hline
QR$^*$ &  & 0.16 & --- & --- &  & 0.17 & ---   & --- &          & 0.15  &   ---    &  ---    \\
BQR$^*$  &  & 0.16 & 1.04 & 0.30 &  & 0.16 & 0.67   & 0.43&          & 0.14  &   0.46    &  0.50  \\
AME   &  & 0.06 & 0.43 & 1.34 &  & 0.07 & 0.47   & 0.93 &          & 0.08  &  0.51    &  0.71  \\
ESN  &  & 0.06 & 0.56 & 1.05 &  & 0.06 & 0.56   & 0.73 &          & 0.07  &  0.57   &  0.58 \\
 SPEUS &  & 0.04 & 0.10 & 0.08 &  & 0.04 & 0.14   & 0.07&          & 0.05  &   0.17   &  0.06 \\
  \hline\noalign{\vskip 1.5mm}
 \multicolumn{13}{c}{Skewed Laplace}\\   \cline{3-13}
          &       &   \multicolumn{3}{c}{$\nu^*/\sigma^*=1.7$ }                    &  & \multicolumn{3}{c}{$\nu^*/\sigma^*=2.3$ }                     &  & \multicolumn{3}{c}{$\nu^*/\sigma^*=3$ }
     \\
     \cline{3-5} \cline{7-9} \cline{11-13}

 &    & Err($\beta$)  & Err($\sigma$) & Err($\nu$) &   & Err($\beta$)  & Err($\sigma$) & Err($\nu$) &    & Err($\beta$)  & Err($\sigma$) & Err($\nu$) \\
  \hline
QR$^*$ &  & 0.17 & --- & --- &  & 0.18 & ---   & --- &          & 0.17  &   ---    &  ---   \\
BQR$^*$   &  & 0.17 & 1.07 & 0.31 &  & 0.17 & 0.69   & 0.43 &          & 0.16  &  0.48    &  0.50   \\
AME   &  & 0.09 & 0.19 & 1.32 &  & 0.11 & 0.26   & 0.90 &          & 0.11  &   0.31    & 0.66   \\
ZQR &  & 0.10 & 0.52 & 0.55 &  & 0.11 & 0.52   & 0.34 &          & 0.12  &   0.52    & 0.22  \\
 SPEUS &  & 0.04 & 0.10 & 0.08 &  & 0.05 &0.10   & 0.07 &          & 0.06  & 0.14    & 0.07 \\
        \hline\hline
\end{tabular}
}
\caption{\footnotesize Performance comparison for skewed normal and skewed Laplace with a nontrivial pivotal point ({Ex 2})}
\label{sknormallaplace_nontrivialm}
\end{table}

Tables \ref{sknormallaplace_trivialm} and \ref{sknormallaplace_nontrivialm} provide a detailed comparison of the performance of various methods. According to Table \ref{sknormallaplace_trivialm},    when $m^*=0$, the $\beta$-errors   do not differ significantly. In this setup,   BQR$^*$, ZQR and SPEUS all perform well.
In the setup of Ex 2 with a nontrivial pivotal point, as shown in Table \ref{sknormallaplace_nontrivialm},          SPEUS significantly outperforms the other methods in both  Gaussian and Laplace cases. We also tried other quantiles for $m^*$ (results not reported here) and found that
BQR, AME and ZQR typically produce  highly inaccurate  scale estimates.   In contrast,
SPEUS is much more successful at accurately recovering the true location and scales,  and is  not sensitive to different values of $m^*$ even in the  heavy-tailed cases.

\subsection{Abalone Age}
\label{subsec:abalone}
Determining the age of abalone is a tedious and challenging task that often involves counting growth rings under a microscope. We use 7 physical measures of blacklip abalone, including length, diameter, height,   weight and others, to predict the age of   1,526   infant samples originally collected by  \cite{nash1994population}. We split the data   into a training set ($70\%$) and a test set $(30\%)$.
The abalone age dataset is usually analyzed using a regression model based on ordinary least squares (OLS)
\citep{gitman2018novel,chang2019prediction}, but the histogram  in the left panel of   Figure \ref{ageplot} suggests the  possibility of  skewness.
 To address this, we employed SPEUS by applying  skewed pivot-blend  to the normal density function.   Moreover, we   included the skewed methods    AME \citep{yang2019robust} 
and ESN \citep{Mudholkar:2000}  for comparison.

\begin{figure}[!htp]
\begin{subfigure}[t]{0.45\textwidth}
  \includegraphics[width=\textwidth, height=4.5cm]{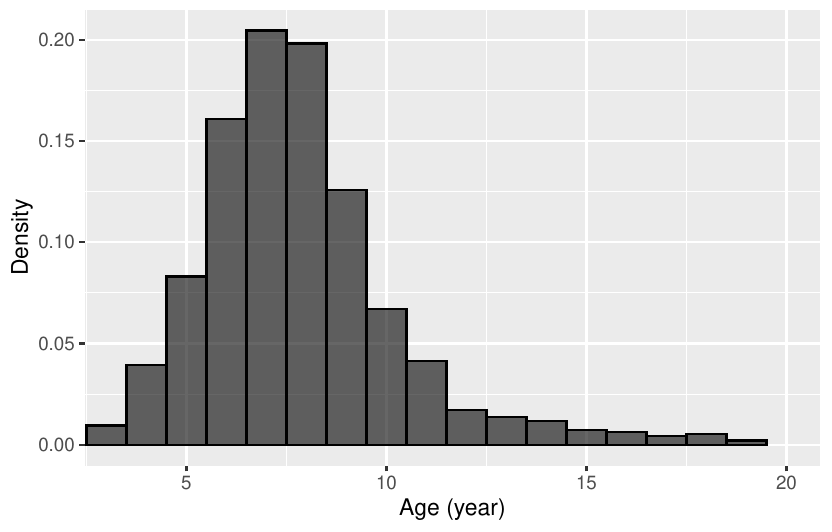}
  \label{agehist}
\end{subfigure}
\hfill
\begin{subfigure}[t]{0.45\textwidth}
  \includegraphics[width=\textwidth, height=4.5cm]{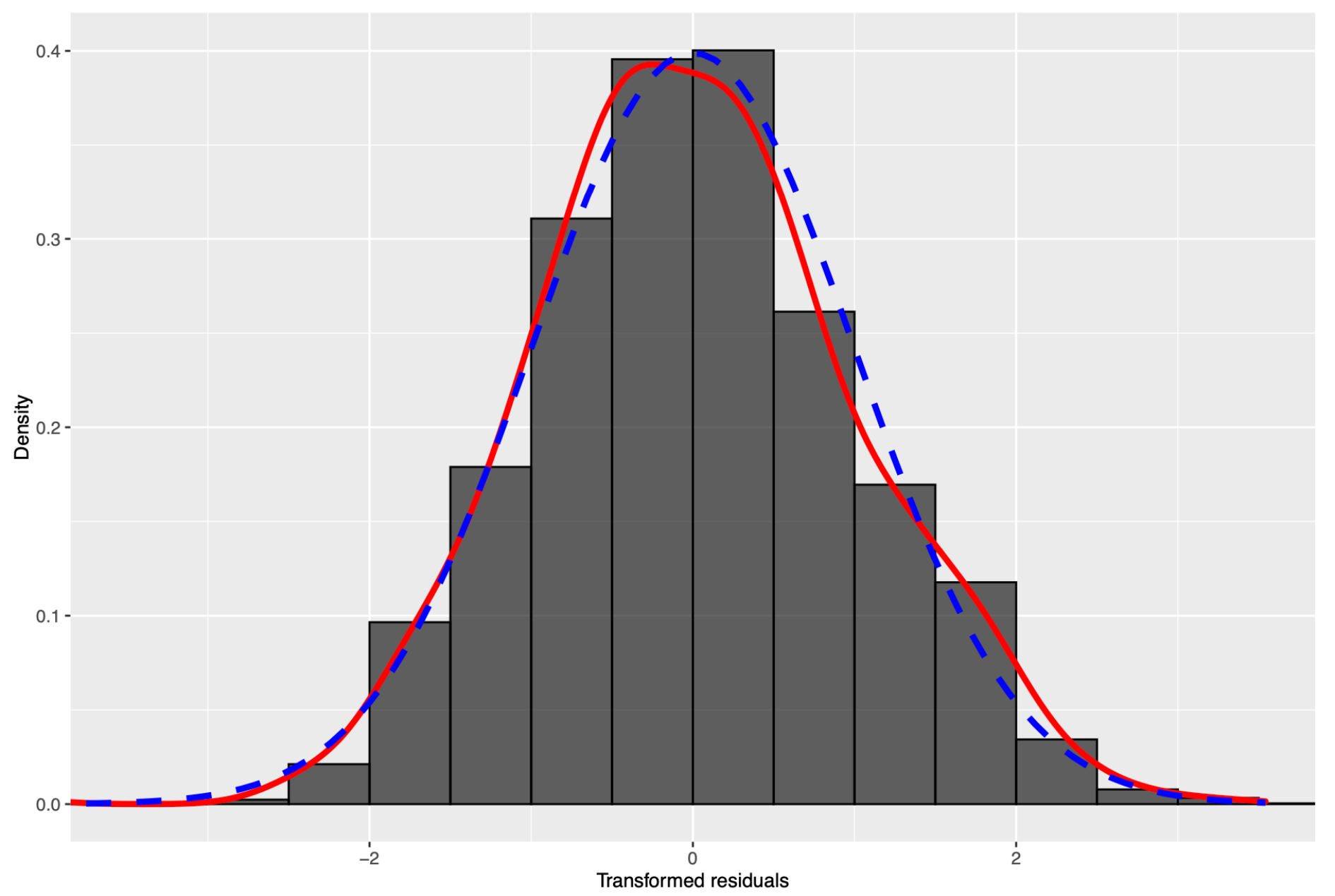}
  \label{bootstrapresultboxplot}
\end{subfigure}
\caption{ \footnotesize Left:  histogram of abalone age. Right:  ``back-transformed" SPEUS residuals,  the associated  density estimate (red solid curve), and
the standard normal density (blue dashed curve).}
\label{ageplot}
\end{figure}

%
%

\begin{figure}[!ht]
\begin{subfigure}{0.48\textwidth}
  \includegraphics[width=\textwidth, height=4.5cm]{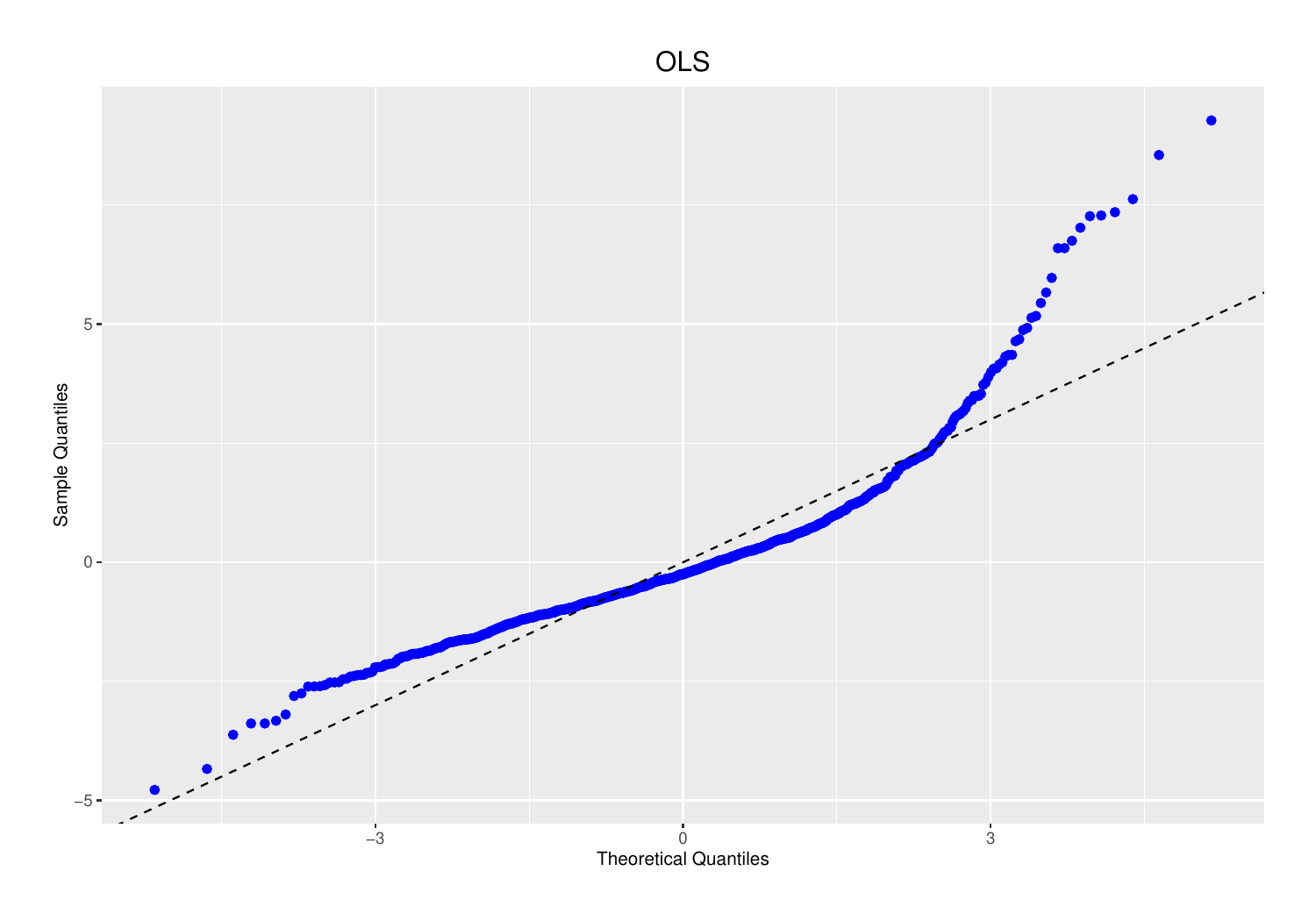}
  \caption{\scriptsize OLS}
  \label{ols_trainingqqplot}
\end{subfigure}
\hfill
\begin{subfigure}{0.48\textwidth}
  \includegraphics[width=\textwidth, height=4.5cm]{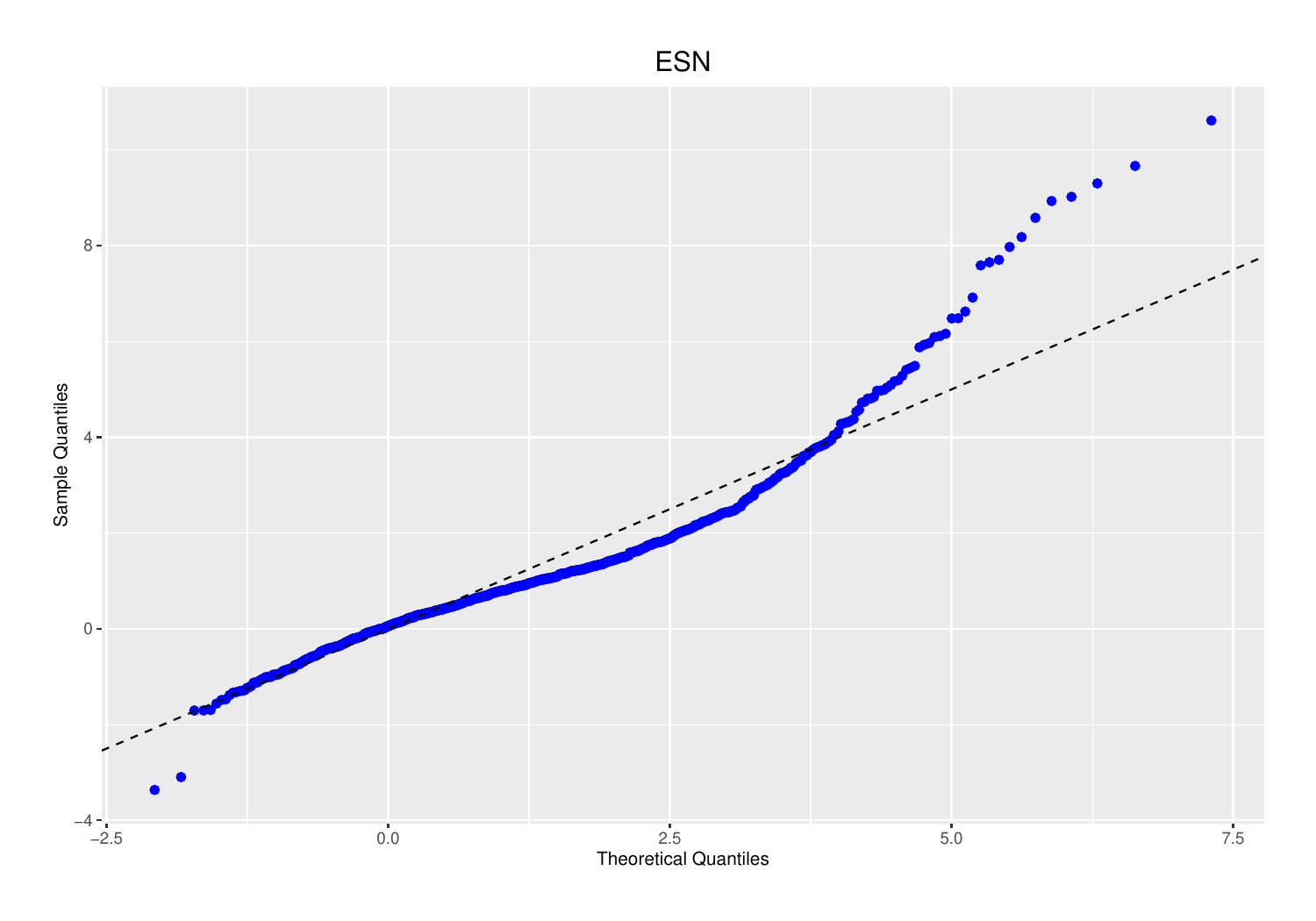}
  \caption{\scriptsize ESN}
  \label{sn_trainingqqplot}
\end{subfigure}
\begin{subfigure}{0.48\textwidth}
  \includegraphics[width=\textwidth, height=4.5cm]{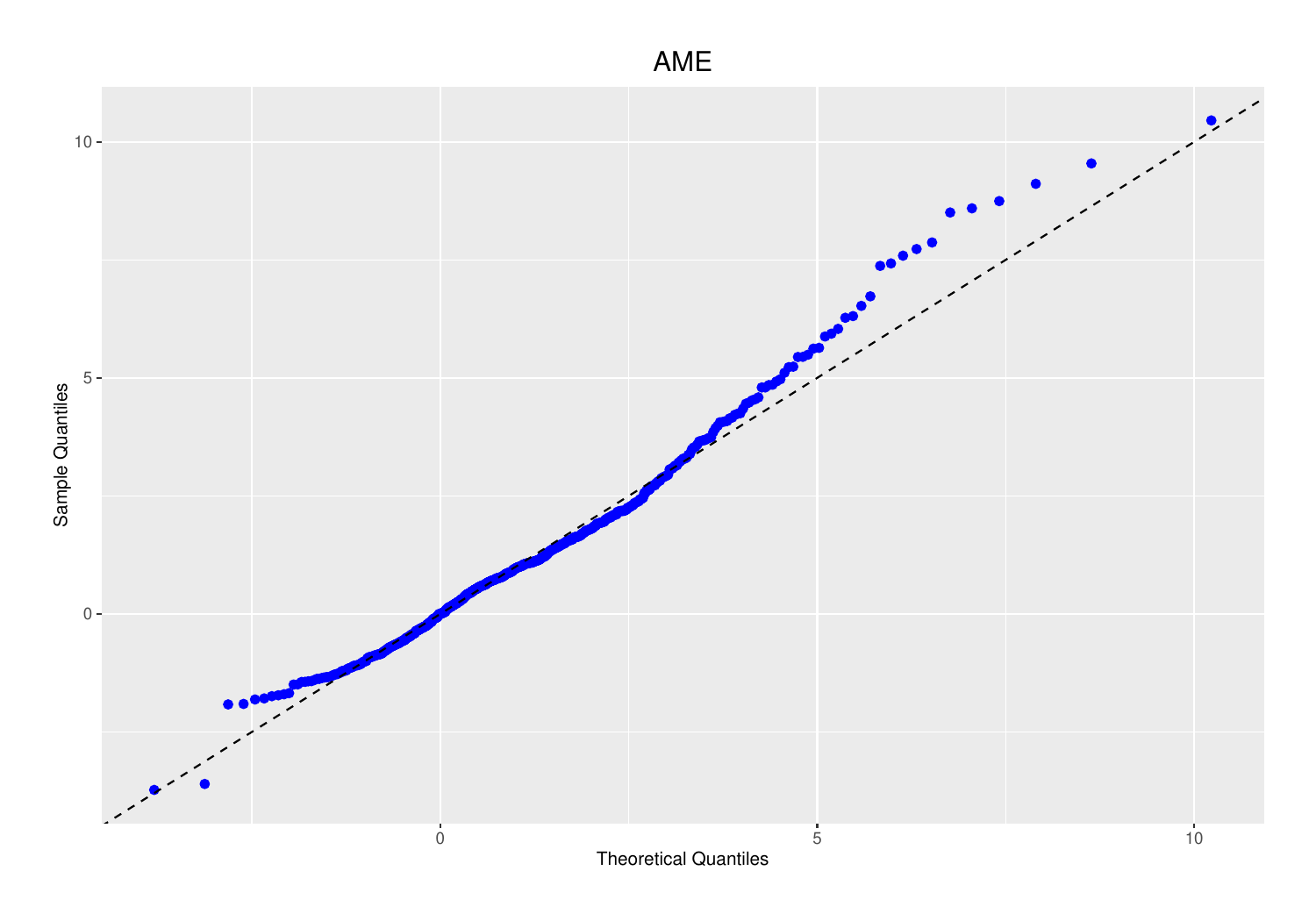}
  \caption{\scriptsize AME}
  \label{ame_training_qqplot}
\end{subfigure}
\hfill
\begin{subfigure}{0.48\textwidth}
  \includegraphics[width=\textwidth, height=4.5cm]{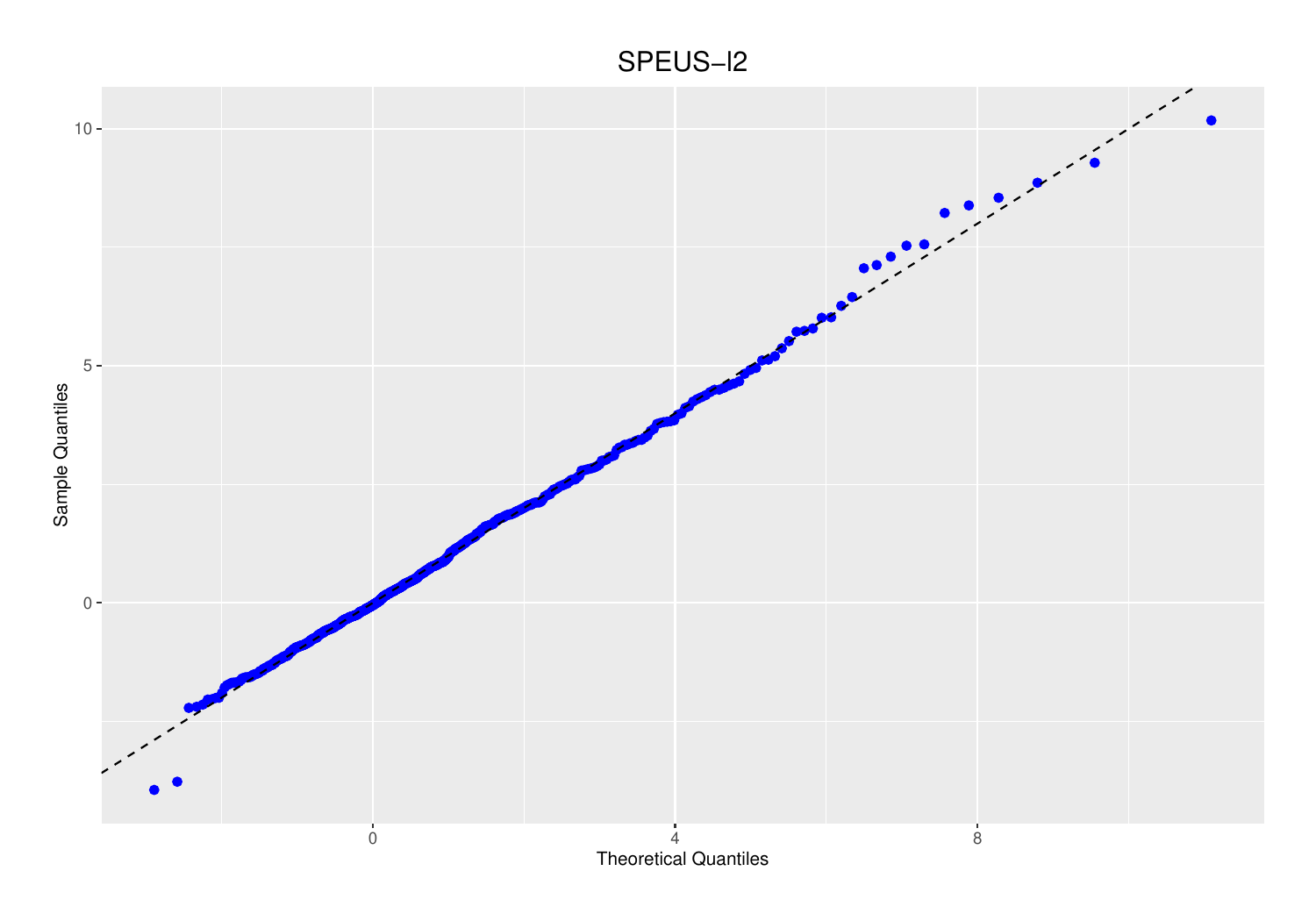}
  \caption{\scriptsize SPEUS}
  \label{speus_l2_trainingqqplot}
\end{subfigure}
\caption{ \footnotesize Q-Q plots of  model residuals  on the  abalone data. 
}
\label{trainingqqplots}
\end{figure}

%

To assess the goodness of fit of several different models, we present Q-Q plots of the residuals in Figure \ref{trainingqqplots}. Although OLS is widely adopted for the data, it clearly demonstrates a lack of fit. The ESN and AME models offer significant improvement through  scale and tail adjustments,  but their Q-Q plots still exhibit substantial right-skewness.  In contrast, the sample quantiles in the SPEUS model nearly match the theoretical quantiles. The symmetry after the back-transform, as shown in the right panel of Figure \ref{ageplot}, corroborates this point.

To compare the fit of the  methods, which optimize different criteria based on various distributional assumptions, we conducted   Kolmogorov-Smirnov tests   and calculated the associated p-values: 0.008 for OLS,  0.003 for ESN, 0.08 for AME, and 0.3 for SPEUS. These findings validate our model's superior fit. It effectively addresses pivotal point and skew effects in the data, surpassing alternative approaches that rely solely on adjustments to intercept and scale.

%

\begin{figure}[!htp]
  \includegraphics[width=.5\textwidth]{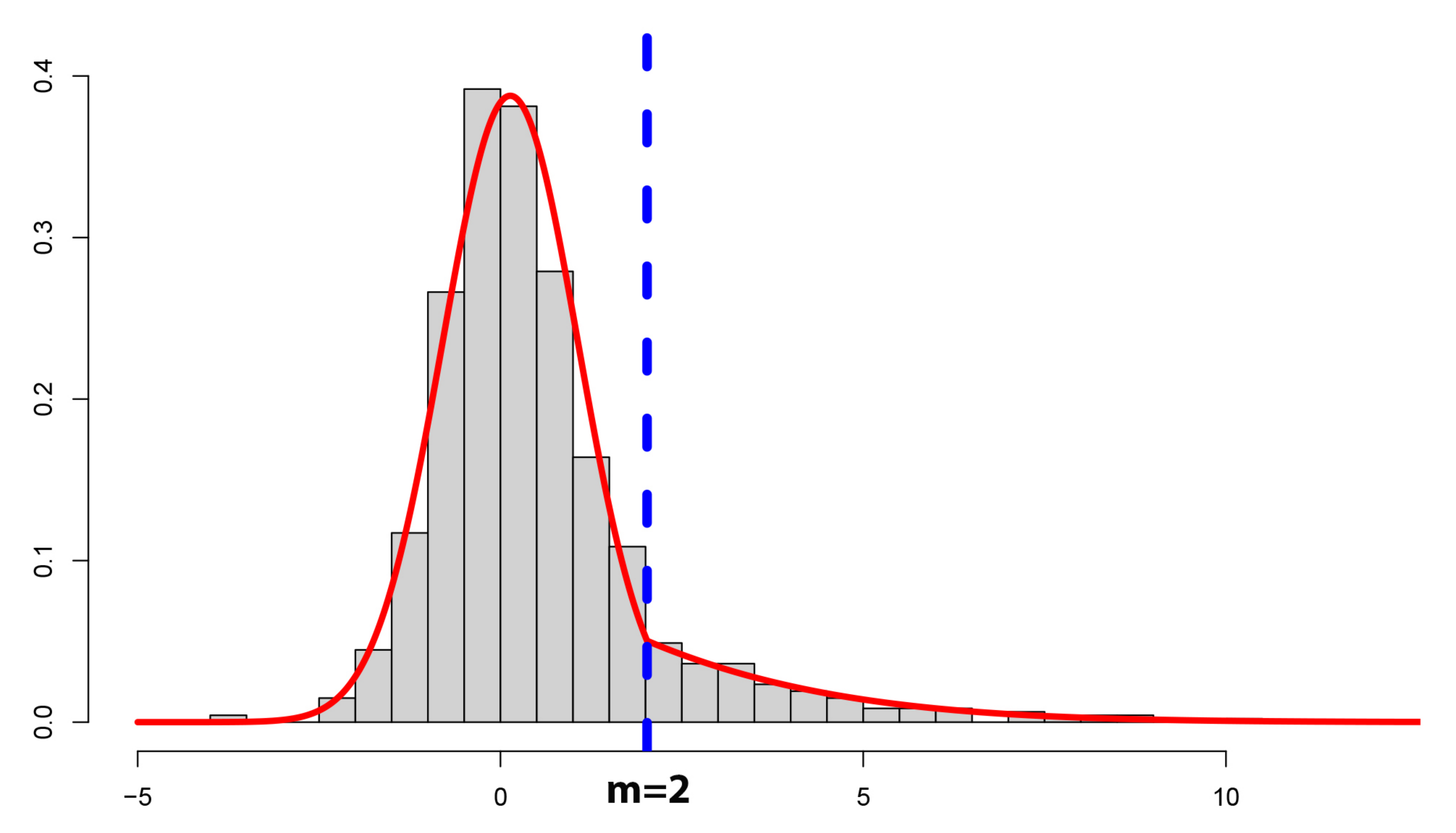}
  \includegraphics[width=.4\textwidth]{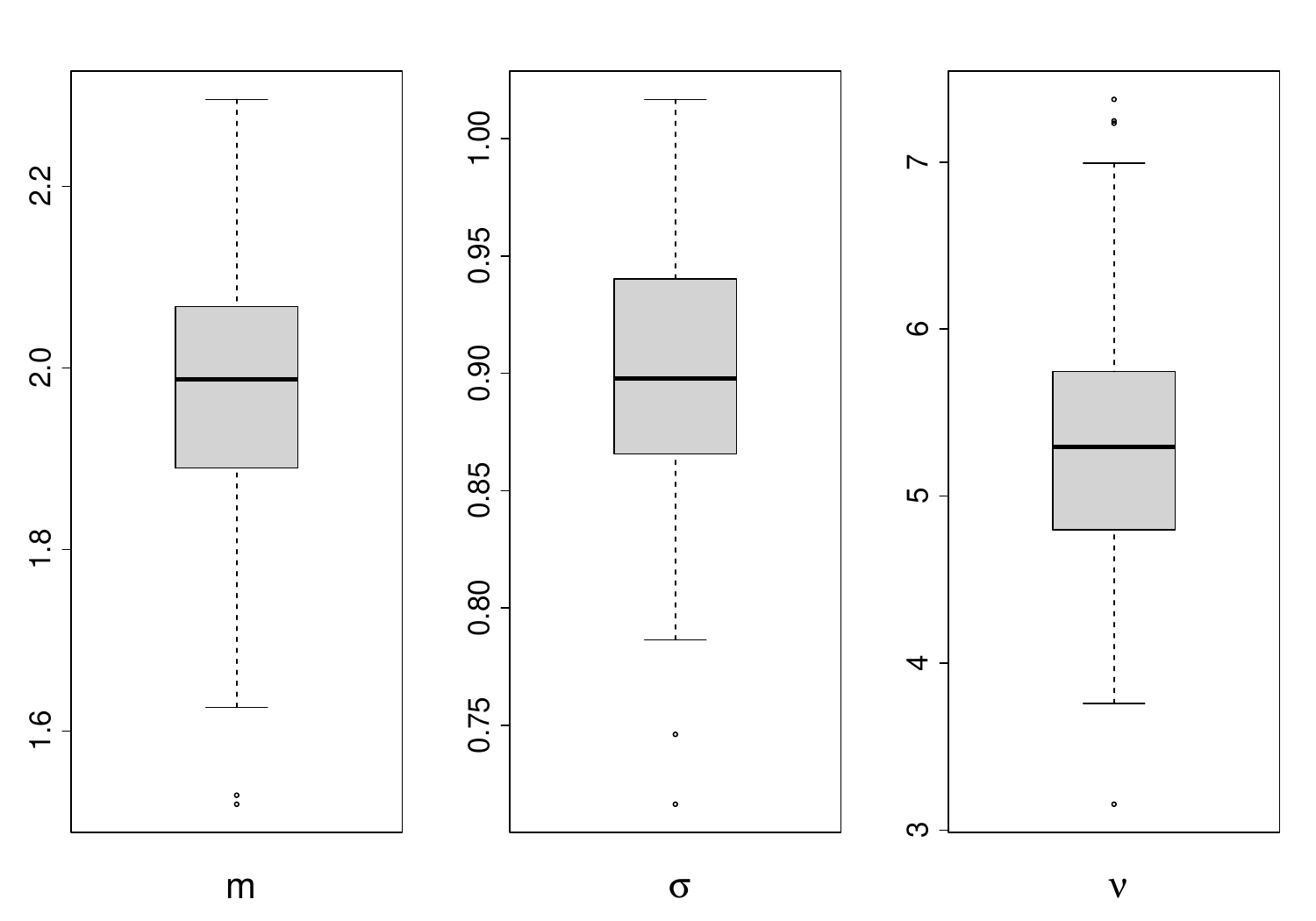}
\caption{\footnotesize  Abalone data. Left:  Histogram of SPEUS residuals with a labeled pivotal point. Right: Bootstrap estimates of $m,\sigma,\nu$ (with  100 replications), where $m\ne 0$ and $\sigma\ne \nu$ are significant.}
\label{histogramspeusplot}
\end{figure}

Figure \ref{histogramspeusplot} shows the histogram of the SPEUS residuals along with  the  bootstrap results for $m, \sigma, \nu$.
The   scale estimates $\hat \sigma=0.9$, $\hat \nu=5.4$, with  $90\%$ confidence intervals  $[0.8,1.0]$ and $[4.0,6.5]$, respectively, suggest significant skewness in the data compared to a standard Gaussian distribution. The
estimated pivotal point,  $\hat m= 2$    (with a  90\% confidence interval $[1.7,2.2]$)   is likely associated with
the legal minimum size limits in Tasmania  during the 1980s (where and when the data were collected).
The determination of size limits included adding an estimated two years' growth to the size at which abalone reached sexual maturity in different areas, aiming to ensure abalone could reproduce before being harvested \citep{tarbath1999estimates}. However, blacklip abalone do not mature in size, and the significant growth variability among various abalone stocks led to frequent changes in size limits, impacting abalone of various ages.
 Our estimated pivotal point appears to correspond with the 2-year protection regulation.


\subsection{Medical Expenditure}

Modeling medical cost data and identifying relevant predictors are valuable yet demanding tasks. MEPS conducts large-scale surveys across the United States and provides nationally representative information about medical expenditures. We  model medical expenditures on 17 features on a subset (stratum ID 2109, third PSU) of the 2019 MEPS data, which includes 150 participants. Of the 17 features, 15 are from \cite{linero2020semiparametric}, including, for example, the amount of total utilization of prescribed medications (\texttt{RXTOT19}), the number of dental care visits in 2019 (\texttt{DVTOT19}), age (\texttt{AGE19X}), each participant's rating about their own health status (\texttt{RTHLTH31}), and categorized family income (\texttt{POVCAT19}). We also include the variables \texttt{SEX} and \texttt{ACTLIM31}, with the latter being a binary variable indicating whether a participant has any physical restrictions  that impede his/her ability to engage in physical labor.

\begin{figure}[!ht]
\begin{subfigure}{0.47\textwidth}
  \includegraphics[width=\textwidth]
{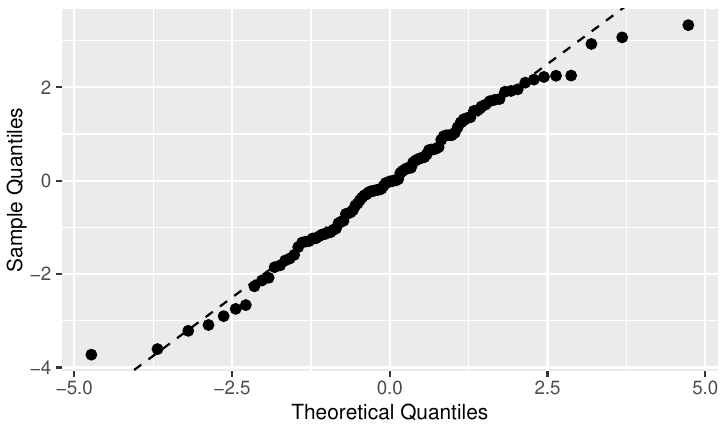}
  \label{qqplot_SPEUS_huber}
\end{subfigure}
\hfill
\begin{subfigure}{0.47\textwidth}
  \includegraphics[width=\textwidth]
{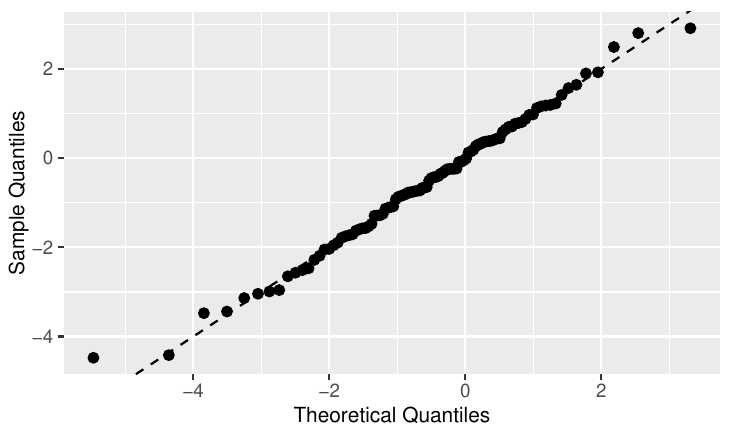}
  \label{qqplot_sigma_Equal_nu}
\end{subfigure}
\hfill
\caption{\footnotesize Q-Q plots of  the continuous-part  residuals on MEPS. The left panel corresponds to the standard log-normal two-part model and the right panel corresponds to its skewness-enhanced counterpart.  \label{trainingqqplots_meps}}
\end{figure}

Traditional medical cost data analysis often employs a two-part model with logistic and log-normal components. We compared this to the sparse skewed two-part model (cf. Section \ref{skewedtwo-partmodelsection}) using a logarithmic function for
$T$ and Huber's loss for $\rho$.
The regularization parameter is tuned by 5-fold  selective cross-validation \citep{she2019cross}.  Figure \ref{trainingqqplots_meps} demonstrates the superior fit of the latter  in terms of the continuous component.
In binary component analysis, 100 repeated classification tests on 75/25 training/test splits show that our method improved accuracy from 78\%  to 84\%.


\begin{figure}
\centering
  \centering
  \includegraphics[width=.75\textwidth]{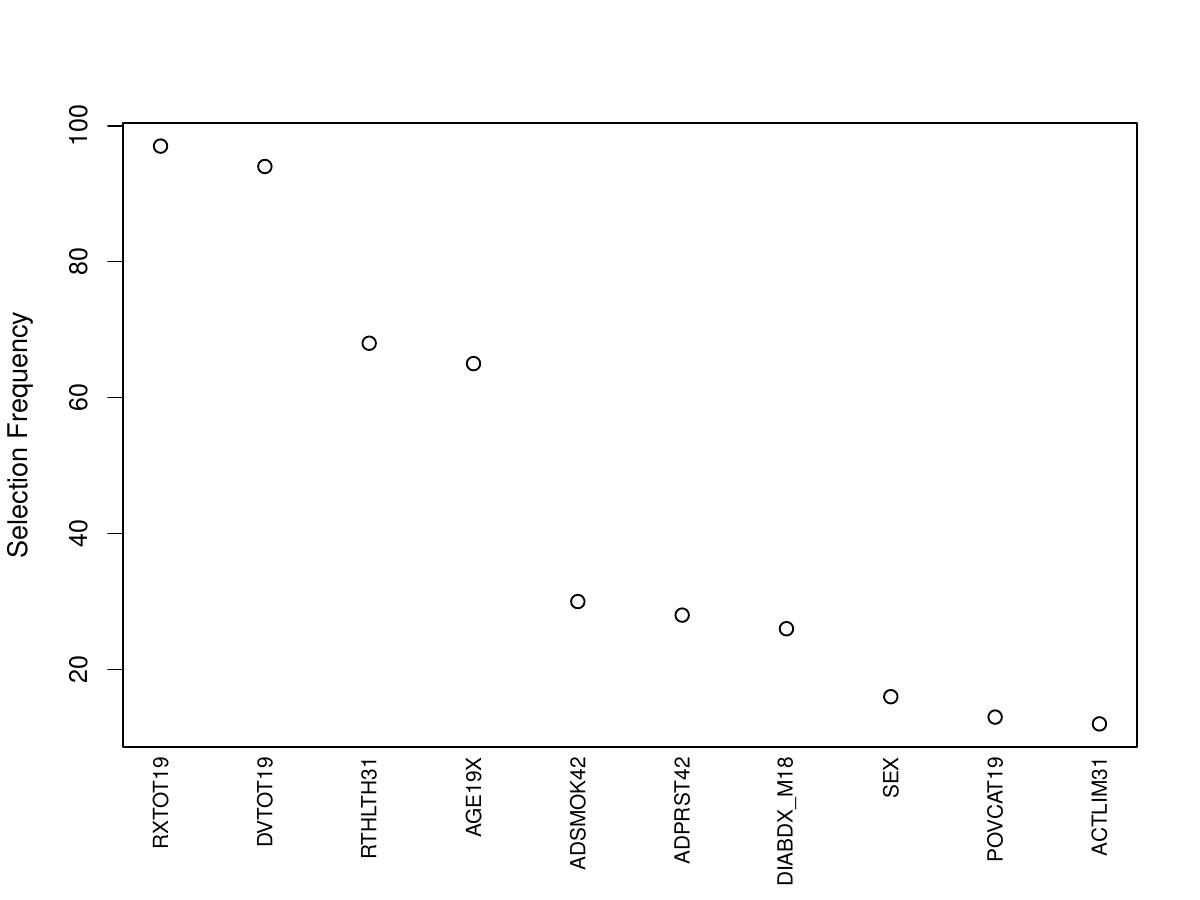}
\caption{\footnotesize Selection frequencies of the top 10 MEPS features over the 100 bootstrap replications. \label{fig:MEPSbtfreqs}  } 
\end{figure}



Next, we analyze the variable selection outcomes using the proposed model. By bootstrapping the data 100 times, we plot the selection frequencies of the top 10 variables in  Figure \ref{fig:MEPSbtfreqs}.   It is  worth noting    that setting $\sigma=\nu$ resulted in all variables being selected at  low frequencies, less than $23\%$. This emphasizes the profound influence of skewness on  variable selection.

According to Figure \ref{fig:MEPSbtfreqs},  the first 4 variables, \texttt{RXTOT19}, \texttt{DVTOT19}, \texttt{AGE19X}, and \texttt{RTHLTH31}, exhibit   high selection frequencies ($>60\%$). In contrast, other variables exhibited significantly lower selection frequencies. Below, we provide some practical explanation and guidance regarding these four variables.


First, \texttt{RXTOT19}, representing the count of a person's total prescribed medications, is identified   as highly influential in predicting medical expenditures.  This predictor has the highest correlation (0.43) with the response   among all the predictors. Our conclusion is consistent with \citet{holle2021trends}, highlighting that prescribed medication expenses   constitutes a substantial portion of medical costs in the USA.
Moreover,  the patient's age (\texttt{AGE19X}) and  self-perceived health status (\texttt{RTHLTH31}) emerge as significant predictors influencing medical costs. This discovery aligns with   \cite{axon2021patterns}.

Perhaps  interestingly, our analysis also reveals that the number of dental care visits (\texttt{DVTOT19}), selected over 90\% of the time,  plays a significant role in determining the total  medical costs. Its contribution appears to be unique, as it has low correlations ($<0.09$) with the other three major predictors (the number of prescriptions, self-perception, and age). Beyond the direct costs of dental care, a plausible explanation could be that individuals with regular dental visits may be more health-conscious and have higher incomes, making them more willing to spend on healthcare.

\section{Summary}
Skewness poses a significant challenge in data science, and  many approaches  attempt to model skewness by introducing different scales based on the median of a symmetric, unimodal density. This paper    introduced a novel two-piece density family constructed through skewed pivot-blend. ``Pivot" refers to the central reference point around which different affine transformations are applied to two conditional densities, and it can be positioned anywhere.   ``Blend" signifies the merging of these asymmetrically scaled densities  using appropriate mixings to create a new continuous density.

We proposed a joint modeling framework that simultaneously estimates scales, the pivotal point, and other location parameters. In particular, we argued that the pivotal point does not correspond to the intercept when  skewness is present, a key aspect  previously overlooked in the literature.  In practice, the inclusion of a single pivotal point parameter significantly enhances a model's capacity in real-world applications.

As an important  application, the paper also   investigated sparse skewed two-part models, a problem that has recently gained much attention in  biomedical and econometric studies.  Our non-asymptotic analysis showcases how skewness in random samples, especially those with potentially heavy tails, can affect statistical accuracy. The  quantification of the impact of asymmetrical scales on the choice of regularization parameters and the rates of statistical error  provides an insightful examination of skewness within a finite-sample context.

We aim to raise data analysts' awareness of data skew,  as well as
 potential distortions that can arise when applying common transformations and conventional log-likelihoods.  The technique of skewed pivot-blend  offers an effective strategy for mitigating these challenges.

\appendix
\numberwithin{equation}{section}
\numberwithin{theorem}{section}
\numberwithin{lemma}{section}
\numberwithin{definition}{section}
\numberwithin{table}{section}
\numberwithin{figure}{section}
\numberwithin{remark}{section}

\section{Technical Details}\label{appendixa}

Throughout the proofs, we use $C, c$ to denote positive constants, and they are not
necessarily the same at each occurrence. Given a matrix $A$, we use $\mathcal{R}(A)$ to denote the column space (range) of $A$ and  $A^+$ to denote the Moore-Penrose inverse of $A$.  Denote by $P_A=A(A^TA)^+A^T$ the orthogonal projection on $\mathcal{R}(A)$.

\subsection{Basics} 
\label{subsec:orlicz}
 The conventional   definition of  Orlicz $\psi$-norms goes as follows:
Given  a strictly increasing convex function   $\psi$  on $\mathbb R_+ := [0,+\infty)$ with $\psi(0)=0$, then the Orlicz $\psi$-norm of a random variable $Y$ is defined as  $  \|Y\|_{\psi}=\inf\{t>0: \mathbb{E}\psi(|Y|/t)\leq 1\}$.


 Some well-known examples of Orlicz $\psi$-norms  are the $L_q$-norms: $\|Y\|_q=(\mathbb{E}|Y|^q)^{1/q}$ associated with
$
  \psi(x)=x^q $ $(q\geq 1)$ (e.g.,  the Pareto distribution), and the $\psi_q$-norms ($q\geq1$) with
 \begin{align}\label{psiqfun}
     \psi_q(x)=\exp(x^q)-1.
 \end{align}
\eqref{psiqfun} encompasses both  sub-Gaussian   and sub-Exponential type  random variables for $q=2,1$,  without the requirement for the random variables to be centered.

Our upcoming theorems  often relax the strict requirements of \textit{strict} monotonicity and \textit{convexity} for $\psi$.
This flexibility allows us to handle random variables with much heavier tails.   For instance, we consider the extension of \eqref{psiqfun}, known as \textit{sub-Weibull} random variables, which have finite $\psi_q$-norms for   $q>0$   (cf. \cite{kuchibhotla2018moving}). As $0 < q < 1$, $\psi_q$ is nonconvex, and these random variables, including Weibull, exhibit heavier tails compared to sub-Exponential ones \citep{gotze2021concentration}.

Throughout the paper, when  we refer to the Orlicz norm of a random variable, denoted as  $\| \cdot \|_\psi$ or sometimes  $\| \cdot \|_\varphi$, it is always understood  that $\psi(\cdot)$ (or $\varphi(\cdot)$) is  an  nondecreasing  function defined on $\mathbb{R}_+$ with $\psi(0)=0$.  Theorem \ref{globalgroupl1} also requires       $\psi$  to satisfy the regularity condition   $\allowbreak\limsup_{x,y\to \infty}\psi(x)\psi(y)/\psi(cxy)\allowbreak<\infty$ for some constant $c>0$ \citep{vanderVaart1996}. It is easy to verify that  all these conditions are met by $L_q$ with $q>0$  and $\psi_q$ with $0<q\le 2$. Orlicz norms provide a useful framework for analyzing skewed random variables, including those without zero mean.

In this paper, we define that a random vector  $\epsilon\in\mathbb R^n$ has its Orlicz $\psi$-norm $\|\epsilon\|_{\psi}$ bounded above by $\omega$ if
\begin{align}
\|\langle\epsilon,\alpha\rangle\|_{\psi}\leq \omega\|\alpha\|_2,\ \forall \alpha\in\mathbb R^n. \label{orliczvecdef}
\end{align}
  Note that \eqref{orliczvecdef} is defined using the Euclidean norm $\|\cdot \|_2$, and the function $\psi$ may not necessarily be convex.   Furthermore, the components of $\epsilon$ are not required to be independent or centered. However, if $\epsilon$ does have centered, independent components, its vector Orlicz $\psi$-norm is bounded by the largest Orlicz $\psi$-norm among its components (up to a multiplicative constant)..
\begin{lemma}\label{componenttovectorlemma}
Let    $\epsilon_1,\ldots,\epsilon_n$ be centered, independent random variables satisfying  $\|\epsilon_i\|_{\psi_q}\leq \omega$ for some $q\in(0,2]$. Given any $\alpha \in \mathbb{R}^n$, we have $\|\langle \alpha,\epsilon \rangle\|_{\psi_q}\leq C\|\alpha\|_2\omega$, where $C$ is a constant  depending on $q$ only.
\end{lemma}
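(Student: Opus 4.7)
The plan is to split the proof according to whether $\psi_q$ is convex. For $q\in[1,2]$, the Orlicz gauge $\|\cdot\|_{\psi_q}$ is a genuine norm, while for $q\in(0,1)$ it is only a quasi-norm and the moment generating function need not exist, so a different argument is needed.

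For the convex regime $q\in[1,2]$, I would exploit the well-known equivalence between sub-Weibull norms and moment generating function bounds (see Vershynin, Prop.\ 2.5.2 for $q=2$, Prop.\ 2.7.1 for $q=1$, with intermediate $q$ handled by interpolation): $\|\epsilon_i\|_{\psi_q}\leq\omega$ together with $\mathbb E\epsilon_i=0$ yields $\mathbb Ee^{\lambda\epsilon_i}\leq e^{c\lambda^2\omega^2}$ on the range $|\lambda|\leq c/\omega^{2/q-1}$ when $q\in[1,2)$ (and unrestricted when $q=2$). By independence,
\[
\mathbb E\exp\bigl(\lambda\langle\alpha,\epsilon\rangle\bigr)=\prod_i\mathbb E\exp(\lambda\alpha_i\epsilon_i)\leq\exp\bigl(c\lambda^2\omega^2\|\alpha\|_2^2\bigr)
\]
on the corresponding range. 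A Chernoff optimization then produces a sub-Weibull tail bound $\mathbb P(|\langle\alpha,\epsilon\rangle|>t)\leq 2\exp\bigl(-c(t/(\omega\|\alpha\|_2))^q\bigr)$, which by the equivalence of tail bounds and $\psi_q$-norm (cf.\ Lemma~\ref{psinormconvertlemma}, referred to in the text) gives $\|\langle\alpha,\epsilon\rangle\|_{\psi_q}\leq C(q)\,\omega\|\alpha\|_2$.

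For the non-convex regime $q\in(0,1)$, I would switch to the moment characterization $\|X\|_{\psi_q}\asymp\sup_{p\geq 1}p^{-1/q}(\mathbb E|X|^p)^{1/p}$ up to $q$-dependent constants. The hypothesis $\|\epsilon_i\|_{\psi_q}\leq\omega$ then yields $\mathbb E|\epsilon_i|^p\leq(C p^{1/q}\omega)^p$ and in particular $\mathbb E\epsilon_i^2\leq C\omega^2$. Apply Rosenthal's inequality to the independent centered summands $\alpha_i\epsilon_i$: for $p\geq 2$,
\[
\bigl(\mathbb E|\textstyle\sum_i\alpha_i\epsilon_i|^p\bigr)^{1/p}\leq K_p\max\Bigl\{\bigl(\textstyle\sum_i|\alpha_i|^p\mathbb E|\epsilon_i|^p\bigr)^{1/p},\bigl(\textstyle\sum_i\alpha_i^2\mathbb E\epsilon_i^2\bigr)^{1/2}\Bigr\},
\]
with $K_p\lesssim p$. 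The second term is bounded by $C\omega\|\alpha\|_2$; the first by $Cp^{1/q}\omega\|\alpha\|_p\leq Cp^{1/q}\omega\|\alpha\|_2$ (since $p\geq 2$). Absorbing the Rosenthal constant $K_p$ into $p^{1/q}$ (valid because $q<1$ makes $p^{1/q}$ dominate any polynomial in $p$) gives $(\mathbb E|\langle\alpha,\epsilon\rangle|^p)^{1/p}\leq C(q)p^{1/q}\omega\|\alpha\|_2$ for all $p\geq 2$, and converting back via the moment characterization yields $\|\langle\alpha,\epsilon\rangle\|_{\psi_q}\leq C(q)\omega\|\alpha\|_2$.

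The main obstacle is the sub-Weibull regime $q\in(0,1)$: the lack of convexity of $\psi_q$ prevents a direct Minkowski-type reduction, and the MGF argument from the first regime breaks down since $\mathbb Ee^{\lambda\epsilon_i}$ may be infinite for every $\lambda\neq 0$. The delicate point is that the Rosenthal constant $K_p$ grows polynomially in $p$, so one must verify that $K_p$ can be absorbed into the $p^{1/q}$ factor without disturbing the $q$-dependent constant; this is precisely where the restriction $q<1$ (or more generally $q\leq 2$) is used. Careful bookkeeping ensures all constants depend only on $q$, not on $n$ or the particular $\alpha$.
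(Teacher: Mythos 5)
Your convex-regime argument ($q\in[1,2]$) is essentially sound, but the non-convex regime $q\in(0,1)$ contains a genuine gap at the step where you ``absorb the Rosenthal constant $K_p$ into $p^{1/q}$.'' The justification offered --- that $q<1$ makes $p^{1/q}$ dominate any polynomial in $p$ --- is false: $p^{1/q}$ is itself a fixed power of $p$, and the product $K_p\,p^{1/q}\asymp p^{1+1/q}$ can never be bounded by $C(q)\,p^{1/q}$ uniformly in $p$. Concretely, your Rosenthal bound yields $(\mathbb E|\langle\alpha,\epsilon\rangle|^p)^{1/p}\lesssim p^{1+1/q}\,\omega\|\alpha\|_2$, so $\sup_{p}p^{-1/q}(\mathbb E|\langle\alpha,\epsilon\rangle|^p)^{1/p}=\infty$; via the moment characterization this only gives a $\psi_{q'}$ bound with $1/q'=1+1/q$, i.e.\ a strictly weaker Orlicz norm than claimed. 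To repair this route you need a moment inequality in which the heavy-tailed contribution is controlled by $\|\alpha\|_\infty$ rather than $\|\alpha\|_p$ with an extra factor of $p$ --- e.g.\ the two-term bound $(\mathbb E|\sum\alpha_i\epsilon_i|^p)^{1/p}\lesssim\sqrt{p}\,\omega\|\alpha\|_2+p^{1/q}\,\omega\|\alpha\|_\infty$ of Hitczenko--Montgomery-Smith--Oleszkiewicz/Lata{\l}a type --- after which the supremum over $p$ is finite since $\|\alpha\|_\infty\le\|\alpha\|_2$.

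For comparison, the paper avoids the case split entirely: it applies Theorem~1.5 of \cite{gotze2021concentration} (concentration for polynomials of independent sub-Weibull variables) to the degree-one polynomial $f_\alpha(\epsilon)=\langle\alpha,\epsilon\rangle$, obtaining the tail bound $\mathbb P(|\langle\alpha,\epsilon\rangle|>t)\le 2\exp\{-C(t/(\omega\|\alpha\|_2))^q\}$ uniformly for $q\in(0,2]$, and then converts the tail bound to the $\psi_q$-norm bound by a short integration argument (Lemma~\ref{fubinithm}). Citing that result, or the sharp moment inequality above, would close the gap in your $q<1$ case; your $q\in[1,2]$ MGF/Chernoff argument can then stand, though the ``interpolation'' for intermediate $q$ should be made precise.
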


To prove the lemma, we first  introduce two lemmas. The first is       Theorem 1.5 in \cite{gotze2021concentration}. \begin{lemma}\label{lemmagotze}
 Let $\epsilon_1,\ldots,\epsilon_n$ be independent random variables satisfying $\|\epsilon_i\|_{\psi_q}\leq \omega$ for some $q\in(0,2]$. Let $f(\epsilon):\mathbb{R}^n\to \mathbb{R}$ be a polynomial of  degree $D\in \mathbb{N}$ and denote by $f^{(d)}$ the $d$-tensor of its $d$-th order partial derivatives for $1\leq d\leq D$. Then for all $t>0$, we have
 \begin{align}
     \mathbb{P}(|f(\epsilon)-\mathbb{E}f(\epsilon)|\geq t)\leq 2\exp\big(-C\min_{1\leq d\leq D} (\frac{t}{\omega^d \|\mathbb{E}f^{(d)}(\epsilon)\|_{HS}} )^\frac{q}{d}\big),
 \end{align}
 where $C$ is a constant that depends on $D$ and $q$ and $\|\cdot\|_{HS}$ denotes the Hilbert-Schmidt norm (or the Frobenius norm in the case of a matrix).
 \end{lemma}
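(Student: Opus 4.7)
The plan is to apply the Götze--Sambale--Sinulis polynomial concentration result (Lemma \ref{lemmagotze}) in the simplest possible degree-one case and then convert the resulting sub-Weibull tail bound to an Orlicz-norm bound.

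First, view $f(x) = \langle \alpha, x\rangle = \sum_{i=1}^n \alpha_i x_i$ as a polynomial in $x\in\mathbb R^n$ of degree $D=1$. Since the $\epsilon_i$ are centered and independent, $\mathbb E f(\epsilon) = \sum_i \alpha_i \mathbb E \epsilon_i = 0$. The only nontrivial derivative tensor is $f^{(1)}(\epsilon) = \alpha$, which is deterministic, so $\mathbb E f^{(1)}(\epsilon) = \alpha$ and $\|\mathbb E f^{(1)}(\epsilon)\|_{HS} = \|\alpha\|_2$. Plugging these data into Lemma \ref{lemmagotze} with $D=1$ and the only admissible value $d=1$ yields, for every $t>0$,
\begin{align*}
\mathbb P\bigl(|\langle \alpha,\epsilon\rangle|\ge t\bigr)\ \le\ 2\exp\!\left(-C\Bigl(\frac{t}{\omega\|\alpha\|_2}\Bigr)^{q}\right),
\end{align*}
where $C$ depends only on $q$ (since $D=1$ is fixed).

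Second, I convert this sub-Weibull tail into a $\psi_q$-norm estimate. Setting $K = \omega\|\alpha\|_2$, the layer-cake formula gives
\begin{align*}
\mathbb E\bigl[\exp(|\langle\alpha,\epsilon\rangle|^q/s^q)-1\bigr] \ =\ \int_0^\infty \mathbb P\!\left(|\langle\alpha,\epsilon\rangle|>s(\log(1+u))^{1/q}\right)\rd u,
\end{align*}
which, using the tail bound above, is dominated by $2\int_0^\infty (1+u)^{-Cs^q/K^q}\rd u$; this is at most $1$ once $s$ is a sufficiently large constant multiple of $K$, depending only on $q$ (and on the constant $C$, itself a function of $q$ only). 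By the definition of the $\psi_q$-norm, this yields $\|\langle\alpha,\epsilon\rangle\|_{\psi_q}\le C'\omega\|\alpha\|_2$ with $C'$ depending only on $q$, which is the desired bound.

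The main potential subtlety is that for $0<q<1$ the function $\psi_q(x)=\exp(x^q)-1$ is not convex, so the corresponding ``norm'' is only a quasi-norm; however, the defining condition $\mathbb E[\psi_q(|X|/t)]\le 1$ still makes sense, and the tail-to-moment-generating conversion above is purely a tail calculation and does not require convexity. Thus no modification is needed to cover the full range $q\in(0,2]$, and the claim follows. A small bookkeeping point is that one should check the integrability threshold $Cs^q/K^q>1$ is easy to arrange by choosing $s = c_q K$ for $c_q$ large enough; this is routine.
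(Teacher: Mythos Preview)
Your proposal does not prove the stated Lemma~\ref{lemmagotze} at all: you \emph{assume} Lemma~\ref{lemmagotze} and use it as a black box to derive the conclusion $\|\langle\alpha,\epsilon\rangle\|_{\psi_q}\le C\omega\|\alpha\|_2$, which is the statement of Lemma~\ref{componenttovectorlemma}, not of Lemma~\ref{lemmagotze}. The paper itself does not prove Lemma~\ref{lemmagotze} either; it is quoted verbatim as Theorem~1.5 of \cite{gotze2021concentration} and invoked without argument. So if your intended target was really Lemma~\ref{lemmagotze}, the proposal is simply off-target: a proof would require reproducing (or sketching) the hypercontractivity/iterated-difference machinery of G\"otze--Sambale--Sinulis, not applying the result.

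If, on the other hand, the intended target was Lemma~\ref{componenttovectorlemma} (which your conclusion and all intermediate steps match), then your argument is correct and essentially identical to the paper's: apply Lemma~\ref{lemmagotze} with $D=d=1$, $f(\epsilon)=\langle\alpha,\epsilon\rangle$, $\mathbb E f^{(1)}=\alpha$, to get the sub-Weibull tail, and then convert that tail to a $\psi_q$-norm bound via the layer-cake/Fubini computation (the paper packages this conversion as Lemma~\ref{fubinithm}). Your remark that convexity of $\psi_q$ is not needed for this conversion is accurate and mirrors the paper's treatment of $q\in(0,1)$.
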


The second fact is a slight modification of Lemma 2.2.1 in \cite{vanderVaart1996}.
  \begin{lemma}\label{fubinithm}
Let $X$ be a random variable such that for some $q>0$,
\begin{align}\label{tailfubinilem}
    P(|X|>t)\leq c\exp\big(-\big(\frac{t}{\omega}\big)^q\big),\quad \forall t>0,
\end{align}
 where $\omega>0$ and $c$ is constant,
 then we have $\|X\|_{\psi_q}\lesssim \omega$.
\end{lemma}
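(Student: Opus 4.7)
The plan is to exploit the variational definition $\|X\|_{\psi_q} = \inf\{t > 0 : \mathbb{E}[\exp((|X|/t)^q) - 1] \leq 1\}$ and exhibit an explicit $t$ of order $\omega$ (with the implicit constant depending only on $c$ and $q$) making the defining expectation at most one. Once such a $t$ is produced, the bound $\|X\|_{\psi_q} \lesssim \omega$ follows immediately from the infimum definition.

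My first step is to rewrite the expectation as a tail integral via the layer-cake formula, $\mathbb{E}[\exp((|X|/t)^q) - 1] = \int_0^\infty \mathbb{P}(\exp((|X|/t)^q) - 1 > s)\,ds$, and then perform the change of variables $s = e^u - 1$ (so $ds = e^u\,du$) to obtain the key identity
\[
\mathbb{E}[\exp((|X|/t)^q) - 1] = \int_0^\infty \mathbb{P}(|X| > t u^{1/q})\, e^u\,du.
\]
This identity is precisely the bridge that lets the hypothesized tail decay be inserted directly into the moment.

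Applying \eqref{tailfubinilem} would then bound the right-hand side by $c \int_0^\infty \exp(-u\,[(t/\omega)^q - 1])\,du$. Provided $(t/\omega)^q > 1$ this elementary exponential integral evaluates to $c/[(t/\omega)^q - 1]$, and demanding the ratio to be at most one is equivalent to $t \geq \omega(1+c)^{1/q}$. Taking $t_\ast = \omega(1+c)^{1/q}$ therefore yields $\|X\|_{\psi_q} \leq t_\ast \lesssim \omega$, completing the argument.

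I do not expect a substantive obstacle here: the only point that requires monitoring is the convergence of the exponential integral, which is automatic once $t$ is chosen strictly larger than $\omega$. The statement is a minor variant of Lemma 2.2.1 in van der Vaart and Wellner, whose classical form assumes the tighter tail $\mathbb{P}(|X|>t) \leq 2\exp(-(t/\omega)^q)$; replacing $2$ by a generic constant $c$ leaves the order unaffected and only shifts the multiplicative constant in the final bound from $2^{1/q}$ to $(1+c)^{1/q}$, which is exactly the $\lesssim$ hidden in the conclusion.
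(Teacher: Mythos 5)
Your argument is correct and is essentially the paper's own proof: both reduce $\mathbb{E}[\exp((|X|/t)^q)-1]$ to a tail integral via Fubini/layer-cake (the paper writes $e^{M|X|^q}-1=\int_0^{|X|^q}Me^{Mu}\,du$ with $M=t^{-q}$ and swaps expectation and integral, which after a change of variables is exactly your identity $\int_0^\infty \mathbb{P}(|X|>tu^{1/q})e^u\,du$), insert the hypothesis, and evaluate the resulting exponential integral to get the same admissible choice $t=\omega(1+c)^{1/q}$. No gaps.
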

The proof is straightforward:
 \begin{align*}
    \mathbb{E}\big(\exp(M|X|^q)-1\big)= \, & \mathbb{E}\int^{|X|^q}_{0}M\exp(Mu)\rd u \\ =\, & \int^{\infty}_{0}\mathbb{P}(|X|>u^{\frac{1}{q}})M\exp(Mu)\rd u \\
=\, & \int^{\infty}_{0}cM\exp(Mu-\frac{u}{\omega^q})\rd u\leq \frac{cM}{\omega^{-q}-M}.
\end{align*}
It suffices to take $M^{-1/q}\geq c'\omega$ to have $\mathbb{E} (\exp(M|X|^q)-1)\le 1$. Therefore, $\|X\|_{\psi_q}\lesssim \omega$.

Now, given any $\alpha \in \mathbb{R}^n$, let $f_\alpha(\epsilon)=\langle \alpha,\epsilon \rangle=\sum^n_{i=1}\alpha_i\epsilon_i$. Then (i) $f_\alpha(\epsilon)$  is an 1-degree polynomial of $\epsilon_1,\ldots,\epsilon_n$,  (ii) $\|\mathbb{E}[\nabla f_\alpha(\epsilon)]\|_2=\|\alpha\|_2$ and $\mathbb{E}[\sum^n_{i=1}\alpha_i\epsilon_i]=0$, and (iii) $\epsilon_1,\ldots,\epsilon_n$ are independent and satisfy  $\|\epsilon_i\|_{\psi_q}\leq \omega$. Applying Lemma \ref{lemmagotze} with $D=1, d=1$ yields
 \begin{align}\label{tailbound}
     \mathbb{P}\big(|\sum^n_{i=1}\alpha_i\epsilon_i|>t\big)=\mathbb{P}\big(|f_\alpha(\epsilon)|>t\big)\leq 2 \exp\Big(-C\big(\frac{t}{\omega\|\alpha\|_2}\big)^q\Big),
 \end{align}
 where $C$ is a constant depending on $q$ only.
By Lemma \ref{fubinithm},
\eqref{tailbound} implies $\|\langle \alpha,\epsilon \rangle\|_{\psi_q}\leq C\|\alpha\|_2\omega$, where $C$ is a constant depending on $q$ only. The proof of Lemma \ref{componenttovectorlemma} is complete.

The following lemma is useful for stating the assumptions on effective noises.
\begin{lemma}\label{psinormconvertlemma}

Let $\psi,\varphi$ be any two    nondecreasing nonzero functions  defined on $\mathbb{R}_+$ with $\psi(0)=\varphi(0)=0$ (not necessarily convex). Define
$\varphi^{-1}(t):=\sup\{x\in\mathbb{R}_+:\varphi(x)\leq t\}.
$
and $\psi^{-1}$   similarly.
(i) Suppose that $\psi(\varphi^{-1}(t)/c_{0})$ is concave in $t$ on $\mathbb{R}_+$ for some $c_{0}> \varphi^{-1}(1)/\psi^{-1}(1)$. Then for any random variable $X$,  we have $\|X\|_\psi\leq c_{0}\|X\|_\varphi$.
(ii) Suppose that $\psi(\varphi^{-1}(t)/c_{0})\leq t$ for some $c_{0}>0$, then  $\|X\|_\psi\leq c_{0}\|X\|_\varphi$.
\end{lemma}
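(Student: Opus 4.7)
The plan is to leverage the definition $\|X\|_\psi = \inf\{t>0 : \mathbb{E}\psi(|X|/t)\le 1\}$ and reduce each part to showing that, with $\omega := \|X\|_\varphi$, the quantity $\mathbb{E}\psi(|X|/(c_0\omega))$ is bounded by $1$. In both parts, the key object is the composition $g(t) := \psi(\varphi^{-1}(t)/c_0)$, and the first step common to both parts is the pointwise bound $\psi(y/c_0) \le g(\varphi(y))$ for all $y\ge 0$: this follows from the general fact $\varphi^{-1}(\varphi(y)) \ge y$ (immediate from the $\sup$ definition of $\varphi^{-1}$) together with monotonicity of $\psi$. Setting $Y=|X|/\omega$ and $T=\varphi(Y)$ then yields $\mathbb{E}\psi(Y/c_0)\le \mathbb{E}g(T)$, while the hypothesis $\mathbb{E}\varphi(Y)\le 1$ becomes $\mathbb{E}T\le 1$.

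For part (ii), the hypothesis $g(t)\le t$ on $\mathbb R_+$ directly gives $\mathbb{E}g(T)\le \mathbb{E}T\le 1$, and hence $\mathbb{E}\psi(|X|/(c_0\omega))\le 1$. Taking the infimum in the Orlicz definition yields $\|X\|_\psi \le c_0\omega$. For part (i), $g$ is concave by assumption and nondecreasing (since both $\varphi^{-1}$ and $\psi$ are nondecreasing), so Jensen's inequality gives $\mathbb{E}g(T)\le g(\mathbb{E}T)\le g(1)$. It then remains to verify $g(1) = \psi(\varphi^{-1}(1)/c_0)\le 1$: using the sup definition of $\psi^{-1}$, the strict inequality $c_0 > \varphi^{-1}(1)/\psi^{-1}(1)$ implies $\varphi^{-1}(1)/c_0 < \psi^{-1}(1)$, so there exists $s\ge \varphi^{-1}(1)/c_0$ with $\psi(s)\le 1$, and monotonicity of $\psi$ concludes.

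The main subtlety I expect is not any deep estimate but rather the careful handling of the generalized inverses $\psi^{-1}$, $\varphi^{-1}$ under only monotonicity (no strict increase, no continuity, no convexity). In particular, one must justify $\varphi^{-1}(\varphi(y))\ge y$ and $\psi(\psi^{-1}(x))\le x$ (in the right one-sided sense) from the $\sup$ definition alone, and ensure that $g$ is well-defined and nondecreasing on all of $\mathbb{R}_+$. A secondary subtlety in (i) is that the concavity hypothesis is on $g$ rather than on $\psi$ directly, which is exactly what makes Jensen's inequality applicable even when $\psi$ itself is far from concave — this is where the constant $c_0$ enters and cannot be avoided. Once these monotonicity/inverse bookkeeping steps are in place, both parts collapse to one line of Jensen or a one-line pointwise comparison, so no heavier machinery is needed.
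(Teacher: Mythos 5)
Your proposal is correct and follows essentially the same route as the paper's proof: the same pointwise bound $\psi(y/c_0)\le \psi(\varphi^{-1}(\varphi(y))/c_0)$ from the sup-definition of the generalized inverse, the same composite function ($g$ in your notation, $f$ in the paper's), Jensen applied to that concave nondecreasing composite in part (i), and the direct pointwise comparison $g(t)\le t$ in part (ii). Your justification that $g(1)\le 1$ via the strict inequality $c_0>\varphi^{-1}(1)/\psi^{-1}(1)$ is in fact slightly more careful than the paper's, which asserts $f(1)\le 1$ without spelling out the generalized-inverse argument.
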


We remark that the condition for $c_0$ in part (i) can be replaced by $c_{0}\ge  \varphi^{-1}(1)/\psi^{-1}(1)$ when $\psi$ is continuous at $1$.
For completeness, we provide the proof below.

First, by the definition of $\varphi^{-1}$, $u\leq \varphi^{-1}(\varphi(u))$
 for any $u\geq0$.
Therefore, $X/\|X\|_\varphi\leq \varphi^{-1}\big\{\varphi(X/\|X\|_\varphi)\big\}$, from which it follows that
 \begin{align}\label{lemmabasicineq1}
       \psi\Big(\frac{X}{c\|X\|_{\varphi}}\Big)\leq  \psi\Big[\frac{1}{c}\varphi^{-1}\big\{\varphi(\frac{X}{\|X\|_\varphi})\big\}\Big].
 \end{align}

To prove part (i),
let $f(u)=\psi\big(\varphi^{-1}(u)/c\big)$ with $c>0$ to be determined.
Then $f(u)$ is an increasing function on $\mathbb{R}_+$. (In fact, for any $u\geq u'\geq0$,  $ \varphi^{-1}(u')\leq \varphi^{-1}(u)$, and so
$\psi\{\varphi^{-1}(u')/c\}\leq \psi\{\varphi^{-1}(u)/c\}$.)
From \eqref{lemmabasicineq1}, picking $t=\varphi(X/\|X\|_{\varphi})$ gives
\begin{align}\label{lemmaftineq}
     \psi\Big(\frac{X}{c\|X\|_{\varphi}}\Big)\leq  \psi\Big[\frac{1}{c}\varphi^{-1}\big\{\varphi(\frac{X}{\|X\|_\varphi})\big\}\Big]=f(t).
\end{align}
With $c>\varphi^{-1}(1)/\psi^{-1}(1)$ (or $c\ge \varphi^{-1}(1)/\psi^{-1}(1)$ when $\psi$ is continuous at $1$), we can use   Jensen's inequality to get
 \begin{align}
    \mathbb{E}\Big\{\psi\Big(\frac{X}{c\|X\|_{\varphi}}\Big)\Big\}\leq \mathbb{E}\Big(\psi\Big[\frac{1}{c}\varphi^{-1}\big\{\varphi(\frac{X}{\|X\|_\varphi})\big\}\Big]\Big)=\mathbb{E}\big(f(t)\big)\leq f\big(\mathbb{E} (t)\big)\leq f(1)\leq1.
\end{align}
To prove part (ii), we still set $f(u)=\psi\big(\varphi^{-1}(u)/c\big)$ with $c>0$ and $t=\varphi(X/\|X\|_{\varphi})$. Based on \eqref{lemmabasicineq1},  we get
\begin{align}\label{lemmaineqpartb}
     \psi\big(\frac{X}{c\|X\|_{\varphi}}\big)\leq  f(t)\leq t= \varphi\big(\frac{X}{\|X\|_\varphi}\big).
\end{align}
The proof of Lemma \ref{psinormconvertlemma} is complete.

\subsection{An  Excess Risk Bound}

This part establishes an excess risk bound for location estimation using pivot-blend, which is uniform in scale parameters, shedding light on the impact of asymmetrical scales (skewness) and the presence of an unknown pivotal point. This result is non-asymptotic and non-parametric, making it applicable to various scenarios.

Let  $y_i\in\mathbb{R}$ and $X_i\in \mathbb{R}^p$ which satisfy satisfy $(X_i,y_i)\overset{i.i.d.}{\sim} F^{*}$, where    $F^{*}$ is a  distribution  that  depends on  $\beta^*,m^*,\sigma^*,\nu^*$, where we use superscript $^*$ to denote the statistical truth.  As discussed in Section \ref{subsec:exts}, we consider the practice of estimating the scales beforehand and then minimizing a  criterion over all location parameters $\beta_j (1\le j\le p), m$.  For ease of presentation, given $(\sigma,\nu)$, let $  l_{\sigma,\nu}(\cdot;X_i,y_i)$, imposed on $   \theta:=(\beta,m)$, denote a loss motivated by skewed pivot-blend:
\begin{align}\label{globallossdef}
\begin{split}
    l_{\sigma,\nu}(\theta;X_i,y_i):=\,&\, \rho\big(\frac{r_i-m}{\sigma}+m\big)1_{r_i-m\leq 0}+\rho\big(\frac{r_i-m}{\nu}+m\big)1_{r_i-m>0}\\&+\cc\log\big[\sigma\Phi(m)+\nu\{1-\Phi(m)\}\big] \quad \mbox{with} \quad r_i=y_i-X^T_i\beta,
\end{split}
\end{align}
where the calibration parameter $\cc>0$  and $0\le \Phi(m)\le 1$. Note that  $\Phi$ is not necessarily directly associated with $\rho$.
 Define
$\hat\theta_{\sigma,\nu}$ by   empirical risk minimization
\begin{align}\label{erm}
    \hat\theta_{\sigma,\nu}=\argmin_{\theta }\sum^{n}_{i=1}l_{\sigma,\nu}(\theta;X_i,y_i).
\end{align}
Certainly, opting for different values of $\sigma$ and $\nu$  produces diverse  asymmetric losses and influences the overall risk.  In our analysis, \textit{no} restrictions will be placed on $\sigma$ and $\nu$. The values of $(\sigma,\nu)$ can be specified based on domain knowledge or determined in a data-dependent manner. For notational simplicity,  we sometimes  abbreviate $\hat\theta_{\sigma,\nu}$ as $\hat\theta$ when there is no ambiguity.

 To evaluate the generalization performance of   $\hat\theta$,  let $(X_0,y_0)$ be a new observation that follows $F^{*}$ but is independent of the training data $(X_i,y_i),(1\leq i\leq n)$, and define the population risk of $\hat\theta$ by
\begin{align}\label{riskdef}
    R_{\sigma,\nu}(\hat\theta):=\mathbb{E}_{(X_0,y_0)}\big[l_{\sigma,\nu}(\hat\theta;X_0,y_0)\big],
\end{align}
where the expectation is taken with respect to the new observation $(X_0,y_0)$ only.
Due to the finite number of observations  in estimation, $R_{\sigma,\nu}(\hat\theta_{\sigma,\nu})$ is always greater than the population risk of the ideal $\theta^*_{\sigma,\nu}=\argmin_{\theta} R_{\sigma,\nu}(\theta)$, or $ R_{\sigma,\nu}(\theta^*_{\sigma,\nu})=\inf_{\theta\in \Omega}R_{\sigma,\nu}(\theta)$.
In such a setup, the notion of  \textbf{excess risk} $\mathcal{E}(\hat\theta_{\sigma,\nu};\sigma,\nu)$
is helpful \citep{alma991014707539705251}:
 \begin{align}
  \mbox{Excess Risk:}\quad   \mathcal{E}(\hat\theta_{\sigma,\nu};\sigma,\nu)\,:=\,R_{\sigma,\nu}(\hat\theta_{\sigma,\nu})-R_{\sigma,\nu}(\theta^*_{\sigma,\nu}).
 \end{align}

Our main objective is to establish a non-asymptotic bound for $\mathcal{E}(\hat\theta_{\sigma,\nu};\sigma,\nu)$ regardless of the data distribution for a broad range of $\rho$  that satisfy the following assumption.

$\mathrm{ASSUMPTION}$  $\mathcal{A}$: Assume that the loss $\rho$ satisfies (i) $\rho$ is bounded: $\rho\in [0,\rhob]$ for some $\rhob>0$, and (ii) $\rho$ is regular in the sense that $\rho$ is  piecewise polynomial   on $\kk\geq1$ intervals   $\rho(t)=P_i(t), \, \forall t\in [u_{i-1},u_i),\, i=1,\ldots,\kk,$
where $u_0=-\infty,u_{\kk}=\infty$, and each polynomial function $P_i$ has degree at most $\dd\geq 0$.

Assumption $\mathcal{A}$ encompasses a wide range of practically used loss functions in robust regression and classification, specifically designed to handle extreme outliers.  For example, some loss functions like   $\rho(t)=\int_{0}^{|t|}\psi(s)\rd s$, with a redescending  $\psi$,  such as  Tukey's bisquare $\psi(t)=t\{1-(t/c)^2\}^2$ if $|t|\leq c$, and $0$ otherwise \citep{hampel2011robust}, fit in the category.   Some other $\rho$ functions can be effectively approximated by piecewise polynomial functions.

\begin{theorem}\label{riskbound}
As long as   the loss $\rho$ satisfies Assumption $\mathcal{A}$,
  the   estimator $\hat\theta_{\sigma,\nu}$ defined in \eqref{erm}
 satisfies the following probabilistic bound for all  $\sigma,\nu>0$,
{ \small
 \begin{align*}
  \mathbb{P}&\left\{  \sup_{\sigma,\nu>0}\, \mathcal{E}(\hat\theta_{\sigma,\nu};\sigma,\nu)-C\Bigg[\right.\frac{\rhob\sqrt{p\log\big\{(\kk+1)(\dd+1)\big\}}}{\sqrt{n}}+\frac{(\rhob\lor \cc)\big\{\log\frac{\sigma\lor\nu}{\sigma\land\nu}+|\log(\sigma\land\nu)| \big\}}{\sqrt{n}}\times\nonumber
\\
&\quad\quad\,\,\,\,\,\bigg\{ \sqrt{\log\Big(\log\frac{\sigma\lor\nu}{\sigma\land\nu}\lor|\log(\sigma\land\nu)|\lor1\Big)} + \sqrt{\log\frac{1}{\epsilon}}\bigg\}\left.\Bigg]\leq 0\right\}\geq 1-\epsilon.
\end{align*}}
\normalsize
\end{theorem}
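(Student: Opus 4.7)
The plan is to bound the excess risk by a uniform deviation of the empirical process, and then to stratify the scale plane $(0,\infty)^2$ into dyadic cells to handle the unbounded range of $(\sigma,\nu)$. First, by the standard basic inequality for ERM, for each fixed $(\sigma,\nu)$,
\begin{equation*}
\mathcal{E}(\hat\theta_{\sigma,\nu};\sigma,\nu)\le 2\sup_{\theta}\bigl|R_{\sigma,\nu}(\theta)-\hat R_{\sigma,\nu}(\theta)\bigr|,
\end{equation*}
so $\sup_{\sigma,\nu}\mathcal{E}(\hat\theta_{\sigma,\nu};\sigma,\nu)\le 2\sup_{\sigma,\nu,\theta}|R_{\sigma,\nu}(\theta)-\hat R_{\sigma,\nu}(\theta)|$. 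The normalizing piece $\cc\log[\sigma\Phi(m)+\nu\{1-\Phi(m)\}]$ is deterministic given $\theta$, hence cancels exactly in $R-\hat R$, so only the two $\rho$-summands, taking values in $[0,2\rhob]$, contribute to variance and bounded-differences terms.

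For fixed $(\sigma,\nu)$, I would bound the complexity of $\{l_{\sigma,\nu}(\theta;\cdot,\cdot):\theta\in\mathbb{R}^{p+1}\}$ via pseudo-dimension. Under Assumption~$\mathcal A$, each evaluation amounts to (i) one affine half-space test $1_{y-x^T\beta\le m}$ and (ii) a $\kk$-piece polynomial $\rho$ of degree at most $\dd$ composed with an affine function of $(\beta,m,x,y)$. Warren's sign-pattern theorem applied through the Goldberg--Jerrum / Anthony--Bartlett counting argument then gives a pseudo-dimension of order $p\log\{(\kk+1)(\dd+1)\}$. Combined with the $[0,2\rhob]$ envelope and Dudley's entropy integral (or a direct VC/Rademacher bound), this yields $\mathbb{E}\sup_\theta|R_{\sigma,\nu}(\theta)-\hat R_{\sigma,\nu}(\theta)|\lesssim \rhob\sqrt{p\log\{(\kk+1)(\dd+1)\}/n}$, producing the first summand in the theorem.

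To extend uniformity to all $(\sigma,\nu)>0$, I would peel via dyadic rectangles $\mathcal{C}_{j,k}=[2^j,2^{j+1})\times[2^k,2^{k+1})$ with $(j,k)\in\mathbb{Z}^2$. Within each cell the pseudo-dimension bound of the previous step is preserved up to a universal constant (adding two bounded real parameters changes the order by at most a constant), while the $L^\infty$-envelope of $l_{\sigma,\nu}(\theta;\cdot)$ as $(\sigma,\nu)$ ranges over $\mathcal{C}_{j,k}$ grows like $\rhob+\cc\{\log(\sigma\lor\nu/\sigma\land\nu)+|\log(\sigma\land\nu)|\}\asymp(\rhob\lor\cc)(|j|+|k|+1)$, because one must cover the $\cc\log[\cdot]$ contribution in sup norm when discretizing $(\sigma,\nu)$ across the cell. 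Applying Talagrand's functional Bernstein inequality with this envelope then gives, with probability at least $1-\delta_{j,k}$,
\begin{equation*}
\sup_{(\sigma,\nu)\in\mathcal{C}_{j,k},\,\theta}|R-\hat R|\,\lesssim\,\frac{\rhob\sqrt{p\log\{(\kk+1)(\dd+1)\}}}{\sqrt n}+\frac{(\rhob\lor\cc)(|j|+|k|+1)}{\sqrt n}\sqrt{\log(1/\delta_{j,k})}.
\end{equation*}
Choosing summable weights $\delta_{j,k}=c\epsilon/[(1+|j|)(1+|k|)]^2$ with $\sum_{j,k}\delta_{j,k}\le\epsilon$, and noting that any $(\sigma,\nu)>0$ lies in a cell with $|j|+|k|\asymp \log(\sigma\lor\nu/\sigma\land\nu)+|\log(\sigma\land\nu)|$, a union bound converts $\log(1/\delta_{j,k})=\log(1/\epsilon)+2\log\{(1+|j|)(1+|k|)\}$ into exactly the claimed factor $\sqrt{\log(\log(\cdot)\lor 1)}+\sqrt{\log(1/\epsilon)}$ in the second summand.

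The principal obstacle lies in Step~3: carefully executing the dyadic peeling so that the $L^\infty$-envelope growth on each cell delivers the $(\rhob\lor\cc)$ prefactor with the precise logarithmic scale terms, and selecting the summable weights $\delta_{j,k}$ so that the iterated-logarithm-style correction $\sqrt{\log(\log(\cdot)\lor 1)}$ emerges as stated rather than a larger $\sqrt{\log(|j|+|k|)}$ polynomial. A secondary technical hurdle is sharpening the Goldberg--Jerrum pseudo-dimension constant to land exactly at $p\log\{(\kk+1)(\dd+1)\}$, tracking how the half-space indicator and the $\kk$-piece polynomial structure combine without incurring additional factors.
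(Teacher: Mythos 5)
Your proposal follows the same architecture as the paper's proof: the ERM basic inequality reduces the excess risk to a uniform deviation $2\sup_\theta|R_{\sigma,\nu}(\theta)-R^{(n)}_{\sigma,\nu}(\theta)|$; the fixed-scale complexity term $\rhob\sqrt{p\log\{(\kk+1)(\dd+1)\}/n}$ is obtained from the piecewise-polynomial structure of $\rho$ via Goldberg--Jerrum-type sign-pattern counting (the paper routes this through a VC bound for a two-unit network with piecewise-polynomial activation, i.e.\ the same machinery, followed by Dudley's entropy integral); and uniformity in the scales is obtained by peeling with summable weights, which is exactly what produces the $\sqrt{\log(\log(\cdot)\lor 1)}$ correction. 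The only structural differences are cosmetic: the paper peels in the single scalar $\tau=|\log(\sigma\land\nu)|+\log\frac{\sigma\lor\nu}{\sigma\land\nu}$ over unit intervals with weights $(l+1)^{-2}$, whereas you peel in two-dimensional dyadic cells with weights $[(1+|j|)(1+|k|)]^{-2}$ (equivalent up to constants, since $|j|+|k|\asymp\tau$ on the cell $\mathcal{C}_{j,k}$), and the paper uses McDiarmid plus symmetrization where you invoke Talagrand.

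One point needs repair, because as written your argument is internally inconsistent. In your first step you observe that the normalizing term $\cc\log[\sigma\Phi(m)+\nu\{1-\Phi(m)\}]$ is data-free and therefore cancels exactly in $R_{\sigma,\nu}(\theta)-R^{(n)}_{\sigma,\nu}(\theta)$, so that ``only the two $\rho$-summands contribute'' to the variance and bounded-differences terms; but in your third step the entire scale-dependent prefactor $(\rhob\lor\cc)(|j|+|k|+1)$ is derived from the sup-norm of precisely that normalizing term over the cell. These two claims cannot both be operative. If you center the class as in step 1, the envelope on every cell is $O(\rhob)$ and your Talagrand/peeling step does not produce the stated second summand in the way you describe (it would produce something smaller, which still implies the theorem, but then the uniformity over $(\sigma,\nu)$ must rest entirely on a complexity bound for the class with $(\sigma,\nu)$ free, and your one-line assertion that two extra real parameters cost only a constant needs the Goldberg--Jerrum count to absorb the rational, not polynomial, dependence of $(r-m)/\sigma+m$ on $\sigma$). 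If instead you keep the normalizing term in the loss, as the paper does, then the envelope $\rhob+\cc\{|\log(\sigma\land\nu)|+\log\frac{\sigma\lor\nu}{\sigma\land\nu}\}$ is genuine and the peeling delivers exactly the claimed bound, but the cancellation remark must then be withdrawn from the bounded-differences accounting. Either resolution yields the theorem; you should commit to one.
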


The theorem provides a bound for  the excess risk $\mathcal{E}(\hat\theta_{\sigma,\nu};\sigma,\nu)$ that holds   uniformly in $\sigma$ and $\nu$ with probability at least $1-\epsilon$, as characterized by the following rate
\begin{align}\label{riskboundorderterm}
   &\frac{\rhob\sqrt{p\log\big\{(\kk+1)(\dd+1)\big\}}}{\sqrt{n}}+\frac{\rhob\lor\cc\big\{\log\frac{\sigma\lor\nu}{\sigma\land\nu}+|\log(\sigma\land\nu)| \big\}}{\sqrt{n}}\times\bigg\{\sqrt{\log\frac{1}{\epsilon}}+\nonumber\\&\quad\sqrt{\log\Big(\log\frac{\sigma\lor\nu}{\sigma\land\nu}\lor|\log(\sigma\land\nu)|\lor1\Big)}\bigg\}.
\end{align}
The first term in \eqref{riskboundorderterm} illustrates the influence of loss complexity   and problem dimensions on the excess risk. The second term, which incorporates both $\sigma$ and $\nu$, arises due to the  pivot estimation. Clearly, when there is no skewness, $
    \sigma=\nu  \Rightarrow  \log\frac{\sigma\lor\nu}{\sigma\land\nu}=0
$ and the rate becomes
$$
   \frac{\rhob\sqrt{p\log\big\{(\kk+1)(\dd+1)\big\}}}{\sqrt{n}}+\frac{(\rhob+\cc) |\log(\sigma\land\nu)|  }{\sqrt{n}}\times\bigg\{  \sqrt{\log (1+|\log(\sigma\land\nu)| )}+\sqrt{\log\frac{1}{\epsilon}}\bigg\}.
$$
 In the more practical scenario of unequal scales, the risk for location estimation can significantly increase, and the provided bound quantitatively characterizes how skewness inflates the risk non-asymptotically.

To prove the theorem, we first introduce a basic excess-risk bound  for fixed $\sigma,\nu>0$.
\begin{lemma}\label{fixsigmanubound}
Suppose that the loss $\rho$ satisfies Assumption $\mathcal{A}$.  Fixing the values of $\sigma,\nu>0$ in \eqref{globallossdef},  the corresponding estimator $\hat\theta_{\sigma,\nu}$ from \eqref{erm} satisfies
\begin{align}\label{fixriskbound}
     \mathcal{E}(\hat\theta_{\sigma,\nu};\sigma,\nu)\lesssim  \rhob\sqrt{\frac{p\log\big\{(\kk+1)(\dd+1)\big\}}{n}}+\frac{\rhob\lor\cc\big(|\log\sigma\wedge \nu|+\log\frac{\nu\vee\sigma}{\sigma\wedge \nu}\big)}{\sqrt{n}}\sqrt{\log\frac{1}{\epsilon}},
\end{align}
with probability
 at least $1-\epsilon$.

\end{lemma}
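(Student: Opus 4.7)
The plan is a textbook empirical risk minimization (ERM) analysis applied to the skewed pivot-blend loss, with careful tracking of how the fixed scales $\sigma,\nu$ and the pivot $m$ inflate the loss envelope. The starting point is the one-line inequality
\begin{align*}
\mathcal{E}(\hat\theta_{\sigma,\nu};\sigma,\nu) \;\le\; 2\sup_{\theta}\big|R_n(\theta)-R(\theta)\big|,
\end{align*}
which follows from $R_n(\hat\theta)\le R_n(\theta^*)$ and the triangle inequality. Everything therefore reduces to a uniform empirical-process bound at fixed $(\sigma,\nu)$, and I would obtain this in two steps: concentration around the mean, followed by a complexity bound on the mean.

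The envelope is the engine of the concentration. Since $\rho\in[0,\rhob]$ and the two indicators in \eqref{globallossdef} partition the real line, the data-dependent piece of $l_{\sigma,\nu}$ lies in $[0,\rhob]$. The normalizing term $\cc\log[\sigma\Phi(m)+\nu(1-\Phi(m))]$ is data-free, and because $\Phi(m)\in[0,1]$ its argument lies in $[\sigma\wedge\nu,\sigma\vee\nu]$, so
\begin{align*}
\|l_{\sigma,\nu}(\theta;\cdot,\cdot)\|_{\infty}\;\le\; M \;:=\; \rhob + \cc\big\{|\log(\sigma\wedge\nu)| + \log\tfrac{\sigma\vee\nu}{\sigma\wedge\nu}\big\}.
\end{align*}
McDiarmid's bounded-differences inequality applied to $\sup_{\theta}|R_n(\theta)-R(\theta)|$ (whose coordinate differences are at most $2M/n$) then gives $\sup_{\theta}|R_n-R|\le \mathbb{E}\sup_{\theta}|R_n-R| + M\sqrt{\log(1/\epsilon)/(2n)}$ with probability $\ge 1-\epsilon$, which is exactly the $(\rhob\lor\cc(\cdots))\sqrt{\log(1/\epsilon)/n}$ term in \eqref{fixriskbound}. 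For the expected supremum I would invoke symmetrization; because the normalizing term depends only on $\theta$, it contributes at most $O(M/\sqrt{n})$ to the Rademacher average (absorbable into the concentration step), so only the data-dependent subclass, with envelope $\rhob$, needs further work.

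The remaining and most delicate step is bounding the Rademacher complexity of the class $\{\rho((r-m)/\sigma+m)\mathbf{1}_{r\le m}+\rho((r-m)/\nu+m)\mathbf{1}_{r>m}:\theta=(\beta,m)\}$ with $r=y-x^{\top}\beta$. By Assumption $\mathcal{A}$, for each sample $(x_i,y_i)$ this summand is piecewise polynomial in $(\beta,m)\in\mathbb{R}^{p+1}$ of degree at most $\dd$, with pieces cut out by at most $2\kk+1$ affine half-spaces (the $\kk-1$ interior breakpoints of $\rho$ transported through each of the two affine maps, together with the pivot split $r\le m$). Classical polynomial sign-pattern counts (Warren-Milnor and the standard VC consequences for piecewise polynomial classes) then yield a VC-subgraph dimension of order $p\log\{(\kk+1)(\dd+1)\}$ for this class, and Dudley's entropy integral produces $\mathbb{E}\sup_{\theta}|R_n-R|\lesssim \rhob\sqrt{p\log\{(\kk+1)(\dd+1)\}/n}$, matching the first term of \eqref{fixriskbound}. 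The crux, and the main obstacle, is the combinatorial bookkeeping that keeps the complexity linear in $p$ and only logarithmic in $\kk,\dd$: one has to argue that the cells generated across the $n$ samples by the pivot splits and all transported breakpoints have polynomial growth with exponent $O(p\log\{(\kk+1)(\dd+1)\})$, rather than letting each of the $\kk$ polynomial pieces contribute its own dimension. Once this polynomial-discrimination count is established, the remainder of the proof is routine maximal-inequality machinery.
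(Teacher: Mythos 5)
Your proposal is correct and follows essentially the same route as the paper's proof of Lemma \ref{fixsigmanubound}: the same reduction $\mathcal{E}\le 2\sup_\theta|R_n-R|$, the same envelope $\rhob+\cc\{|\log(\sigma\wedge\nu)|+\log\frac{\sigma\vee\nu}{\sigma\wedge\nu}\}$ fed into McDiarmid, symmetrization with the data-free normalizing term handled separately at order $\cc\log\frac{\sigma\vee\nu}{\sigma\wedge\nu}/\sqrt{n}$, and Dudley's entropy integral over a VC-subgraph class of piecewise polynomials. The only difference is cosmetic: the combinatorial step you flag as the crux (getting VC dimension of order $p\log\{(\kk+1)(\dd+1)\}$) is exactly where the paper, after an SVD reduction to rank $r\lesssim p$, invokes the Goldberg--Jerrum-type neural-network VC bound (Lemma \ref{networkthm}) in place of your proposed direct Warren--Milnor cell count, and these deliver the same bound.
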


\begin{proof}
 Let $\{X,y\}$ denote the training data, i.e., $\{X,y\}:=\big\{(X_i,y_i),1\leq i\leq n\big\}$ with $(X_i,y_i)\overset{i.i.d.}{\sim}F^{*}$.
Given $\sigma,\nu>0$, define a function class consisting of all $l_{\sigma,\nu}(\theta;\cdot)$   $\theta\in \Omega=\mathbb R^p\times \mathbb R$
  \begin{align}
    \mathcal{L}_{\sigma,\nu}(\Omega):=\big\{l_{\sigma,\nu}(\theta;\cdot):\theta\in \Omega\big\}.
  \end{align}
 For  simplicity, we often use the shorthand notations
 $R(\cdot)$, $R^{(n)}(\cdot)$, and $l(\theta;\cdot)$ to denote
 $R_{\sigma,\nu}(\cdot)$, $(1/n)\sum^{n}_{i=1}l_{\sigma,\nu}(\cdot;X_i,y_i)$, and $l_{\sigma,\nu}(\theta;\cdot)$, respectively, when there is no ambiguity.

 First,  the standard bound for excess risk through uniform laws yields\small
\begin{align}\label{symmetrizationbound}
    R(\hat\theta_{\sigma,\nu})-R(\theta^*_{\sigma,\nu})\leq &\, R(\hat\theta_{\sigma,\nu})-R^{(n)}(\hat\theta_{\sigma,\nu})+R^{(n)}(\hat\theta_{\sigma,\nu})-R^{(n)}(\theta^*_{\sigma,\nu})+R^{(n)}(\theta^*_{\sigma,\nu})-R(\theta^*_{\sigma,\nu})\nonumber\\
    \leq &\,
    2\sup_{\theta\in\Omega}\big|R(\theta)-R^{(n)}(\theta)\big|  =2\sup_{l\in\mathcal{L}_{\sigma,\nu}(\Omega)}\big|\mathbb{P}_nl-\mathbb{P}l\big|=2\big\|\mathbb{P}_n-\mathbb{P}\big\|_{\mathcal{L}_{\sigma,\nu}(\Omega)},
\end{align}
\normalsize
where $\mathbb{P}$ is the distribution $F^*$ and $\mathbb{P}_n$ is the empirical measure that places  probability mass $1/n$ on each $(X_i,y_i),1\leq i\leq n$.

Let
\begin{align}
    g_{\sigma,\nu}(m):=\cc\log\big[\sigma\Phi(m)+\nu\{1-\Phi(m)\}\big].
\end{align}
Without loss of generality, we assume that $\sigma\leq \nu$, that is, $\sigma\land\nu=\sigma,\, \sigma\lor\nu=\nu$, then
\begin{align}
    g_{\sigma,\nu}(m)=\cc\log\sigma+\cc\log\big[1+(\frac{\nu}{\sigma}-1)\{1-\Phi(m)\}\big].
\end{align}
 Because $\log(1+(\frac{\nu}{\sigma}-1)t)$ is an increasing function for $t\geq0$, we know
\begin{align}
  \cc\log\sigma\leq   g_{\sigma,\nu}(m)\leq \cc\log\sigma+\cc\log\frac{\nu}{\sigma},
\end{align}
from which it follows that
\begin{align}\label{gbound}
  |g_{\sigma,\nu}(m)|\leq \big|\cc\log\sigma+\cc\log\frac{\nu}{\sigma}\big|\lor |\cc\log\sigma|\leq |\cc\log\sigma|+\cc\log\frac{\nu}{\sigma}.
\end{align}
By Assumption $\mathcal{A}$ and \eqref{gbound},
\begin{align}\label{lbounded}
  |l_{\sigma,\nu}|\leq|\rho|+|g_{\sigma,\nu}|\leq \rhob+\cc\big(|\log\sigma|+\log\frac{\nu}{\sigma}\big).
\end{align}

Due to \eqref{lbounded} and $(X_i,y_i)\overset{i.i.d.}{\sim}F^{*}$,
applying   McDiarmid's inequality and symmetrization in  empirical process theory  yields a data-dependent bound with probability at least $1-\epsilon$,
\begin{align}\label{boundeddifferencebound}
    \big\|\mathbb{P}_n-\mathbb{P}\big\|_{\mathcal{L}_{\sigma,\nu}(\Omega)}\leq 2\mathcal{R}_{X,y}(\mathcal{L}_{\sigma,\nu}(\Omega))+6\Big\{\rhob+\cc\big(|\log\sigma|+\log\frac{\nu}{\sigma}\big)\Big\}\sqrt{\frac{\log(3/\epsilon)}{2n}},
\end{align}
where $\mathcal{R}_{X,y}(\mathcal{L}_{\sigma,\nu}(\Omega))$ is the empirical Rademacher complexity of $\mathcal{L}_{\sigma,\nu}(\Omega)$ with respect to the training data $\{X,y\}$:
\begin{align}\label{originalrademacher}
   \mathcal{R}_{X,y}(\mathcal{L}_{\sigma,\nu}(\Omega)) :=\frac{1}{n}\mathbb{E}_\epsilon\sup_{l\in\mathcal{L}_{\sigma,\nu}(\Omega)}\sum^n_{i=1}\epsilon_i l(\theta;X_i,y_i),
\end{align}
and $\epsilon_i$'s are i.i.d. Rademacher random variables;
see, e.g., Theorem 3.4.5 in \cite{gine2021mathematical}. Note that the  expectation in \eqref{originalrademacher} is taken with respect to $\epsilon$ only, and \eqref{originalrademacher} depends on $\sigma,\nu$  through  the function class $\mathcal{L}_{\sigma,\nu}(\Omega)$.

It remains to bound the empirical Rademacher complexity.
Toward this, denote by  $\nd_\sigma,\nd_\nu$  two augmented design matrices
\begin{align}
    \nd_\sigma=-\frac{1}{\sigma}[X, (1-\sigma)1_n],\quad \nd_\nu=-\frac{1}{\nu}[X, (1-\nu)1_n],
\end{align}
where $1_n$ is a column vector of $n$ ones.
Let
\begin{align}
    \alpha_\sigma=\frac{1}{\sigma}y,\quad  \alpha_\nu=\frac{1}{\nu}y.
\end{align}
Suppose that  $\mathrm{rank}(\nd_\sigma)\leq r$ and $\mathrm{rank}(\nd_\nu)\leq r$.  By the singular value decomposition,
\begin{align}
    \nd_\sigma=U_\sigma D_{\sigma}V^T_\sigma,\quad\nd_\nu=U_{\nu}D_\nu V^T_\nu,
\end{align}
 where $U_\sigma,U_\nu$ are orthogonal matrices with   $r$ columns: $U^T_{\sigma}U_{\sigma}=I_{r\times r},U^T_{\nu}U_{\nu}=I_{r\times r}$.
Define
\begin{align}\label{Udef}
    \bar{U}_{\sigma}=[U_{\sigma},\alpha_\sigma],\quad \bar{U}_{\nu}=[U_{\nu},\alpha_\nu].
\end{align}
Now, by the sub-additivity of sup and \eqref{Udef},
\begin{align}\label{newradecomplex bound}
       &\,\mathcal{R}_{X,y}(\mathcal{L}_{\sigma,\nu}(\Omega)) \nonumber\\\leq &\, \frac{1}{n}\mathbb{E}_\epsilon\sup_{\theta\in\Omega}\langle\epsilon, \rho(\nd_\sigma\theta+\alpha_\sigma)\rangle+\frac{1}{n}\mathbb{E}_\epsilon\sup_{\theta\in\Omega}\langle\epsilon, \rho(\nd_\nu\theta+\alpha_\nu)\rangle+ \frac{1}{n}\mathbb{E}_\epsilon\sup_{m\in\mathbb{R}}\langle\epsilon, g_{\sigma,\nu}(1m)\rangle\nonumber\\\leq &\, \frac{1}{n}\mathbb{E}_\epsilon\sup_{\xi\in\mathbb{R}^{r+1}}\langle\epsilon, \rho(\bar{U}_{\sigma}\xi)\rangle+\frac{1}{n}\mathbb{E}_\epsilon\sup_{\xi\in\mathbb{R}^{r+1}}\langle\epsilon, \rho(\bar{U}_\nu\xi)\rangle+ \frac{1}{n}\mathbb{E}_\epsilon\sup_{m\in\mathbb{R}}\langle\epsilon, g_{\sigma,\nu}(1m)\rangle.
\end{align}

To  bound the first term in \eqref{newradecomplex bound},
 let
\begin{align}
    Z(X,y,\sigma)=\bar{U}_{\sigma}=[Z_1,\cdots,Z_n]^T,
\end{align}
and given $\rho$, define a class of functions
\begin{align}
    \mathcal{F}:=\big\{\rho(\langle\xi,\cdot\rangle):\xi\in\mathbb{R}^{r+1}\big\},
\end{align}
which does not depend on  the two scale parameters.
Given $Z(X,y,\sigma)$,  let $\mathbb{Q}_n$ be the empirical measure determined by $Z_i$, $1\leq i\leq n$,
 and
\begin{align}
    \|f-\tilde{f}\|^2_{\mathbb{Q}_n}:=\frac{1}{n}\sum^n_{i=1}\big\{f(Z_i)-\tilde{f}(Z_{ i})\big\}^2,\quad \forall f,\tilde{f}\in \mathcal{F}.
\end{align}
Then
\begin{align}\label{rademacherequivalence}
    \frac{1}{n}\mathbb{E}_\epsilon\sup_{\xi\in\mathbb{R}^{r+1}}\langle\epsilon, \rho(\bar{U}_{\sigma}\xi)\rangle=\frac{1}{\sqrt{n}}\mathbb{E}_\epsilon\sup_{f\in\mathcal{F}}\frac{1}{\sqrt{n}}\sum^n_{i=1}\epsilon_i f(Z_i)\leq \frac{C}{\sqrt{n}}\int^{\rhob}_{0}\sqrt{\log  \mathcal{N}(\varepsilon,\mathcal{F},\|\cdot\|_{\mathbb{Q}_n})}\rd \varepsilon,
\end{align}
where the last inequality is due to Dudley's integral bound. Because $|f|\leq \rhob,\, \forall f\in \mathcal{F}$, by Theorem 2.6.7 in \cite{vanderVaart1996} we know
\begin{align}\label{coveringnumberf}
    \mathcal{N}(\varepsilon,\mathcal{F},\|\cdot\|_{\mathbb{Q}_n})\leq C\vc(\mathcal{F})\big(\frac{c\rhob}{\varepsilon}\big)^{2\vc(\mathcal{F})},
\end{align}
where $\vc(\mathcal{F})$  denotes the VC-dimension  of  $\mathcal{F}$  defined through the notion of subgraph (cf. Definition 3.6.8 in \cite{gine2021mathematical}).
In more details,
$\vc(\mathcal{F})$ is the VC-dimension of the following $\{0,1\}$-valued function class defined on $\mathbb{R}^{r+1}\times \mathbb{R}$:
\begin{align}
    \mathcal{H}:=\big\{h_{\xi}(z,t)=1_{\rho(\langle\xi,z\rangle)> t}=\sign(\rho(\langle\xi,z\rangle)- t):\xi\in\mathbb{R}^{r+1}\big\}.
\end{align}
Here,  the sign function is defined by $\sign(a)=1$ if $a>0$, and $0$ otherwise, and recall that
 $1_{\rho(\langle\xi,z\rangle)> t}$ is the indicator function of the set $\big\{(z,t):\rho(\langle\xi,z\rangle)> t,z\in\mathbb{R}^{r+1},t\in\mathbb{R}\big\}$ or the subgraph of $\rho(\langle\xi,\cdot\rangle)$ for each given $\xi$.

To bound $\vc(\mathcal{H})$,
 we introduce
a more general function class $\tilde{\mathcal{H}}$
\begin{align}\label{allsubdef}
 \tilde{\mathcal{H}}:=\big\{\tilde{h}_{\xi,\omega_1,\omega_2}(z,t)=\sign\big(\omega_1\rho(\langle\xi,z\rangle)+\omega_2 t\big):\xi\in\mathbb{R}^{r+1},\omega_1,\omega_2\in\mathbb{R}\big\},
\end{align}
 where $\tilde{h}_{\xi,\omega_1,\omega_2}(z,t)$  is defined on $(z,t)\in\mathbb{R}^{r+1}\times \mathbb{R}$ and has two additional parameters $\omega_1$ and $\omega_2$ than $h_{\xi}(z,t)$ in the definition of $\mathcal{H}$.
  Since $\mathcal{H}\subset   \tilde{\mathcal{H}}$ (by fixing $\omega_1=1,\omega_2=-1$),
\begin{align}\label{networkvcdim}
     \vc(\mathcal{F})= \vc(\mathcal{H})\leq \vc( \tilde{\mathcal{H}}).
\end{align}
 $\tilde{\mathcal{H}}$ corresponds to the set of functions computed by a
   neural network $\mathbf{N}$ shown in Figure \ref{networkfig}.
   It has two computation units, one in the   hidden layer with activation function $\rho$, and the other  in the output layer applying a sign operation on the combined inputs  $\omega_1\rho(\langle\xi,z\rangle)+\omega_2 t$.
Lemma \ref{networkthm} is Theorem 10 in \cite{bartlett2019nearly} and can be  proved based on Theorem 2.2 of  \cite{goldberg1995bounding}.

 \begin{figure}[h!]
         \centering
         \includegraphics[width=0.55\textwidth]{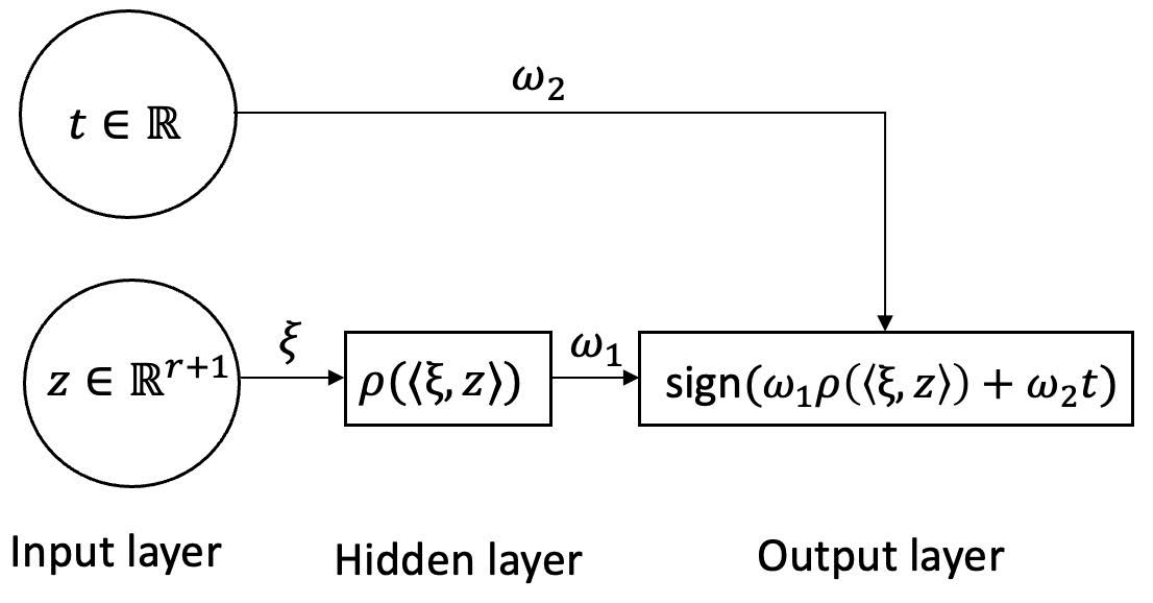}
        \caption{\small Architecture of the  network $\mathbf{N}$ that computes $\tilde{h}_{\xi,\omega_1,\omega_2}(z,t)\in \tilde{\mathcal{H}}$.\normalsize}
        \label{networkfig}
\end{figure}

 \begin{lemma}\label{networkthm}
 Suppose that a neural network $\mathbf{N}_0$  satisfies (i) $\mathbf{N}_0$ has a directed acyclic graph, that is, the connections from input or computation units to computation units  do not form any loops, (ii)
the unique output unit is the only computation unit in the output layer (layer $J$), where $J\geq 2$ denotes the length of the longest path in the graph of $\mathbf{N}_0$, and the activation function of the output unit is a sign function that  takes inputs from units in any layer $j<J$, including the input layer (layer $0$), and
(iii) within each computation unit except the output unit, $\Psi$ is the  activation function  and  is piecewise polynomial  on $I\geq 1$ intervals $\Psi(t)=P_i(t),\,\forall t\in[u_{i-1},u_{i}),i=1,\ldots,I$, where $u_0=-\infty,u_I=\infty$, and each polynomial function $P_i$ has degree at most $M\geq0$. Let  $W\geq 2$  be the number of parameters (weights and biases) and $U\geq 2$ be the number of computation units.
   Let $\mathcal{G}$  denote  the set of $\{0,1\}$-valued functions computed by the  network $\mathbf{N}_0$, then $\vc(\mathcal{G})\leq 2W\log_2\big[16e\{M^{U-1}+\sum^{U-1}_{i=0}M^i\}(1+I)^{U}\big]$.
 \end{lemma}

 Our network $\mathbf{N}$ as shown in Figure \ref{networkfig}  is a feed-forward neural network and has a directed acyclic graph structure. The  output unit of $\mathbf{N}$ in the output layer  takes the input $\rho(\langle\xi,z\rangle)$ from the computation unit  in the hidden layer  and the input unit  $t\in \mathbb{R}$ in the input layer. Under Assumption $\mathcal{A}$,
our network $\mathbf{N}$ satisfies the assumptions in  Lemma \ref{networkthm} and has $r+3$ parameters (no bias parameters) and two computation units. Using Lemma \ref{networkthm}, we get
\begin{align}\label{networkvcbound}
 \vc(\tilde{\mathcal{H}})\leq C(r+3)\log\big\{e(2 D+1)(1+K)^2\big\}\lesssim r\log\big\{(\kk+1)(\dd+1)\big\}.
\end{align}

Now, based on \eqref{networkvcdim} and \eqref{networkvcbound},
\begin{align}\label{pseudobound}
   \vc(\mathcal{F})\leq  Cr\log\big\{(\kk+1)(\dd+1)\big\},
\end{align}
and combining \eqref{rademacherequivalence}, \eqref{coveringnumberf}, and \eqref{pseudobound} results in
\begin{align}\label{raderho1}
   \frac{1}{n}\mathbb{E}_\epsilon\sup_{\xi\in\mathbb{R}^{r+1}}\langle\epsilon, \rho(\bar{U}_{\sigma}\xi)\rangle\leq&\,\frac{C}{\sqrt{n}}\int^{\rhob}_{0}\sqrt{\vc(\mathcal{F})\log\frac{\rhob}{\varepsilon}}\rd \varepsilon\lesssim \rhob\sqrt{\frac{r\log\big\{(\kk+1)(\dd+1)\big\}}{n}}.
\end{align}
Similarly, the second term in \eqref{newradecomplex bound} satisfies
\begin{align}\label{raderho2}
 \frac{1}{n}\mathbb{E}_\epsilon\sup_{\xi\in\mathbb{R}^{r+1}}\langle\epsilon, \rho(\bar{U}_{\nu}\xi)\rangle \lesssim \rhob\sqrt{\frac{r\log\big\{(\kk+1)(\dd+1)\big\}}{n}}.
\end{align}
Note that  \eqref{raderho1}  and  \eqref{raderho2} hold for all  $\sigma,\nu>0$.

Finally, we bound the third  term on the right-hand side of  \eqref{newradecomplex bound}. Let
\begin{align}
   q_{\sigma,\nu}(m):=\cc\log\big[1+(\frac{\nu}{\sigma}-1)\{1-\Phi(m)\}\big],
\end{align}
and so $0\leq q_{\sigma,\nu}(m)\leq \cc\log\frac{\nu}{\sigma}$.  Again, by the sub-additivity of sup and Dudley's integral bound,
\begin{align}\label{gconversion}
    \frac{1}{n}\mathbb{E}_\epsilon\sup_{m\in\mathbb{R}}\langle\epsilon, g_{\sigma,\nu}(1m)\rangle\leq&\,\frac{1}{n}\cc\log\sigma\mathbb{E}_\epsilon\sup_{m\in\mathbb{R}}\langle\epsilon, 1\rangle+\frac{1}{n}\mathbb{E}_\epsilon\sup_{m\in\mathbb{R}}\langle \epsilon, q_{\sigma,\nu}(1m)\rangle\nonumber\\\leq &\,\frac{C}{n} \int^{\frac{1}{2}\cc\log\frac{\nu}{\sigma}\sqrt{n}}_{0}\sqrt{\log\mathcal{N}\big(\varepsilon,\big\{[q_{\sigma,\nu}(1m)]: m\in \mathbb{R}\big\},\|\cdot\|_2\big)}\rd \varepsilon\nonumber\\\leq &\,\frac{C}{n} \int^{\frac{1}{2}\cc\log\frac{\nu}{\sigma}\sqrt{n}}_{0}\sqrt{\log\mathcal{N}\Big(\frac{\varepsilon}{\sqrt{n}},\big[0,\cc\log\frac{\nu}{\sigma}\big],|\cdot|\Big)}\rd \varepsilon\nonumber\\\lesssim &\,  \frac{\cc\log\frac{\nu}{\sigma}}{\sqrt{n}}.
\end{align}
Plugging \eqref{raderho1}, \eqref{raderho2}, and \eqref{gconversion} into \eqref{newradecomplex bound} yields
\begin{align}\label{rcomplexbound}
  \mathcal{R}_{X,y}(\mathcal{L}_{\sigma,\nu}(\Omega)) \lesssim \rhob\sqrt{\frac{r\log\big\{(\kk+1)(\dd+1)\big\}}{n}}+\frac{\cc\log\frac{\nu}{\sigma}}{\sqrt{n}}.
\end{align}
Summarizing  \eqref{symmetrizationbound}, \eqref{boundeddifferencebound}, and \eqref{rcomplexbound}, we have   the following  bound with probability at least $1-\epsilon$ for any $0<\epsilon<1$,
\begin{align}\label{partabound}
   &\, \mathcal{E}(\hat\theta_{\sigma,\nu};\sigma,\nu)\nonumber\\\leq &\,
    C\Bigg[\rhob\sqrt{\frac{r\log\big\{(\kk+1)(\dd+1)\big\}}{n}}+\frac{\cc\log\frac{\nu}{\sigma}}{\sqrt{n}}+\frac{\rhob+\cc\big(\log\frac{\nu}{\sigma}+|\log\sigma|\big)}{\sqrt{n}}\sqrt{\log\frac{1}{\epsilon}}\,\Bigg]\nonumber\\\lesssim&\, \rhob\sqrt{\frac{p\log\big\{(\kk+1)(\dd+1)\big\}}{n}}+\frac{\rhob\lor\cc\big(\log\frac{\nu}{\sigma}+|\log\sigma|\big)}{\sqrt{n}}\sqrt{\log\frac{1}{\epsilon}},
\end{align}
\normalsize
where the last inequality is due to $r\leq p+1\lesssim p$. The proof of Lemma \ref{fixsigmanubound} is complete.
\end{proof}

Lemma \ref{fixsigmanubound}  shows the effect of skewness for fixed values of  $\sigma,\nu$. For example, when $\sigma=\sigma^*,\nu=\nu^*$,  the excess risk $\mathcal{E}(\hat\theta_{\sigma^*,\nu^*};\sigma^*,\nu^*)$ satisfies
\begin{align}
   \mathcal{E}(\hat\theta_{\sigma^*,\nu^*};\sigma^*,\nu^*)\lesssim \frac{\rhob\sqrt{p\log\big\{(\kk+1)(\dd+1)\big\}}}{\sqrt{n}}+\frac{\rhob\lor\cc\big\{\log\frac{\sigma^*\lor\nu^*}{\sigma^*\land\nu^*}+|\log(\sigma^*\land\nu^*)| \big\}}{\sqrt{n}}\sqrt{\log\frac{1}{\epsilon}}
\end{align}
with probability at least $1-\epsilon$. Our objective is to establish a uniform law that applies to all scale parameters   $\sigma,\nu>0$. Toward this,    let
\begin{align}\label{taudef}
    \tau=|\log(\sigma\land\nu)|+\log\frac{\sigma\lor\nu}{\sigma\land\nu},
\end{align}
which is nonnegative.
By Lemma \ref{fixsigmanubound},  there exists a universal constant $C$ such that the  event $\ev(\tau,t)$
\begin{align}
 \ev(\tau,t) := \left\{ \mathcal{E}(\hat\theta_{\sigma,\nu};\sigma,\nu)- C\Bigg[\rhob\sqrt{\frac{p\log\big\{(\kk+1)(\dd+1)\big\}}{n}}+\big(\rhob+\cc\tau\big)t\Bigg]\geq0\right\} \end{align}
 occurs with probability
\begin{align}\label{eventprobsimple}
    \mathbb{P}\big[\ev(\tau,t)\big]\leq \exp(-2nt^2),
\end{align}
for any $t>0$.

Let $[0,+\infty)=\bigcup\limits_{l=0}^{\infty}[\tau_l,\tau_{l+1}]$ with $0=\tau_0<\tau_1<\ldots<+\infty$ to be determined. Let $Q(\tau)\geq 0$ ($\forall \tau\geq0$) be an increasing function.  Then, for all $l\geq0$ and $l\in\mathbb{Z}$, $\tau\in [\tau_l,\tau_{l+1}]$  implies  $\tau\geq \tau_{l}$, $Q(\tau)\geq Q(\tau_l)$, and $\tau Q(\tau)\geq\tau Q(\tau_l)\geq \tau_l Q(\tau_l)$.
Based on \eqref{eventprobsimple} and  the union bound,
\small
\begin{align}\label{discretebound}
    &\mathbb{P}\left(\sup_{\sigma,\nu>0}\mathcal{E}(\hat\theta_{\sigma,\nu};\sigma,\nu)- C\Bigg[\rhob\sqrt{\frac{p\log\big\{(\kk+1)(\dd+1)\big\}}{n}}+\big(\rhob+\cc\tau\big)\big\{t+Q(\tau)\big\}\Bigg]\geq 0\right)\nonumber\\\leq& \sum^{\infty}_{l=0}\mathbb{P}\Big[\ev\big(\tau_l,t+Q(\tau_l)\big)\Big].
\end{align}
\normalsize

Take $Q(\tau)=\sqrt{\log(1+\tau)/n}$ and $\tau_l=l$ for all $l\geq0$ and $l\in\mathbb{Z}$, then \eqref{discretebound} is bounded by
\begin{align}\label{discretizationbound}
\sum^{\infty}_{l=0}\mathbb{P}\Bigg[\ev\bigg(l,t+\sqrt{\frac{\log(1+l)}{n}}\bigg)\Bigg]\leq\sum^{\infty}_{l=0}\frac{1}{(l+1)^2}\exp(-2nt^2) \leq &C\exp(-2nt^2).
\end{align}
Plugging \eqref{taudef} into \eqref{discretizationbound} gives the following uniform law   for all $\sigma,\nu>0$ with probability $1-\epsilon$
\begin{align}
      \mathcal{E}(\hat\theta_{\sigma,\nu};\sigma,\nu)\leq &\,C\Bigg[\frac{\rhob\sqrt{p\log\big\{(\kk+1)(\dd+1)\big\}}}{\sqrt{n}}+\frac{\rhob\lor\cc\big\{\log\frac{\sigma\lor\nu}{\sigma\land\nu}+|\log(\sigma\land\nu)| \big\}}{\sqrt{n}}\nonumber\\&\quad\,\times\bigg\{\sqrt{\log\frac{1}{\epsilon}}+\sqrt{\log\big(\log\frac{\sigma\lor\nu}{\sigma\land\nu}\lor|\log(\sigma\land\nu)|\lor1)+\log 2}\bigg\}\Bigg]\nonumber\\\lesssim &\,\frac{\rhob\sqrt{p\log\big\{(\kk+1)(\dd+1)\big\}}}{\sqrt{n}}+\frac{\rhob\lor\cc\big\{\log\frac{\sigma\lor\nu}{\sigma\land\nu}+|\log(\sigma\land\nu)| \big\}}{\sqrt{n}}\nonumber\\&\quad\,\times\bigg\{\sqrt{\log\frac{1}{\epsilon}}+\sqrt{\log\big(\log\frac{\sigma\lor\nu}{\sigma\land\nu}\lor|\log(\sigma\land\nu)|\lor1)}\bigg\}, \quad\forall\sigma,\nu>0,
\end{align}
The proof of Theorem \ref{riskbound} is complete.

\subsection{Proof of Theorem \ref{globalgroupl1}}
\label{appsub:proofofth1}

By the  optimality of $\hat{\mu}$: $l(\hat{\mu})+\frac{\etaa}{2}\|\hat\gamma\|^2_2+\lambda\mtbb\|\hat{\bar{\beta}}\|_{2,1}\leq l(\mu^*)+\frac{\etaa}{2}\|\gamma^*\|^2_2+\lambda\mtbb\|\bar{\beta}^*\|_{2,1}$,  we obtain a basic inequality as follows
\begin{align}\label{basicineql2}
    &\,\mathbf{\Delta}_{l}(\hat{\mu},\mu^*)+\frac{\etaa}{2}\|\hat{\gamma}-\gamma^*\|^2_2\nonumber\\\leq &\, \big\langle\epsilon_{\bar{\eta}},\bar{X}\hat{\bar{\beta}}-\bar{X}\bar{\beta}^*\big\rangle+\langle\epsilon_{\bar m},\hat {\bar m}-\bar m\rangle+\langle\epsilon_{\varsigma},\hat {\varsigma}-\varsigma\rangle+\langle-\etaa\gamma^*,\hat{\gamma}-\gamma^*\rangle\nonumber\\&\,+\lambda\mtbb\|\bar{\beta}^*\|_{2,1}-\lambda\mtbb\|\hat{\bar{\beta}}\|_{2,1}\nonumber\\\leq &\,\big\langle\epsilon_{\bar{\eta}},\bar{X}\hat{\bar{\beta}}-\bar{X}\bar{\beta}^*\big\rangle+\langle\epsilon_{\bar m},\hat {\bar m}-\bar m\rangle+\langle\epsilon_{\varsigma},\hat {\varsigma}-\varsigma\rangle+\frac{b}{2}\|\hat\gamma-\gamma^*\|^2_2+\frac{\etaa^2}{2b}\|\gamma^*\|_{2}^2\nonumber\\&\,+\lambda\mtbb\|\bar{\beta}^*\|_{2,1}-\lambda\mtbb\|\hat{\bar{\beta}}\|_{2,1}
\end{align}
for any $b>0$.

First, we  bound $\big\langle\epsilon_{\bar{\eta}},\bar{X}\hat{\bar{\beta}}-\bar{X}\bar{\beta}^*\big\rangle$. Let $\bar{X}_k$ denote the $k$th column of $\bar{X}$    and  $\Xi_k:=[\bar{X}_k,\bar{X}_{p+k}]^T$ and  recall $ {\bar{\beta}}_k = [\beta_k, b_k]^T$,   $k=1,\ldots,p$.  By H\"{o}lder's inequality,
\begin{align}
  \langle\epsilon_{\bar{\eta}},\bar{X}\hat{\bar{\beta}}-\bar{X}\bar{\beta}^*\rangle=  \sum^p_{k=1}\langle
    \Xi_k\epsilon_{\bar{\eta}},\hat{\bar{\beta}}_k-\bar{\beta}^*_k\rangle
     \leq  \sum^p_{k=1} \|\Xi_k\epsilon_{\bar{\eta}}\|_2\|\hat{\bar{\beta}}_k-\bar{\beta}^*_k\|_2
     \leq  \sqrt{2}\big\|\bar X^T\epsilon_{\bar\eta}\big\|_{\infty} \|\hat{\bar{\beta}}-\bar{\beta}^*\|_{2,1}.
\end{align}
Let $\lambda=A\big\|\bar X^T\epsilon_{\bar\eta}\big\|_{\infty}/\mtbb$ for $ A>1$. Then we have
\begin{align}
    &\,\big\|\bar X^T\epsilon_{\bar\eta}\big\|_{\infty}\big\|\hat{\bar{\beta}}-\bar{\beta}^*\big\|_{2,1}+\lambda\mtbb\big\|\bar\beta^*\big\|_{2,1}-\lambda\mtbb\big\|\hat{\bar\beta}\big\|_{2,1}\nonumber\\\leq &\,\big(\lambda\mtbb+\big\|\bar X^T\epsilon_{\bar\eta}\big\|_{\infty}\big)\big\|\bar\beta^*\big\|_{2,1}+\big(\big\|\bar X^T\epsilon_{\bar\eta}\big\|_{\infty}-\lambda\mtbb\big)\big\|\hat{\bar\beta}\big\|_{2,1}\nonumber\\= &\,(1+A)\big\|\bar X^T\epsilon_{\bar\eta}\big\|_{\infty}\big\|\bar\beta^*\big\|_{2,1}+(1-A)\big\|\bar X^T\epsilon_{\bar\eta}\big\|_{\infty}\big\|\hat{\bar\beta}\big\|_{2,1}\nonumber\\\leq &\,(1+A)\big\|\bar X^T\epsilon_{\bar\eta}\big\|_{\infty}\big\|\bar\beta^*\big\|_{2,1}.
\end{align}

By the  assumption on $\epsilon_{\bar{\eta}}$, the   $\psi$-norm of $| \bar{X}^T_k\epsilon_{\bar{\eta}}|$ is bounded by
\begin{align}
    \big\| \bar{X}^T_k\epsilon_{\bar{\eta}}\big\|_{\psi}=\|\langle\epsilon_{\bar{\eta}},\bar{X}_k\rangle\|_{\psi}\leq \|\epsilon_{\bar{\eta}}\|_{\psi}\|\bar{X}_k\|_2\leq\omega_{\bar\eta}\mtbb.
\end{align}
The following inequality is essentially  Massart's finite class lemma, adapted for our purpose
\begin{align}
    \big\|\max_{1\leq k\leq 2p} |\bar{X}^T_k\epsilon_{\bar{\eta}}|\big\|_{\psi}\leq   2 (1\vee c_1   \vee 2 \psi(c_2))^2 c_0    \psi^{-1}(p)\omega_{\bar\eta}\mtbb .
\end{align}
It can  be obtained by modifying  the proof of  Lemma 2.2.2 in \cite{vanderVaart1996} (details omitted).  By Lemma \ref{psinormconvertlemma},
\begin{align}
   \mathbb{E}\big[ \|\bar X^T\epsilon_{\bar\eta} \|_{\infty}\big]\le \left\| \|\bar X^T\epsilon_{\bar\eta} \|_{\infty}\right\|_{L_2} \le \psi^{-1}(1) \Big\|\big\|\bar X^T\epsilon_{\bar\eta}\big\|_{\infty}\Big\|_{\psi}\leq C c_\psi\psi^{-1}(  p)\omega_{\bar\eta}\mtbb,
\end{align}
where $c_\psi =    (1\vee c_1   \vee 2 \psi(c_2))^2 c_0 \psi^{-1}(1)$.
Therefore,
\begin{align}\label{derive1}
 \mathbb{E}\Big\{ \big\langle\epsilon_{\bar{\eta}},\bar{X}\hat{\bar{\beta}}-\bar{X}\bar{\beta}^*\big\rangle+\lambda\mtbb\big\|\bar\beta^*\big\|_{2,1}-\lambda\mtbb\big\|\hat{\bar\beta}\big\|_{2,1}\Big\}\nonumber\leq &\, (1+A)\mathbb{E}\Big[\big\|\bar X^T\epsilon_{\bar\eta}\big\|_{\infty}\Big]\big\|\bar\beta^*\big\|_{2,1}\nonumber\\\leq&\, C(1+A) c_\psi \psi^{-1}(  p)\omega_{\bar\eta}\mtbb\big\|\bar\beta^*\big\|_{2,1}.
\end{align}

Next, we bound the stochastic term $\langle\epsilon_{\bar m},\hat {\bar m}-\bar m^*\rangle$. For any $a>0$,
\begin{align}
   \langle\epsilon_{\bar m},\hat {\bar m}-\bar m^*\rangle=\langle P_{1_n}\epsilon_{\bar m},\hat {\bar m}-\bar m^*\rangle\leq | u^T\epsilon_{\bar m}|\cdot\|\hat {\bar m}-\bar m^*\|_2\leq a( u^T\epsilon_{\bar m})^2+\frac{1}{2a}\mathrm{\mathbf{D}}_2(\hat {\bar m},\bar m^*),
\end{align}
where $P_{1_n}$ is the orthogonal projection matrix onto the column space of   $1_n$,  or equivalently,  $uu^T$ with $u=(1/\sqrt{n})1_n$.
By   Lemma \ref{psinormconvertlemma},
\begin{align}
    \mathbb{E} \big[( u^T\epsilon_{\bar m})^2\big]\leq \{\psi^{-1}(1)\}^2\|u^T\epsilon_{\bar m}\|^2_{\varphi}\leq  \{\psi^{-1}(1)\}^2\omega^2_{\bar m}.
\end{align}
Likewise, for the stochastic term $\langle\epsilon_{\varsigma},\hat {\varsigma}-\varsigma^*\rangle$  with $\epsilon_{\varsigma}=[\epsilon^T_{\frac{1}{\sigma}1_n},\epsilon^T_{\frac{1}{\nu}1_n}]^T$ and $\hat {\varsigma}-\varsigma^*=[(1/\hat\sigma-1/\sigma^*)1^T_n,(1/\hat\nu-1/\nu^*)1^T_n]^T$, we have
\begin{align}
   &\,\big\langle\epsilon_{\frac{1}{\sigma}1_n},(\frac{1}{\hat\sigma}-\frac{1}{\sigma^*})1_n\big\rangle+\big\langle\epsilon_{\frac{1}{\nu}1_n},(\frac{1}{\hat\nu}-\frac{1}{\nu^*})1_n\big\rangle\nonumber\\=&\,\big\langle P_{1_n}\epsilon_{\frac{1}{\sigma}1_n},(\frac{1}{\hat\sigma}-\frac{1}{\sigma^*})1_n\big\rangle+\big\langle P_{1_n}\epsilon_{\frac{1}{\nu}1_n},(\frac{1}{\hat\nu}-\frac{1}{\nu^*})1_n\big\rangle\nonumber\\
  \leq &\, a(u^T\epsilon_{\frac{1}{\sigma}1_n})^2+a(u^T\epsilon_{\frac{1}{\nu}1_n})^2+\frac{1}{2a}\mathrm{\mathbf{D}}_2(\hat {\varsigma},\varsigma^*),\nonumber
\end{align}
and\begin{align}
    \mathbb{E} \big[( u^T\epsilon_{\frac{1}{\sigma}1_n})^2\big]\leq \{\varphi^{-1}(1)\}^2\|u^T\epsilon_{\frac{1}{\sigma}1_n}\|^2_{\varphi}\leq  \{\varphi^{-1}(1)\}^2\omega^2_{\varsigma},\nonumber
\end{align}
and  $ \mathbb{E} [( u^T\epsilon_{\frac{1}{\nu}1_n})^2]\leq  \{\varphi^{-1}(1)\}^2\omega^2_{\varsigma}$.
To sum up, for any $a>0$,
\begin{align}\label{gammaineq}
   &\, \mathbb{E} \big\{   \langle\epsilon_{\bar m},\hat {\bar m}-\bar m^*\rangle+\langle\epsilon_{\varsigma},\hat {\varsigma}-\varsigma^*\rangle\big\}\nonumber\\ \leq&\, a\{\psi^{-1}(1)\}^2\omega^2_{\bar m}+2a\{\varphi^{-1}(1)\}^2\omega^2_{\varsigma}+\frac{1}{2a}\{\mathrm{\mathbf{D}}_2(\hat {\bar m},\bar m^*)+\mathrm{\mathbf{D}}_2(\hat {\varsigma},\varsigma^*)\}.
\end{align}

Plugging  \eqref{derive1} and \eqref{gammaineq} into   \eqref{basicineql2} to obtain
\begin{align}
     &\,\mathbb{E}\Big\{ \bold{\Delta}_{l}(\hat{\mu},\mu^*)+(\etaa-b)\mathrm{\bold{D}}_2(\hat{\gamma},\gamma^*)\Big\}-\frac{1}{2a}\{\mathrm{\bold{D}}_2(\hat {\bar m},\bar m^*)+\mathrm{\bold{D}}_2(\hat {\varsigma},\varsigma^*)\}\nonumber\\ \leq &\,C(1+A) c_\psi \psi^{-1}(  p)\omega_{\bar\eta}\mtbb\big\|\bar\beta^*\big\|_{2,1}+a\{\psi^{-1}(1)\}^2\omega^2_{\bar m}+2a\{\varphi^{-1}(1)\}^2\omega^2_{\varsigma}+\frac{\etaa^2}{2b}\|\gamma^*\|_{2}^2.
\end{align}
Choosing $b=\etaa/4$, $a=2/\etaa$, we get
\begin{align}
   &\,\mathbb{E}\Big\{ \bold{\Delta}_{l}(\hat{\mu},\mu^*)+\frac{\etaa}{2}\mathrm{\bold{D}}_2(\hat{\gamma},\gamma^*)\Big\}\nonumber\\ \leq&\, C(1+A)c_\psi \psi^{-1}(  p)\omega_{\bar\eta}\mtbb\big\|\bar\beta^*\big\|_{2,1}+\frac{2}{\etaa}\big[\{\psi^{-1}(1)\}^2\omega^2_{\bar m}+2\{\varphi^{-1}\}^2\omega^2_{\varsigma}\big]+2\etaa\|\gamma^*\|_{2}^2\nonumber\\\lesssim&\,c_\psi \psi^{-1}( p)\omega_{\bar\eta}\mtbb\big\|\bar\beta^*\big\|_{2,1}+\frac{1}{\etaa} (\omega^2_{\bar m}+\omega^2_{\varsigma})+\etaa\|\gamma^*\|_{2}^2.
\end{align}
The proof is complete.

\subsection{Proof of Theorem \ref{globalgrouplreconstanttheta} }
\label{appsub:proofofth2}
We prove a  more general result, which includes  Theorem \ref{globalgrouplreconstanttheta} as a special case.
\begin{theorem}
\label{globalgrouplregeneral}
Assume that the effective noises $\epsilon_{\bar\eta}, \epsilon_{\bar m}$, and $\epsilon_\varsigma$ satisfy  $\|\epsilon_{\bar\eta}\|_{\psi}\leq \omega_{\bar\eta},\|\epsilon_{\bar m}\|_{\psi}\leq \omega_{\bar m}$, and $\|\epsilon_\varsigma\|_{\varphi}\leq \omega_\varsigma$.
Let  $\hat\unv$ denote the  optimal solution  for \eqref{l1thmloss} with $\varrho\geq \rhotwoinf$.    Suppose that  there exist some $\vartheta>0$ and some large $K >0$ such that the following condition holds for any $\bar\beta,\gamma$ (recall  $\unv^T=[\bar\beta^T,\gamma^T]$, $\mu^T=[\bar\eta^T,\gamma^T]$ defined based on \eqref{systemnota})
\begin{align}\label{them3regu}
      (1+\vartheta)\lambda\mtbb\big\|(\bar{\beta}-\bar{\beta}^*)_{\mathcal{J}^*}\big\|_{2,1}\leq \bold{\Delta}_{l}(\mu,\mu^*)+\lambda\mtbb\big\|(\bar{\beta}-\bar{\beta}^*)_{\mathcal{J}^{*C}}\big\|_{2,1}+K\lambda^2J^*,
\end{align}
where
\begin{align} \label{lambdachoice-gen}
    \lambda=\big(1\lor\frac{1}{\vartheta}\big)\omega_{\bar\eta}\psi^{-1}\big[p\psi\big\{A\psi^{-1}(p)\big\}\big]
\end{align}
for some large enough $A>0$.
Then for any   $L,L'>0$, 
\begin{align}\label{thm3textbound}
\begin{split}  \bold{\Delta}_{l}(\hat{\mu},\mu^*)+\frac{\etaa}{2}\mathrm{\bold{D}}_2(\gamma,\gamma^*)\, \lesssim \, &\, (\frac{1}{\vartheta}\lor\frac{1}{\vartheta^3})K\omega^2_{\bar\eta}\Big(\psi^{-1}\Big[p\psi\big\{A\psi^{-1}(p)\big\}\Big]\Big)^2J^* \\&\,+(1\lor\frac{1}{\vartheta}) \frac{1}{\etaa}\big(L^2\omega^2_{\bar m}+L^{'2}\omega^2_{\varsigma}\big)+(1\lor\frac{1}{\vartheta})\etaa\|\gamma^*\|_{2}^2\end{split}
\end{align}
 holds with probability at least $1-C/\psi\{A\psi^{-1}(p)\}-1/\psi(cL)-C/\varphi(cL')$, where  $C,c$ are  positive constants.
\end{theorem}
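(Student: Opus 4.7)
The plan is to follow the same template used for Theorem~\ref{globalgroupl1} but convert the expectation-based control of $\|\bar X^T \epsilon_{\bar\eta}\|_\infty$ into a high-probability one, and then invoke the cone-type regularity condition \eqref{them3regu} to trade the $\mathcal J^*$-restricted group-$\ell_1$ error for the Bregman divergence $\mathbf{\Delta}_l$ plus an $O(K\lambda^2 J^*)$ term. First I would start from the optimality of $\hat\unv$ in \eqref{l1thmloss} to derive, exactly as in \eqref{basicineql2}, the basic inequality
\begin{align*}
\mathbf{\Delta}_l(\hat\mu,\mu^*) + \frac{\etaa}{2}\|\hat\gamma-\gamma^*\|_2^2 &\le \langle \epsilon_{\bar\eta}, \bar X(\hat{\bar\beta}-\bar\beta^*)\rangle + \langle \epsilon_{\bar m}, \hat{\bar m}-\bar m^*\rangle + \langle \epsilon_\varsigma, \hat\varsigma-\varsigma^*\rangle \\
&\quad + \etaa \langle -\gamma^*, \hat\gamma - \gamma^*\rangle + \lambda\mtbb(\|\bar\beta^*\|_{2,1} - \|\hat{\bar\beta}\|_{2,1}).
\end{align*}

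Next I would control the three stochastic inner products using Orlicz-norm maximal inequalities. For the dominant term, H\"older gives $\langle\epsilon_{\bar\eta}, \bar X(\hat{\bar\beta}-\bar\beta^*)\rangle \le \sqrt{2}\|\bar X^T\epsilon_{\bar\eta}\|_\infty \|\hat{\bar\beta}-\bar\beta^*\|_{2,1}$; since $\|\bar X_k^T \epsilon_{\bar\eta}\|_\psi \le \omega_{\bar\eta}\|\bar X_k\|_2 \le \omega_{\bar\eta}\mtbb$ under $\varrho \ge \rhotwoinf$, Markov's inequality applied to $\psi$ (which is merely nondecreasing, not necessarily convex, but this is all that is used) and a union bound over the $2p$ columns give
\[
\EP\Big\{\|\bar X^T\epsilon_{\bar\eta}\|_\infty > \omega_{\bar\eta}\mtbb\,\psi^{-1}[p\psi\{A\psi^{-1}(p)\}]\Big\} \le \frac{2p}{\psi\{A\psi^{-1}(p)\}} / p \lesssim \frac{1}{\psi\{A\psi^{-1}(p)\}},
\]
which is exactly the target tail. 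Because $\hat{\bar m}-\bar m^*$ lies in $\mathcal R(1_n)$ and $\hat\varsigma-\varsigma^*$ lies in the two-dimensional block-constant subspace, I would project and use the one-dimensional tail bounds $\EP(|u^T\epsilon_{\bar m}|>L\omega_{\bar m}) \le 1/\psi(cL)$ and $\EP(|u^T\epsilon_\varsigma|>L'\omega_\varsigma)\le C/\varphi(cL')$ obtained by the same Markov argument; then Young's inequality lets me absorb the resulting cross-terms into a fraction of $\etaa \mathbf{D}_2(\hat\gamma,\gamma^*)$ on the left, leaving deterministic contributions of order $L^2\omega_{\bar m}^2/\etaa$ and $L'^2\omega_\varsigma^2/\etaa$. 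The last deterministic term $\etaa\langle -\gamma^*,\hat\gamma-\gamma^*\rangle$ is split via $2ab\le b a^2 + a^{-1} b^{-1} \cdot (\text{const})$ into a further piece of $\|\hat\gamma-\gamma^*\|_2^2$ and an $\etaa\|\gamma^*\|_2^2$ term.

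After collecting these bounds, the penalty difference $\lambda\mtbb(\|\bar\beta^*\|_{2,1}-\|\hat{\bar\beta}\|_{2,1})$ is split via the standard decomposition $\|\bar\beta^*\|_{2,1}-\|\hat{\bar\beta}\|_{2,1}\le \|(\hat{\bar\beta}-\bar\beta^*)_{\mathcal J^*}\|_{2,1}-\|(\hat{\bar\beta}-\bar\beta^*)_{\mathcal J^{*C}}\|_{2,1}$. The choice \eqref{lambdachoice-gen} was engineered so that $\sqrt{2}\|\bar X^T\epsilon_{\bar\eta}\|_\infty/\mtbb \le (\vartheta/2)\lambda$ on the above event, which reduces the right-hand side to a multiple of $(1+\vartheta/2)\lambda\mtbb\|(\hat{\bar\beta}-\bar\beta^*)_{\mathcal J^*}\|_{2,1} - (\vartheta/2)\lambda\mtbb\|(\hat{\bar\beta}-\bar\beta^*)_{\mathcal J^{*C}}\|_{2,1}$ plus the deterministic terms already isolated. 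Applying \eqref{them3regu} with the prescribed $\vartheta$ then eliminates the $\mathcal J^*$-restricted term in favor of $\mathbf{\Delta}_l(\hat\mu,\mu^*) + K\lambda^2 J^*$; a little rearrangement moves a small fraction of $\mathbf{\Delta}_l$ to the left and yields \eqref{thm3textbound}.

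The main obstacle is the bookkeeping of the constant $\vartheta$: it must be threaded through every Young inequality and every penalty-splitting step so that the coefficient in front of $\|(\hat{\bar\beta}-\bar\beta^*)_{\mathcal J^*}\|_{2,1}$ matches the $(1+\vartheta)$ required by \eqref{them3regu}, while simultaneously producing the $\frac{1}{\vartheta}\vee\frac{1}{\vartheta^3}$ scaling claimed in \eqref{thm3textbound} (the cubic arises when $\vartheta$ appears both inside $\lambda^2$ and through the slack in the penalty split). A secondary technical point is justifying the Markov-type inequality under the weaker assumptions on $\psi$, which only requires monotonicity and $\psi(0)=0$; this is exactly why the theorem is stated with a general (possibly nonconvex) $\psi$ rather than with a sub-Weibull family. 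Specializing to $\psi=\psi_q$ then gives $\psi^{-1}[p\psi\{A\psi^{-1}(p)\}]\asymp (\log p)^{1/q}$ and tail $p^{-cA^q}$, recovering Theorem~\ref{globalgrouplreconstanttheta}.
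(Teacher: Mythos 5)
Your proposal is correct and follows essentially the same route as the paper's proof in Appendix \ref{appsub:proofofth2}: the basic inequality from optimality, H\"older plus a union-bound/Markov argument for $\|\bar X^T\epsilon_{\bar\eta}\|_\infty$ at level $\lambda_0=\omega_{\bar\eta}\psi^{-1}[p\psi\{A\psi^{-1}(p)\}]$, projection onto $\mathcal R(1_n)$ with Young's inequality for the $\bar m$ and $\varsigma$ terms, the standard penalty split, and cancellation against the rescaled regularity condition \eqref{them3regu} with $b'=\vartheta/\{(2+\vartheta)c_0\}$ producing the $\tfrac{1}{\vartheta}\vee\tfrac{1}{\vartheta^3}$ factor. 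The only nit is a coefficient slip in your penalty-split display (the $\mathcal J^{*C}$ term should carry $-(1-c_0b')\lambda$ rather than $-(\vartheta/2)\lambda$), which does not affect the validity of the plan.
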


Theorem \ref{globalgrouplregeneral} implies   Theorem \ref{globalgrouplreconstanttheta}. In fact,    \eqref{them3regu},  assuming $\vartheta$ is a constant, is just  \eqref{l1estimationerrorrate}, and   letting  $\psi=\psi_q$ for some $q> 0$, \eqref{lambdachoice-gen} yields  $\lambda=A\omega_{\bar\eta}(\log p)^{\frac{1}{q}}$. 
Then,     choosing $L=C\psi^{-1}(p^{A^q}), L'=C\varphi^{-1}(p^{A^q})$, we can obtain the result in Theorem \ref{globalgrouplreconstanttheta}:    \begin{align*}
    &\, \bold{\Delta}_{l}(\hat{\mu},\mu^*)+\frac{\etaa}{2}\mathrm{\bold{D}}_2(\hat\gamma,\gamma^*)\nonumber\\\lesssim  &\,
   KA^2\omega^2_{\bar\eta}J^*(\log p)^{\frac{2}{q}}+ \frac{1}{\etaa}\{\psi_q^{-1}(p^{A^q})\}^2\omega^2_{\bar m}+\frac{1}{\etaa}\{\varphi^{-1}(p^{A^q})\}^{2}\omega^2_{\varsigma}+\etaa\|\gamma^*\|_{2}^2
\end{align*}
 with probability at least $1-Cp^{-cA^q}$.

\begin{proof}
From $l(\hat{\mu})+\etaa\|\hat\gamma\|^2_2+\lambda\mtbb\|\hat{\bar{\beta}}\|_{2,1}\leq l(\mu^*)+\etaa\|\gamma^*\|^2_2+\lambda\mtbb\|\bar{\beta}^*\|_{2,1}$,  we obtain
\begin{align}\label{basicineqgoupl2}
     &\,\bold{\Delta}_{l}(\hat{\mu},\mu^*)+\frac{\etaa}{2}\|\hat{\gamma}-\gamma^*\|^2_2\nonumber\\\leq &\, \big\langle\epsilon_{\bar{\eta}},\bar{X}\hat{\bar{\beta}}-\bar{X}\bar{\beta}^*\big\rangle+\langle\epsilon_{\bar m},\hat {\bar m}-\bar m\rangle+\langle\epsilon_{\varsigma},\hat {\varsigma}-\varsigma\rangle+\langle-\etaa\gamma^*,\hat{\gamma}-\gamma^*\rangle\nonumber\\&\,+\lambda\mtbb\|\bar{\beta}^*\|_{2,1}-\lambda\mtbb\|\hat{\bar{\beta}}\|_{2,1}\nonumber\\\leq &\,\big\langle\epsilon_{\bar{\eta}},\bar{X}\hat{\bar{\beta}}-\bar{X}\bar{\beta}^*\big\rangle+\langle\epsilon_{\bar m},\hat {\bar m}-\bar m\rangle+\langle\epsilon_{\varsigma},\hat {\varsigma}-\varsigma\rangle+\frac{b}{2}\|\hat\gamma-\gamma^*\|^2_2+\frac{\etaa^2}{2b}\|\gamma^*\|_{2}^2\nonumber\\&\,+\lambda\mtbb\|\bar{\beta}^*\|_{2,1}-\lambda\mtbb\|\hat{\bar{\beta}}\|_{2,1}
\end{align}
for any $b>0$.

To bound $\big\langle\epsilon_{\bar{\eta}},\bar{X}\hat{\bar{\beta}}-\bar{X}\bar{\beta}^*\big\rangle$,
 we use the same notations in Appendix \ref{appsub:proofofth1} and recall
\begin{align}\label{holderineq}
  \langle\epsilon_{\bar{\eta}},\bar{X}\hat{\bar{\beta}}-\bar{X}\bar{\beta}^*\rangle\leq c_0\max_{1\leq k\leq 2p} |\bar{X}^T_k\epsilon_{\bar{\eta}}| \|\hat{\bar{\beta}}-\bar{\beta}^*\|_{2,1},
\end{align}
where the constant $c_0\geq \sqrt{2}$,
and\begin{align}
    \big\| \bar{X}^T_k\epsilon_{\bar{\eta}}\big\|_{\psi}\leq\omega_{\bar\eta}\mtbb, \  \forall k.
\end{align}
Let
 \begin{align}
     \lambda_0=\omega_{\bar\eta}\psi^{-1}\Big[p\psi\big\{A\psi^{-1}(p)\big\}\Big]
 \end{align}
 for some large enough $A>0$. By the union bound and Markov's inequality, the event
 $\underset{1\leq k\leq 2p}{\max}|\bar{X}^T_k\epsilon_{\bar{\eta}}|\geq \lambda_0\mtbb$  occurs with probability
\begin{align}\label{probboundl1}
    \mathbb{P}(\max_{1\leq k\leq 2p} |\bar{X}^T_k\epsilon_{\bar{\eta}}|\geq \lambda_0\mtbb)\leq  & \frac{2p}{\psi\Big(\frac{\omega_{\bar\eta}\mtbb\psi^{-1}\big[p\psi\{A\psi^{-1}(p)\}\big]}{ \omega_{\bar\eta}\mtbb}\Big)}\le \frac{2p}{ {  p\psi\{A\psi^{-1}(p)\}\big]}  }\leq    \frac{2}{\psi\{A\psi^{-1}(p)\}}.
\end{align}
Let $\lambda= \lambda_0/ b' $ with $b'>0$. By the subadditivity of the $\ell_1$-penalty, we get
\begin{align}\label{split}
    &\, c_0\lambda_0\mtbb\big\|\hat{\bar{\beta}}-\bar{\beta}^*\big\|_{2,1}+\lambda\mtbb\big\|\bar\beta^*\big\|_{2,1}-\lambda\mtbb\big\|\hat{\bar\beta}\big\|_{2,1}\nonumber\\\leq &\, c_0\lambda_0\mtbb\big\|(\hat{\bar{\beta}}-\bar{\beta}^*)_{\mathcal{J}^*}\big\|_{2,1}+ c_0\lambda_0\mtbb\big\|(\hat{\bar{\beta}}-\bar{\beta}^*)_{_{\mathcal{J}^{*C}}}\big\|_{2,1}+\lambda\mtbb\big\|\bar\beta^*_{\mathcal{J}^*}\big\|_{2,1}\nonumber\\&\,-\lambda\mtbb\big\|\hat{\bar\beta}_{_{\mathcal{J}^{*}}}\big\|_{2,1}-\lambda\mtbb\big\|(\hat{\bar{\beta}}-\bar{\beta}^*)_{_{\mathcal{J}^{*C}}}\big\|_{2,1}\nonumber\\
    \leq &\,\big( c_0\lambda_0+\lambda\big)\mtbb\big\|(\hat{\bar{\beta}}-\bar{\beta}^*)_{\mathcal{J}^*}\big\|_{2,1}-\big(\lambda- c_0\lambda_0\big)\mtbb\big\|(\hat{\bar{\beta}}-\bar{\beta}^*)_{_{\mathcal{J}^{*C}}}\big\|_{2,1}\nonumber\\=&\,\big( 1+c_0 b'\big)\lambda\mtbb\big\|(\hat{\bar{\beta}}-\bar{\beta}^*)_{\mathcal{J}^*}\big\|_{2,1}-\big(1- c_0 b'\big)\lambda\mtbb\big\|(\hat{\bar{\beta}}-\bar{\beta}^*)_{_{\mathcal{J}^{*C}}}\big\|_{2,1},
\end{align}
where $\mathcal J^{*C}$ is the complement of $\mathcal J^*$.

For  the stochastic term $\langle\epsilon_{\bar m},\hat {\bar m}-\bar m^*\rangle$,   for any $a>0$,
\begin{align}
   \langle\epsilon_{\bar m},\hat {\bar m}-\bar m^*\rangle=\langle P_{1_n}\epsilon_{\bar m},\hat {\bar m}-\bar m^*\rangle\leq | u^T\epsilon_{\bar m}|\cdot\|\hat {\bar m}-\bar m^*\|_2\leq a( u^T\epsilon_{\bar m})^2+\frac{1}{2a}\mathrm{\bold{D}}_2(\hat {\bar m},\bar m^*),
\end{align}
where $P_{1_n} = uu^T$ with $u=(1/\sqrt{n})1_n$.
By the assumption on $\epsilon_{\bar m}$, $\|u^T\epsilon_{\bar m}\|_\psi\leq \omega_{\bar m}$. By Markov's inequality,
\begin{align}
    \mathbb{P}\Big(( u^T\epsilon_{\bar m})^2\geq cL^2\omega^2_{\bar m}\Big)\leq \frac{1}{\psi(\frac{cL\omega_{\bar m}}{\omega_{\bar m}})}=\frac{1}{\psi(cL)},
\end{align}
from which it follow that
\begin{align}\label{thm2mbound}
   \langle\epsilon_{\bar m},\hat {\bar m}-\bar m^*\rangle\leq  acL^2\omega^2_{\bar m}+\frac{1}{2a}\mathrm{\bold{D}}_2(\hat {\bar m},\bar m^*)
\end{align}
occurs with probability at least $1-1/\psi(cL)$.
For the stochastic term $\langle\epsilon_{\varsigma},\hat {\varsigma}-\varsigma\rangle$ with  $\epsilon_{\varsigma}=[\epsilon^T_{\frac{1}{\sigma}1_n},\epsilon^T_{\frac{1}{\nu}1_n}]^T$ and $\hat {\varsigma}-\varsigma^*=[(1/\hat\sigma-1/\sigma^*)1^T_n,(1/\hat\nu-1/\nu^*)1^T_n]^T$, similarly, \begin{align}
   &\,\big\langle\epsilon_{\frac{1}{\sigma}1_n},(\frac{1}{\hat\sigma}-\frac{1}{\sigma^*})1_n\big\rangle+\big\langle\epsilon_{\frac{1}{\nu}1_n},(\frac{1}{\hat\nu}-\frac{1}{\nu^*})1_n\big\rangle\nonumber\\=&\,\big\langle P_{1_n}\epsilon_{\frac{1}{\sigma}1_n},(\frac{1}{\hat\sigma}-\frac{1}{\sigma^*})1_n\big\rangle+\big\langle P_{1_n}\epsilon_{\frac{1}{\nu}1_n},(\frac{1}{\hat\nu}-\frac{1}{\nu^*})1_n\big\rangle\nonumber\\
  \leq &\, a(u^T\epsilon_{\frac{1}{\sigma}1_n})^2+a(u^T\epsilon_{\frac{1}{\nu}1_n})^2+\frac{1}{2a}\mathrm{\bold{D}}_2(\hat {\varsigma},\varsigma^*).\nonumber
\end{align}
By the assumption on $\epsilon_\varsigma$, $\|u^T\epsilon_{\frac{1}{\sigma}1_n}\|_\varphi\leq \omega_\varsigma, \|u^T\epsilon_{\frac{1}{\nu}1_n}\|_\varphi\leq \omega_\varsigma$. It follows from \begin{align}
     \mathbb{P}\Big(( u^T\epsilon_{\frac{1}{\sigma}1_n})^2\geq cL^{'2}\omega^2_{\varsigma}\Big)\leq \frac{1}{\psi(\frac{cL\omega_\varsigma}{\omega_\varsigma})}=\frac{1}{\psi(cL')}
\end{align}
and $ \mathbb{P}\Big(( u^T\epsilon_{\frac{1}{\nu}1_n})^2\geq cL^{'2}\omega^2_{\varsigma}\Big)\leq 1/\psi(cL')$ that \begin{align}\label{thm2varsigmabound}
    \langle\epsilon_{\varsigma},\hat {\varsigma}-\varsigma\rangle\leq 2acL^{'2}\omega^2_{\varsigma} +\frac{1}{2a}\mathrm{\bold{D}}_2(\hat {\varsigma},\varsigma^*),
\end{align}
with probability at least $1-2/\varphi(cL')$.

Plugging \eqref{probboundl1}, \eqref{split}, \eqref{thm2mbound}, and \eqref{thm2varsigmabound}   into   \eqref{basicineqgoupl2}   results in
\begin{align}\label{groupl1basicineq}
 &\,\bold{\Delta}_{l}(\hat{\mu},\mu^*)+(\etaa-b)\mathrm{\bold{D}}_2(\hat\gamma,\gamma^*)-\frac{1}{2a}\mathrm{\bold{D}}_2(\hat{\bar m},\bar m^*)-\frac{1}{2a}\mathrm{\bold{D}}_2(\hat\varsigma,\varsigma^*)\nonumber\\ \leq  &\,\big( 1+c_0b'\big)\lambda\mtbb\big\|(\hat{\bar{\beta}}-\bar{\beta}^*)_{\mathcal{J}^*}\big\|_{2,1}-\big(1- c_0b'\big)\lambda\mtbb\big\|(\hat{\bar{\beta}}-\bar{\beta}^*)_{_{\mathcal{J}^{*C}}}\big\|_{2,1}+\frac{\etaa^2}{2b}\|\gamma^*\|_{2}^2\nonumber\\&\,+ acL^2\omega^2_{\bar m}+2acL^{'2}\omega^2_{\varsigma}.
\end{align}
with probability at least $1-C/\psi\{A\psi^{-1}(p)\}-1/\psi(cL)-C/\varphi(cL')$, where  $C,c$ are  positive constants.

The regularity condition \eqref{them3regu} implies  \begin{align}\label{thm3reguused}
    &\,(1+\frac{\vartheta}{2+\vartheta})\lambda\mtbb\big\|(\hat{\bar{\beta}}-\bar{\beta}^*)_{\mathcal{J}^*}\big\|_{2,1}\nonumber\\\leq&\, \frac{2}{2+\vartheta}\bold{\Delta}_{l}(\hat{\mu},\mu^*)+(1-\frac{\vartheta}{2+\vartheta})\lambda\mtbb\big\|(\hat{\bar{\beta}}-\bar{\beta}^*)_{\mathcal{J}^{*C}}\big\|_{2,1}+\frac{2}{2+\vartheta}K\lambda^2J^*.
\end{align}
Set $b'=\vartheta/\{(2+\vartheta)c_0\}$ and
add    \eqref{groupl1basicineq} and  \eqref{thm3reguused} to get \begin{align}
 &\,  \frac{\vartheta}{2+\vartheta}\bold{\Delta}_{l}(\hat{\mu},\mu^*)+ (\etaa-b)\mathrm{\bold{D}}_2(\hat\gamma,\gamma^*)-\frac{1}{2a}\mathrm{\bold{D}}_2(\hat{\bar m},\bar m^*)-\frac{1}{2a}\mathrm{\bold{D}}_2(\hat\varsigma,\varsigma^*)\nonumber\\\leq &\,\frac{2}{2+\vartheta}K\lambda^2J^*+ acL^2\omega^2_{\bar m}+2acL^{'2}\omega^2_{\varsigma}+\frac{\etaa^2}{2b}\|\gamma^*\|_{2}^2.
\end{align}
Taking $b=\etaa/4$ and $  a=2/\etaa$ leads to
\begin{align}
    \frac{\vartheta}{2+\vartheta}\bold{\Delta}_{l}(\hat{\mu},\mu^*)+\frac{\etaa}{2}\mathrm{\bold{D}}_2(\hat\gamma,\gamma^*)\leq \frac{2}{2+\vartheta}K\lambda^2J^*+ \frac{2}{\etaa}cL^2\omega^2_{\bar m}+\frac{4}{\etaa}cL^{'2}\omega^2_{\varsigma}+2\etaa\|\gamma^*\|_{2}^2,
\end{align}
or equivalently
\begin{align}\label{thmmresult}
    \bold{\Delta}_{l}(\hat{\mu},\mu^*)+\frac{(2+\vartheta)\etaa}{2\vartheta}\mathrm{\bold{D}}_2(\hat\gamma,\gamma^*)\leq \frac{2}{\vartheta}K\lambda^2J^*+\frac{2+\vartheta}{\vartheta}\frac{2}{\etaa}\big(cL^2\omega^2_{\bar m}+2cL^{'2}\omega^2_{\varsigma}\big)+\frac{2+\vartheta}{\vartheta}2\etaa\|\gamma^*\|_{2}^2.
\end{align}
Note that $\{\etaa(2+\vartheta)\}/(2\vartheta)\geq \etaa/2$. With $\lambda= \lambda_0/b'=\{(2+\vartheta)/\vartheta\}c_0\lambda_0$, we can derive from
\eqref{thmmresult} that
 \begin{align}\label{thm3finalbound}
   &\,\bold{\Delta}_{l}(\hat{\mu},\mu^*)+\frac{\etaa}{2}\mathrm{\bold{D}}_2(\hat\gamma,\gamma^*)\nonumber\\\leq&\, Kc^2_0\frac{2(2+\vartheta)^2}{\vartheta^3}\lambda^2_0J^*+\frac{2+\vartheta}{\vartheta}\frac{2}{\etaa}\big(cL^2\omega^2_{\bar m}+2cL^{'2}\omega^2_{\varsigma}\big)+\frac{2+\vartheta}{\vartheta}2\etaa\|\gamma^*\|_{2}^2\nonumber\\\lesssim &\,
   (\frac{1}{\vartheta}\lor\frac{1}{\vartheta^3})K\omega^2_{\bar\eta}\bigg(\psi^{-1}\Big[p\psi\big\{A\psi^{-1}(p)\big\}\Big]\bigg)^2J^*+(1\lor\frac{1}{\vartheta}) \frac{1}{\etaa}\big(L^2\omega^2_{\bar m}+L^{'2}\omega^2_{\varsigma}\big)\nonumber\\&\,+(1\lor\frac{1}{\vartheta})\etaa\|\gamma^*\|_{2}^2,
\end{align}
with probability at least $1-C/\psi\{A\psi^{-1}(p)\}-1/\psi(cL)-C/\varphi(cL')$, where  $C,c$ are  positive constants.
The proof is complete.

\end{proof}

\subsection{An Elementwise  Estimation Error Bound}\label{inftyboundsubsect}
\begin{theorem}\label{inftyboundprop}
Assume that the effective noises $\epsilon_{\bar\eta},\epsilon_{\bar m}$, and $\epsilon_\varsigma$ satisfy    $\|\epsilon_{\bar\eta}\|_{\psi_q}\leq \omega_{\bar\eta},\|\epsilon_{\bar m}\|_{\psi_q}\leq \omega_{\bar m}$, and $\|\epsilon_\varsigma\|_{\varphi}\leq \omega_\varsigma$  for some  $q>0$.    Consider  $\hat\unv$ as the  optimal solution to \eqref{l1thmloss} with $\tau=0$, $\varrho\geq \rhotwoinf$, and
 \begin{align}
    \lambda=A\omega_{\bar\eta}(\log p)^{\frac{1}{q}}
 \end{align}
for some large enough $A>0$.
Suppose that  there exist some $\vartheta,\alpha>0$ and some large $K >0$ such that  for any $\unv^T=[\bar\beta^T,\gamma^T]$
\begin{align}\label{l1infre}
     &\,\alpha  nJ^*\|\bar{\beta}-\bar{\beta}^*\|^2_{2,\infty}+\alpha\mathrm{\bold{D}}_2(\gamma,\gamma^*) +  (1+\vartheta)\lambda\mtbb\big\|(\bar{\beta}-\bar{\beta}^*)_{\mathcal{J}^*}\big\|_{2,1}\nonumber\\\leq &\, \bold{\Delta}_{l}(\mu,\mu^*)+(1-\vartheta)\lambda\mtbb\big\|(\bar{\beta}-\bar{\beta}^*)_{\mathcal{J}^{*C}}\big\|_{2,1}+K\lambda^2J^*.
\end{align}
Then 
\begin{align}\label{l1infbound}
    \|\hat{\bar{\beta}}-\bar{\beta}^*\|_{2,\infty}
    \leq C\frac{\sqrt{K\alpha}\lor\vartheta}{\alpha\sqrt{n}}A\Big\{\omega_{\bar\eta}+\frac{1}{\sqrt{J^*}}(\omega_{\bar m}+\omega_{\varsigma})\Big\}(\log p)^{\frac{1}{q}}
\end{align}
with probability at least $1-Cp^{-c(A \vartheta)^q}-C/\varphi(cA \vartheta (\log p)^{\frac{1}{q}}) $, where  $C,c$ are positive constants.
\end{theorem}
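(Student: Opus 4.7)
The plan is to adapt the template of the proof of Theorem \ref{globalgrouplreconstanttheta}: start from the optimality of $\hat{\unv}$ in \eqref{l1thmloss} with $\tau=0$ to get a basic inequality, combine it with the regularity assumption \eqref{l1infre}, and control the three effective-noise cross terms by Orlicz-tail envelopes. The new ingredient, absent from \eqref{l1fastratere}, is the extra $\alpha n J^{*}\|\bar\beta-\bar\beta^{*}\|_{2,\infty}^{2}$ on the left-hand side of \eqref{l1infre}; after all the $\ell_{2,1}$ and Bregman terms have been absorbed, this is precisely what remains to yield an elementwise error bound.

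Concretely, optimality and a Bregman expansion (as in \eqref{basicineql2}, but now dropping the ridge contribution since $\tau=0$) give $\mathbf{\Delta}_{l}(\hat\mu,\mu^{*})\le \langle\epsilon_{\bar\eta},\bar X(\hat{\bar\beta}-\bar\beta^{*})\rangle+\langle\epsilon_{\bar m},\hat{\bar m}-\bar m^{*}\rangle+\langle\epsilon_{\varsigma},\hat\varsigma-\varsigma^{*}\rangle+\lambda\sqrt n\,\|(\hat{\bar\beta}-\bar\beta^{*})_{\mathcal J^{*}}\|_{2,1}-\lambda\sqrt n\,\|(\hat{\bar\beta}-\bar\beta^{*})_{\mathcal J^{*C}}\|_{2,1}$, once the $\ell_{2,1}$-penalty difference is split via $\bar\beta^{*}_{\mathcal J^{*C}}=0$ and the triangle inequality. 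The first stochastic term is handled by grouped H\"older: $|\langle\epsilon_{\bar\eta},\bar X(\hat{\bar\beta}-\bar\beta^{*})\rangle|\le c_{0}\max_{k}|\bar X_{k}^{T}\epsilon_{\bar\eta}|\cdot\|\hat{\bar\beta}-\bar\beta^{*}\|_{2,1}$; since $\|\bar X_{k}^{T}\epsilon_{\bar\eta}\|_{\psi_{q}}\le\omega_{\bar\eta}\sqrt n$, a $\psi_{q}$-Markov plus union-bound calibrated with $A$ large enough gives $c_{0}\max_{k}|\bar X_{k}^{T}\epsilon_{\bar\eta}|\le (\vartheta/2)\lambda\sqrt n$ on an event of probability at least $1-Cp^{-c(A\vartheta)^{q}}$. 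For the scalar parts, $\hat{\bar m}-\bar m^{*}=(\hat m-m^{*})1_{n}$ and each block of $\hat\varsigma-\varsigma^{*}$ is proportional to $1_{n}$; projecting onto $u=1_{n}/\sqrt n$, Young's inequality yields $|\langle\epsilon_{\bar m},\hat{\bar m}-\bar m^{*}\rangle|+|\langle\epsilon_{\varsigma},\hat\varsigma-\varsigma^{*}\rangle|\le (\alpha/2)\mathbf{D}_{2}(\hat\gamma,\gamma^{*})+(C/\alpha)\bigl[(u^{T}\epsilon_{\bar m})^{2}+(u^{T}\epsilon_{\frac1\sigma 1_{n}})^{2}+(u^{T}\epsilon_{\frac1\nu 1_{n}})^{2}\bigr]$, and each of those three squared inner products is at most $A^{2}\vartheta^{2}\omega^{2}_{\cdot}(\log p)^{2/q}$ with probability at least $1-Cp^{-c(A\vartheta)^{q}}-C/\varphi(cA\vartheta(\log p)^{1/q})$.

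Adding \eqref{l1infre} evaluated at $(\hat{\bar\beta},\hat\gamma)$ to the basic inequality collapses the bookkeeping: the $(1+\vartheta)$ factor on the $\mathcal J^{*}$-side leaves a residual $\vartheta\lambda\sqrt n\|(\hat{\bar\beta}-\bar\beta^{*})_{\mathcal J^{*}}\|_{2,1}$, while the $(1-\vartheta)$ factor on the complement side combined with $-\lambda\sqrt n\|(\cdot)_{\mathcal J^{*C}}\|_{2,1}$ leaves $\vartheta\lambda\sqrt n\|(\hat{\bar\beta}-\bar\beta^{*})_{\mathcal J^{*C}}\|_{2,1}$. Absorbing the $\epsilon_{\bar\eta}$-term into the resulting $\vartheta\lambda\sqrt n\|\hat{\bar\beta}-\bar\beta^{*}\|_{2,1}$ and the $\bar m,\varsigma$-terms into $\alpha\mathbf{D}_{2}(\hat\gamma,\gamma^{*})/2$ gives $\alpha n J^{*}\|\hat{\bar\beta}-\bar\beta^{*}\|_{2,\infty}^{2}\lesssim K\lambda^{2}J^{*}+A^{2}\vartheta^{2}(\log p)^{2/q}(\omega_{\bar m}^{2}+\omega_{\varsigma}^{2})/\alpha$. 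Dividing by $\alpha n J^{*}$, taking square roots, substituting $\lambda=A\omega_{\bar\eta}(\log p)^{1/q}$, and folding prefactors via $\sqrt K/\sqrt\alpha=\sqrt{K\alpha}/\alpha$ consolidates the leading constant to $(\sqrt{K\alpha}\lor\vartheta)/\alpha$, producing \eqref{l1infbound} on the intersection of the three tail events. The main obstacle is the joint calibration of $A$, $\lambda$, the noise envelope for $\max_{k}|\bar X_{k}^{T}\epsilon_{\bar\eta}|$, and the Young constant on the scalar noise so that (i) the $(1\pm\vartheta)$ cancellations leave strictly positive $\vartheta\lambda\sqrt n\|\hat{\bar\beta}-\bar\beta^{*}\|_{2,1}$ rather than being overwhelmed by the stochastic envelope, and (ii) the three Markov-type failure probabilities compose into the stated $1-Cp^{-c(A\vartheta)^{q}}-C/\varphi(cA\vartheta(\log p)^{1/q})$ without a logarithmic mismatch between the $\psi_{q}$- and $\varphi$-tail scales.
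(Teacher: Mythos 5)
Your proposal is correct and follows essentially the same route as the paper's proof: the same basic inequality from optimality with $\tau=0$, the same grouped H\"older/union-bound treatment of $\bar X^T\epsilon_{\bar\eta}$ calibrated against $\vartheta\lambda$, the same projection-onto-$1_n$ plus Young's inequality for the $\bar m$ and $\varsigma$ noise with weight $\alpha$ matched to the Bregman terms in \eqref{l1infre}, and the same cancellation against the regularity condition to isolate $\alpha nJ^*\|\hat{\bar\beta}-\bar\beta^*\|_{2,\infty}^2$. The only differences are cosmetic calibrations (your $\vartheta/2$ slack versus the paper's exact cancellation via $b'=\vartheta/c_0$, and the paper's final reparametrization $A'=A/\vartheta$ to reach the stated failure probability), neither of which affects correctness.
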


The element-wise error bound \eqref{l1infbound} together with a signal strength condition guarantees faithful variable
selection  with high probability; see  Remark \ref{rem-moreres}.

\begin{proof}

By the  optimality of $\hat{\mu}$: $l(\hat{\mu})+\lambda\mtbb\|\hat{\bar{\beta}}\|_{2,1}\leq l(\mu^*)+\lambda\mtbb\|\bar{\beta}^*\|_{2,1}$,  we obtain the basic inequality
\begin{align}\label{thmvarselbasicineq}
  \bold{\Delta}_{l}(\hat{\mu},\mu^*)\leq  \big\langle\epsilon_{\bar{\eta}},\bar{X}\hat{\bar{\beta}}-\bar{X}\bar{\beta}^*\big\rangle+\langle\epsilon_{\bar m},\hat {\bar m}-\bar m\rangle+\langle\epsilon_{\varsigma},\hat {\varsigma}-\varsigma\rangle+\lambda\mtbb\|\bar{\beta}^*\|_{2,1}-\lambda\mtbb\|\hat{\bar{\beta}}\|_{2,1}.
\end{align}

We follow the same lines as   in Section \ref{appsub:proofofth2} to bound the  stochastic terms on the right-hand side of \eqref{thmvarselbasicineq}.
Let
 \begin{align}
 \lambda_0=A\omega_{\bar\eta}(\log p)^{\frac{1}{q}}
 \end{align}
 for some large enough $A>0$ and $\lambda=\lambda_0/b'$ for some $b'>0$. Similar to the previous analysis, we have  with probability at least $1-Cp^{-A^q}$, \begin{align}\label{thmvarseleta}
    &\,\big\langle\epsilon_{\bar{\eta}},\bar{X}\hat{\bar{\beta}}-\bar{X}\bar{\beta}^*\big\rangle+\lambda\mtbb\|\bar{\beta}^*\|_{2,1}-\lambda\mtbb\|\hat{\bar{\beta}}\|_{2,1}\nonumber\\\leq &\, \big( 1+c_0 b'\big)\lambda\mtbb\big\|(\hat{\bar{\beta}}-\bar{\beta}^*)_{\mathcal{J}^*}\big\|_{2,1}-\big(1- c_0 b'\big)\lambda\mtbb\big\|(\hat{\bar{\beta}}-\bar{\beta}^*)_{_{\mathcal{J}^{*C}}}\big\|_{2,1},
\end{align}
where the constant $c_0\geq \sqrt{2}$. Moreover, for any $a>0$, we get
\begin{align}\label{thmvarselmsigma}
     \langle\epsilon_{\bar m},\hat {\bar m}-\bar m^*\rangle+  \langle\epsilon_{\varsigma},\hat {\varsigma}-\varsigma\rangle\leq  acL^2\omega^2_{\bar m}+2acL^{'2}\omega^2_{\varsigma}+\frac{1}{2a}\big\{\mathrm{\bold{D}}_2(\hat {\bar m},\bar m^*)+\mathrm{\bold{D}}_2(\hat {\varsigma},\varsigma^*)\big\},
\end{align}
with probability at least $1-1/\psi_q(cL)-C/\varphi(cL')$, where $L,L'>0$ can be customized.

Plugging the bounds of \eqref{thmvarseleta} and \eqref{thmvarselmsigma} into  \eqref{thmvarselbasicineq},  we get
\begin{align*}\label{thmvarselmedineq}
    &\, \bold{\Delta}_{l}(\hat{\mu},\mu^*)-\frac{1}{2a}\big\{\mathrm{\bold{D}}_2(\hat {\bar m},\bar m^*)+\mathrm{\bold{D}}_2(\hat {\varsigma},\varsigma^*)\big\} \\
\leq &\, \big( 1+c_0 b'\big)\lambda\mtbb\big\|(\hat{\bar{\beta}}-\bar{\beta}^*)_{\mathcal{J}^*}\big\|_{2,1}-\big(1- c_0 b'\big)\lambda\mtbb\big\|(\hat{\bar{\beta}}-\bar{\beta}^*)_{_{\mathcal{J}^{*C}}}\big\|_{2,1}+acL^2\omega^2_{\bar m}+2acL^{'2}\omega^2_{\varsigma},
\end{align*}
with probability at least
$1-Cp^{-A^q}-1/\psi_q(cL)-C/\varphi(cL')$, where  $C,c$ are  positive constants.

With  the regularity condition and and $b'=\vartheta/c_0$, we obtain
\begin{align}\label{prop2ineq}
   &\, \alpha \big(nJ^*\|\bar{\beta}-\bar{\beta}^*\|^2_{2,\infty}+\mathrm{\bold{D}}_2(\gamma,\gamma^*)\big)-\frac{1}{2a}\big\{\mathrm{\bold{D}}_2(\hat {\bar m},\bar m^*)+\mathrm{\bold{D}}_2(\hat {\varsigma},\varsigma^*)\big\}\nonumber\\\leq &\, K\lambda^2J^*+acL^2\omega^2_{\bar m}+2acL^{'2}\omega^2_{\varsigma}.
\end{align}

Setting  $a=1/\alpha$ and using $\lambda=\lambda_0/b'=(c_0/\vartheta)\lambda_0$, we obtain
\begin{align}
    \|\bar{\beta}-\bar{\beta}^*\|^2_{2,\infty}\leq &\, \frac{Kc_0^2}{n\alpha\vartheta^2}A^2\omega^2_{\bar\eta}(\log p)^{\frac{2}{q}}+\frac{cL^2\omega^2_{\bar m}}{n\alpha^2J^*}+\frac{2cL^{'2}\omega^2_{\varsigma}}{n\alpha^2J^*}
\end{align}
with probability at least $1-Cp^{-A^q}-1/\psi_q(cL)-C/\varphi(cL')$.
With $L=L'=A (\log p)^{\frac{1}{q}}$,
\begin{align}
    \|\bar{\beta}-\bar{\beta}^*\|^2_{2,\infty}\leq &\, \frac{Kc_0^2}{n\alpha\vartheta^2}A^2\omega^2_{\bar\eta}(\log p)^{\frac{2}{q}}+\frac{c}{n\alpha^2J^*}A^2(\log p)^{\frac{2}{q}}(\omega^2_{\bar m}+\omega^2_{\varsigma})\nonumber\\\leq&\, C\frac{A^2}{n\alpha^2\vartheta^2}(\log p)^{\frac{2}{q}}\big\{K\alpha\omega^2_{\bar\eta}+\frac{\vartheta^2}{J^*}(\omega^2_{\bar m}+\omega^2_\varsigma)\big\},
\end{align}
which implies
\begin{align}
      \|\bar{\beta}-\bar{\beta}^*\|_{2,\infty}\leq C \frac{\sqrt{K\alpha}\lor\vartheta}{\sqrt{n}\alpha\vartheta}A(\log p)^{\frac{1}{q}}\big\{\omega_{\bar\eta}+\frac{1}{\sqrt{J^*}}(\omega_{\bar m}+\omega_{\varsigma})\big\},
\end{align}
with probability at least $1-Cp^{-cA^q}-C/\varphi(cA (\log p)^{ {1}/{q}})$.
 Hence by taking $A'=A/\vartheta$ and $\lambda=A'\omega_{\bar \eta}(\log p)^{\frac{1}{q}}$,  we have
\begin{align}
      \|\bar{\beta}-\bar{\beta}^*\|_{2,\infty}\leq C \frac{\sqrt{K\alpha}\lor\vartheta}{\sqrt{n}\alpha}A'(\log p)^{\frac{1}{q}}\big\{\omega_{\bar\eta}+\frac{1}{\sqrt{J^*}}(\omega_{\bar m}+\omega_{\varsigma})\big\},
\end{align}
with probability at least $1-Cp^{-c(A'\vartheta)^q}-C/\varphi(cA'\vartheta (\log p)^{\frac{1}{q}})$. The proof is complete.
\end{proof}

\subsection{Analysis of a General Penalty}\label{generalpenaltysubsect}
Consider  the following problem associated with
 a general sparsity-inducing penalty $P$
\begin{align}\label{quadraticpenaltylossap}
   l(\mu)+\|\mtbb\bar{\beta}\|_{2,P}+\frac{\etaa}{2}\|\mu\|^2_2,
\end{align}
where   $\|\cdot\|_{2,P}$ is  short  for $\|\cdot\|_{2,P(\cdot;\lambda)}$  and $\|\bar{\beta}\|_{2,P(\cdot;\lambda)}:=\sum^p_{k=1}P(\|\bar{\beta}_k\|_2;\lambda)$. 

Since a sparsity-inducing penalty necessarily possesses thresholding power, we assume,  without loss of generality,  that $P(\cdot;\lambda)\geq P_{H}(\cdot;\lambda)$ throughout the subsection, where the ``hard penalty'' $ P_{H}(t;\lambda):=(-t^2/2+\lambda |t|)1_{|t|<\lambda}+(\lambda^2/2)1_{|t|\geq\lambda}$ is induced by the hard-thresholding, and   
 $P_{2,H}(\bar{\beta};\lambda):=\sum^p_{k=1}P_{H}(\|\bar{\beta}_k\|_2;\lambda)$ (see \cite{she2016finite} for more details).   Furthermore, a penalty is referred to as subadditive if it satisfies  $P(t+s;\lambda)\leq P(t;\lambda)+P(s;\lambda)$.
In fact, when $P(t)$ is concave on $\mathbb R_+$ and $P(0)=0$,  $P(|t|)$ is necessarily subadditive. Well-known examples   include the widely used $\ell_1$-penalty, $\ell_0$-penalty, SCAD, MCP, and bridge $\ell_r$ $(0< r<1)$  in the literature. 

\begin{theorem}\label{globalthmsubgaussian}

Assume that the effective noises $\epsilon_{\bar\eta},\epsilon_{\bar m}$, and $\epsilon_\varsigma$ satisfy    $\|\epsilon_{\bar\eta}\|_{\psi_2}\leq \omega_{\bar\eta},\|\epsilon_{\bar m}\|_{\psi_2}\leq \omega_{\bar m}$, and $\|\epsilon_\varsigma\|_{\varphi}\leq \omega_\varsigma$ where  $\{\varphi^{-1}(\cdot)\}^2$  concave or $\{\varphi^{-1}(t)\}^2\lesssim t$ on $\mathbb{R}_+$ (for example, $\varphi$ can be an $L_q$-norm  with $q\geq 2$ or a $\psi_q$-norm with $q>0$).
Let  $\hat\unv$ denote the  optimal solution  of \eqref{quadraticpenaltylossap} with $\varrho\geq \rhotwo: = \|\bar{X}\|_2$.

 (a) Let $\lambda=A\omega_{\bar\eta}\sqrt{\log(ep)}/\sqrt{\etaa\land 1}$ with A a sufficiently large constant.  Then the following bound always holds
 \begin{align}
       \mathbb{E}\bigg\{ \bold{\Delta}_{l}(\hat{\mu},\mu^*)+\frac{\etaa}{2}\mathrm{\bold{D}}_2(\hat{\mu},\mu^*)\bigg\}\lesssim&\,\big\| \varrho\bar{\beta}^*\big\|_{2,P(\cdot;\lambda)}+\frac{1}{\etaa\land 1}\omega^2_{\bar\eta}+\frac{1}{\etaa}\big(\omega^2_{\bar m}+\omega_\varsigma^2\big)+\etaa\|\mu^*\|_{2}^2.
 \end{align}

 (b)  Let $P$ be a  sub-additive penalty.
Assume that there exist some $\alpha\geq0$, $\vartheta>0$, and some large $K>0$  such that
\begin{align}\label{alterregugeneral}
  \alpha  \mathrm{\bold{D}}_2(\mu,\mu^*)+ (1+\vartheta)\|\varrho(\bar{\beta}-\bar{\beta}^*)_{\mathcal{J}^*}\|_{2,P}\leq \bold{\Delta}_{l}(\mu,\mu^*)+(1-\vartheta)\|\varrho(\bar{\beta}-\bar{\beta}^*)_{\mathcal{J}^{*C}}\|_{2,P}+K\lambda^2J^*,
\end{align}
 for all $\unv^T=[\bar\beta^T,\bar m^T,\varsigma^T]$,
where $\lambda=A\omega_{\bar\eta}\sqrt{\log(ep)}/\sqrt{\{(\etaa+\alpha)\land \vartheta\}\vartheta}$ with A a sufficiently large constant.
Then
\begin{align}
  \mathbb{E}\,\mathrm{\bold{ D}}_2(\hat\mu,\mu^*)\lesssim \ &\,\frac{\omega_{\bar\eta}^2}{(\etaa+\alpha)\{(\etaa+\alpha)\land \vartheta\}\vartheta} \, \big\{KA^2J^*\log(ep)+\vartheta\big\}\nonumber\\ &+\frac{\omega^2_{\bar m}+\{\varphi^{-1}(1)\}^2\omega^2_{\varsigma}}{(\etaa+\alpha)^2}  +\big(1\land \frac{\etaa}{\alpha}\big)^2\|\mu^*\|_{2}^2.
\end{align}

\end{theorem}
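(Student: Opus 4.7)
The plan is to extend the blueprint of Theorems \ref{globalgroupl1}--\ref{globalgrouplregeneral} from the $(2,1)$-penalty to a general sparsity-inducing penalty $P\ge P_H$. I would begin from the optimality of $\hat\unv$ for \eqref{quadraticpenaltylossap}; a Taylor expansion of $l$ at $\mu^*$ gives the basic inequality
\begin{align*}
\mathbf{\Delta}_{l}(\hat\mu,\mu^*) + \frac{\etaa}{2}\|\hat\mu-\mu^*\|_2^2 \leq\ & \langle \epsilon_{\bar\eta}, \bar X(\hat{\bar\beta}-\bar\beta^*)\rangle + \langle \epsilon_{\bar m},\hat{\bar m}-\bar m^*\rangle + \langle \epsilon_\varsigma,\hat\varsigma-\varsigma^*\rangle \\
& -\etaa\langle \mu^*,\hat\mu-\mu^*\rangle + \|\varrho\bar\beta^*\|_{2,P} - \|\varrho\hat{\bar\beta}\|_{2,P}.
\end{align*}
The cross term $-\etaa\langle\mu^*,\hat\mu-\mu^*\rangle$ is handled by Young's inequality, yielding a $(b/2)\|\hat\mu-\mu^*\|_2^2$ contribution absorbable on the left and an $(\etaa^2/(2b))\|\mu^*\|_2^2$ bias; the $\epsilon_{\bar m}$- and $\epsilon_\varsigma$-terms are treated exactly as in Appendix \ref{appsub:proofofth1}, projected onto the one-dimensional subspace $\mathrm{span}\{1_n/\sqrt n\}$ and controlled through Lemma \ref{psinormconvertlemma}, delivering the $\{\psi_2^{-1}(1)\}^2\omega_{\bar m}^2$ and $\{\varphi^{-1}(1)\}^2\omega_\varsigma^2$ second-moment bounds.

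The distinctive step is the control of $\langle\epsilon_{\bar\eta},\bar X(\hat{\bar\beta}-\bar\beta^*)\rangle$ for a general (possibly nonconvex) penalty. A Massart-type sub-Gaussian maximal inequality gives $\mathbb E\,\|\bar X^T\epsilon_{\bar\eta}\|_\infty \lesssim \omega_{\bar\eta}\,\varrho\sqrt{\log(ep)}$ with sub-Gaussian deviations, and the same reasoning bounds the column-block norms $\|\Xi_k^T\epsilon_{\bar\eta}\|_2$. For each block $k$ I would then invoke the elementary identity $\lambda t\le P_H(t;\lambda)+t^2/2$ (which holds with equality on $[0,\lambda]$ and follows from AM--GM on $[\lambda,\infty)$) combined with $P\ge P_H$, so that on the event $\{2\|\bar X^T\epsilon_{\bar\eta}\|_\infty\le \lambda\varrho\}$ one obtains
\begin{align*}
\langle\epsilon_{\bar\eta},\bar X(\hat{\bar\beta}-\bar\beta^*)\rangle \le \tfrac12\|\varrho(\hat{\bar\beta}-\bar\beta^*)\|_{2,P} + \tfrac{\varrho^2}{4}\|\hat{\bar\beta}-\bar\beta^*\|_2^2.
\end{align*}
The residual quadratic is dominated by $((\etaa\land 1)/2)\|\hat\mu-\mu^*\|_2^2$ once $\lambda$ is chosen so that $\lambda^2(\etaa\land 1)\gtrsim \omega_{\bar\eta}^2\log(ep)$---exactly the scaling in part (a). Combining with $\|\varrho\bar\beta^*\|_{2,P}-\|\varrho\hat{\bar\beta}\|_{2,P}\le \|\varrho\bar\beta^*\|_{2,P}$ and taking expectation (the complementary event contributes $\mathcal O(1/\psi_2(A\sqrt{\log(ep)}))$ absorbed into the displayed bound) settles part (a).

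For part (b), subadditivity upgrades the penalty-difference bound to
\begin{align*}
\|\varrho\bar\beta^*\|_{2,P}-\|\varrho\hat{\bar\beta}\|_{2,P} \le \|\varrho(\hat{\bar\beta}-\bar\beta^*)_{\mathcal J^*}\|_{2,P} - \|\varrho(\hat{\bar\beta}-\bar\beta^*)_{\mathcal J^{*C}}\|_{2,P}.
\end{align*}
Splitting the $\epsilon_{\bar\eta}$-contribution along $\mathcal J^*$ and $\mathcal J^{*C}$ and absorbing its off-support part into $(1-c\vartheta)\|\varrho(\hat{\bar\beta}-\bar\beta^*)_{\mathcal J^{*C}}\|_{2,P}$ via the same Young inequality leaves a surviving on-support term of the form $(1+c\vartheta)\|\varrho(\hat{\bar\beta}-\bar\beta^*)_{\mathcal J^*}\|_{2,P}$, which I would then eliminate using the generalized compatibility condition \eqref{alterregugeneral}: this exchanges it for $\mathbf{\Delta}_l(\hat\mu,\mu^*) + \alpha\mathrm{\mathbf D}_2(\hat\mu,\mu^*) + K\lambda^2J^*$. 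Picking $b$ to balance $\etaa+\alpha$ on the quadratic side then produces the three stated summands: the sparse-signal term $K\omega_{\bar\eta}^2\log(ep)\,J^*$, the secondary-noise term $\omega_{\bar m}^2+\{\varphi^{-1}(1)\}^2\omega_\varsigma^2$, and the shrinkage bias $(1\land \etaa/\alpha)^2\|\mu^*\|_2^2$, each divided by the appropriate strong-convexity constants, matching the denominators in the theorem.

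The principal obstacle is that $P$ is not assumed convex and admits no tractable dual norm, so the standard decoupling $\langle\epsilon,\bar X\xi\rangle\le \|\bar X^T\epsilon\|_*\|\xi\|_{2,P}$ is unavailable. The crux of the argument is therefore the hard-threshold majorization $P\ge P_H$ combined with the block-wise Young inequality $\lambda t\le P(t;\lambda)+t^2/2$, which converts each column-wise sub-Gaussian maximum into a half-penalty plus a small quadratic absorbable by the $\etaa$-generated (part (a)) or compatibility-generated (part (b)) strong convexity. A secondary technicality is carrying the expectation through the peeling event on $\|\bar X^T\epsilon_{\bar\eta}\|_\infty$; this is addressed by taking $A$ large enough so that the residual tail probability $1/\psi_2(A\sqrt{\log(ep)})$ is polynomially small in $p$ and absorbed into the leading $J^*\log(ep)$ term.
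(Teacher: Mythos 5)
Your overall skeleton matches the paper's: the same basic inequality from optimality, the same Young-inequality treatment of $-\etaa\langle\mu^*,\hat\mu-\mu^*\rangle$, the same rank-one projection argument for $\epsilon_{\bar m}$ and $\epsilon_\varsigma$, and the same bookkeeping with $P\ge P_H$, subadditivity, and the regularity condition \eqref{alterregugeneral} in part (b). The gap is in the one step that actually distinguishes this theorem from the $(2,1)$-penalty results, namely the control of $\langle\epsilon_{\bar\eta},\bar X(\hat{\bar\beta}-\bar\beta^*)\rangle$. Your blockwise inequality $\lambda t\le P_H(t;\lambda)+t^2/2$ applied with $t=\varrho\|\hat{\bar\beta}_k-\bar\beta^*_k\|_2$ leaves a quadratic remainder of order $\varrho^2\|\hat{\bar\beta}-\bar\beta^*\|_2^2$ in \emph{coefficient} space. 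Nothing on the left-hand side of the basic inequality controls this quantity: the ridge term in \eqref{quadraticpenaltylossap} is $\frac{\etaa}{2}\|\mu\|_2^2$ with $\mu=[\bar\eta^T,\bar m^T,\varsigma^T]^T$ and $\bar\eta=\bar X\bar\beta$, so the available strong convexity is only in $\|\bar X(\hat{\bar\beta}-\bar\beta^*)\|_2^2$, and the inequality $\|\bar X\xi\|_2^2\le\varrho^2\|\xi\|_2^2$ runs in the wrong direction for absorption (think of $p>n$, where $\hat{\bar\beta}-\bar\beta^*$ can have a large component in the null space of $\bar X$). Part (a) deliberately assumes no restricted-eigenvalue condition, so there is no way to convert the coefficient-scale quadratic back to prediction scale.

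The paper circumvents this by proving a sparsity-adaptive concentration bound directly on
\begin{align*}
R=\sup_{\bar\beta}\Big\{\langle\epsilon_{\bar\eta},\bar X(\bar\beta-\bar\beta^*)\rangle-\tfrac{1}{2a'}\|\bar X(\bar\beta-\bar\beta^*)\|_2^2-\tfrac{1}{2b'}P_{2,H}\big(\varrho(\bar\beta-\bar\beta^*);\sqrt{a'b'}A\lambda_0\big)\Big\},
\end{align*}
showing $\mathbb{E}[R]\lesssim a'\omega_{\bar\eta}^2$ (Lemma \ref{basiclemma}); here the quadratic is at prediction scale and can be absorbed by $\frac{\etaa}{2}\mathrm{\mathbf{D}}_2(\hat{\bar\eta},\bar\eta^*)$. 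That lemma is the real content: it uses an optimization argument to replace $P_H$ by the $\ell_0$-type penalty $P_0$ at a global minimizer, then a union bound over supports of each cardinality $J$ combined with a Hanson--Wright/chaining bound on $\sup_{\alpha\in\Gamma_J}\langle\epsilon,\alpha\rangle^2$ (Lemma \ref{supboundlm2}), which is where the $J\log(ep)$ scaling comes from. Your dual-norm/event-based route works for the $\ell_1$-type penalty precisely because H\"older's inequality pairs $\|\bar X^T\epsilon\|_\infty$ with $\|\hat{\bar\beta}-\bar\beta^*\|_{2,1}$ with no quadratic remainder; for a general $P\ge P_H$ that pairing is unavailable, and the deterministic Young substitute does not close. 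Bounding $\mathbb{E}[R]$ directly also sidesteps the secondary issue you flag of taking expectations on a high-probability event, which for an unbounded excess would otherwise require a separate second-moment argument on the complement.
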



 According to  the proof below, \eqref{alterregugeneral} can be relaxed to
$\vartheta P_{2,H}\big(\varrho(\bar{\beta}-\bar{\beta}^*)_{\mathcal{J}^*};\lambda\big)+\alpha  \mathrm{\bold{D}}_2(\mu,\mu^*)+\|\varrho(\bar{\beta}-\bar{\beta}^*)_{\mathcal{J}^*}\|_{2,P}\nonumber \leq    \bold{\Delta}_{l}(\mu,\mu^*)+\|\varrho(\bar{\beta}-\bar{\beta}^*)_{\mathcal{J}^{*C}}\|_{2,P}-\vartheta P_{2,H}\big(\varrho(\bar{\beta}-\bar{\beta}^*)_{\mathcal{J}^{*C}};\lambda\big)+K\lambda^2J^*
$, or $
 \vartheta P_{2,H}\big(\varrho(\bar{\beta}-\bar{\beta}^*);\lambda\big)+\alpha  \mathrm{\bold{D}}_2(\mu,\mu^*)+\| \varrho\bar{\beta}^*\|_{2,P}\leq  \bold{\Delta}_{l}(\mu,\mu^*)+\|\varrho\bar{\beta}\|_{2,P}+K\lambda^2J^*
$ if $P$ is not subadditive.

\begin{proof}

  By definition,   $l(\hat{\zeta})+\frac{\etaa}{2}\|\hat\mu\|^2_2+\big\|\mtbb\hat {\bar\beta}\big\|_{2,P}\leq l(\zeta^*)+\frac{\etaa}{2}\|\mu^*\|^2_2+\big\| \mtbb\bar{\beta}^*\big\|_{2,P}$, which means
 \begin{align}\label{basiceq}
    &\,\bold{\Delta}_{l}(\hat{\mu},\mu^*)+\frac{\etaa}{2}\|\hat{\mu}-\mu^*\|^2_2\nonumber\\\leq &\, \big\langle\epsilon_{\bar{\eta}},\bar{X}\hat{\bar{\beta}}-\bar{X}\bar{\beta}^*\big\rangle+\langle\epsilon_{\bar m},\hat {\bar m}-\bar m^*\rangle+\langle\epsilon_{\varsigma},\hat {\varsigma}-\varsigma^*\rangle+\langle-\etaa\mu^*,\hat{\mu}-\mu^*\rangle\nonumber\\&\,+\big\| \mtbb\bar{\beta}^*\big\|_{2,P}-\big\|\mtbb\hat {\bar\beta}\big\|_{2,P}\nonumber\\\leq &\,\big\langle\epsilon_{\bar{\eta}},\bar{X}\hat{\bar{\beta}}-\bar{X}\bar{\beta}^*\big\rangle+\langle\epsilon_{\bar m},\hat {\bar m}-\bar m^*\rangle+\langle\epsilon_{\varsigma},\hat {\varsigma}-\varsigma^*\rangle+\frac{b}{2}\|\hat\mu-\mu^*\|^2_2+\frac{\etaa^2}{2b}\|\mu^*\|_{2}^2\nonumber\\&\,+\big\| \mtbb\bar{\beta}^*\big\|_{2,P}-\big\|\mtbb\hat {\bar\beta}\big\|_{2,P},
\end{align}
where $b$ can be any positive number.

To bound the stochastic term $\langle\epsilon_{\bar\eta},\bar{X}(\hat{\bar{\beta}}-\bar{\beta}^*)\rangle$,
define  $\lambda_0=\omega_{\bar\eta}\sqrt{\log(ep)}$  and
\begin{align}\label{Rdef}
    R=&\,\sup_{\bar{\beta}\in \mathbb{R}^{2p}}\Big\{\langle\epsilon_{\bar\eta},\bar{
    X}(\bar{\beta}-\bar{\beta}^*)\rangle-\frac{1}{2a'}\big\|\bar{
    X}(\bar{\beta}-\bar{\beta}^*)\big\|^2_2-\frac{1}{2b'}P_{2,H}\big(\varrho(\bar{\beta}-\bar{\beta}^*);\sqrt{a'b'}A\lambda_0\big)\Big\}.
\end{align}
By Lemma \ref{basiclemma}, for any $a'\geq 2b'>0$ and a sufficiently large constant $A$,  $\mathbb{P}(R\geq a'\omega^2_{\bar\eta} t)\leq C\exp(-ct)p^{{-cA^2}}$. Therefore, $\mathbb{E}[R]\lesssim a' \omega^2_{\bar\eta}$.

Next, we bound      $\langle\epsilon_{\bar m},\hat {\bar m}-\bar m^*\rangle$ by \begin{align}
   \langle\epsilon_{\bar m},\hat {\bar m}-\bar m^*\rangle=\langle P_{1_n}\epsilon_{\bar m},\hat {\bar m}-\bar m^*\rangle\leq | u^T\epsilon_{\bar m}|\cdot\|\hat {\bar m}-\bar m^*\|_2\leq a(u^T\epsilon_{\bar m})^2+\frac{1}{2a}\mathrm{\bold{D}}_2(\hat {\bar m},\bar m^*)
\end{align}
 for any $a>0$,
where $P_{1_n}=uu^T$ with $u=(1/\sqrt{n})1_n$. By the assumption on $\epsilon_{\bar m}$, $\|(u^T\epsilon_{\bar m})^2\|_{\psi_1}\lesssim\|u^T\epsilon_{\bar m}\|^2_{\psi_2}\leq \omega^2_{\bar m}$, from which it follows that $ \mathbb{E}[ (u^T\epsilon_{\bar m})^2]\leq C\omega^2_{\bar m}$.
Similarly, for the stochastic term $\langle\epsilon_{\varsigma},\hat {\varsigma}-\varsigma^*\rangle$ with $\epsilon_{\varsigma}=[\epsilon^T_{\frac{1}{\sigma}1_n},\epsilon^T_{\frac{1}{\nu}1_n}]^T$ and $\hat {\varsigma}-\varsigma^*=[(1/\hat\sigma-1/\sigma^*)1^T_n,(1/\hat\nu-1/\nu^*)1^T_n]^T$, we have
\begin{align}
   &\,\big\langle\epsilon_{\frac{1}{\sigma}1_n},(\frac{1}{\hat\sigma}-\frac{1}{\sigma^*})1_n\big\rangle+\big\langle\epsilon_{\frac{1}{\nu}1_n},(\frac{1}{\hat\nu}-\frac{1}{\nu^*})1_n\big\rangle\nonumber\\=&\,\big\langle P_{1_n}\epsilon_{\frac{1}{\sigma}1_n},(\frac{1}{\hat\sigma}-\frac{1}{\sigma^*})1_n\big\rangle+\big\langle P_{1_n}\epsilon_{\frac{1}{\nu}1_n},(\frac{1}{\hat\nu}-\frac{1}{\nu^*})1_n\big\rangle\nonumber\\
  \leq &\, a(u^T\epsilon_{\frac{1}{\sigma}1_n})^2+a(u^T\epsilon_{\frac{1}{\nu}1_n})^2+\frac{1}{2a}\mathrm{\bold{D}}_2(\hat {\varsigma},\varsigma^*).\nonumber
\end{align}
From Lemma \ref{psinormconvertlemma}, we get a bound in $L_2$-norm:
\begin{align}
    \mathbb{E} \big[( u^T\epsilon_{\frac{1}{\sigma}1_n})^2\big]\leq \{\varphi^{-1}(1)\}^2\|u^T\epsilon_{\frac{1}{\sigma}1_n}\|^2_{\varphi}\leq  \{\varphi^{-1}(1)\}^2\omega^2_{\varsigma},\nonumber
\end{align}
and  $ \mathbb{E} [( u^T\epsilon_{\frac{1}{\nu}1_n})^2]\leq  \{\varphi^{-1}(1)\}^2\omega^2_{\varsigma}$.
 To sum up, for any $a>0$,
\begin{align}\label{gammaineq1}
   \mathbb{E} \big\{   \langle\epsilon_{\bar m},\hat {\bar m}-\bar m^*\rangle+\langle\epsilon_{\varsigma},\hat {\varsigma}-\varsigma^*\rangle\big\}\leq aC\omega^2_{\bar m}+2a\{\varphi^{-1}(1)\}^2\omega^2_{\varsigma}+\frac{1}{2a}\{\mathrm{\bold{D}}_2(\hat {\bar m},\bar m^*)+\mathrm{\bold{D}}_2(\hat {\varsigma},\varsigma^*)\}.
\end{align}

Now, plugging the bounds  \eqref{Rdef} and  \eqref{gammaineq1} into \eqref{basiceq} yields
\begin{align}\label{middlebound1}
&\,\mathbb{E}\Big\{\bold{\Delta}_{l}(\hat{\mu},\mu^*)+\frac{\etaa-b}{2}\|\hat{\mu}-\mu^*\|^2_2\Big\}-\frac{1}{a'}\mathrm{\bold{D}}_2(\hat{\bar\eta},\bar{\eta}^*)-\frac{1}{2a}\{\mathrm{\bold{D}}_2(\hat {\bar m},\bar m^*)+\mathrm{\bold{D}}_2(\hat {\varsigma},\varsigma^*)\} \nonumber\\\leq &\, \frac{1}{2b'}P_{2,H}\big(\varrho(\hat{\bar{\beta}}-\bar{\beta}^*);\sqrt{a'b'}A\lambda_0\big)+\big\| \varrho\bar{\beta}^*\big\|_{2,P(\cdot;\lambda)}-\big\|\varrho\hat {\bar\beta}\big\|_{2,P(\cdot;\lambda)}\nonumber\\&+aC\omega^2_{\bar m}+2a\{\varphi^{-1}(1)\}^2\omega^2_{\varsigma}+\frac{\etaa^2}{2b}\|\mu^*\|_{2}^2+Ca'\omega^2_{\bar\eta}.
\end{align}

To prove part  (a), we use the subadditivity of $P_H$:
\begin{align}
    &\,\mathbb{E}\Big\{\bold{\Delta}_{l}(\hat{\mu},\mu^*)+\frac{\etaa-b}{2}\|\hat{\mu}-\mu^*\|^2_2\Big\}-\frac{1}{a'}\mathrm{\bold{D}}_2(\hat{\bar\eta},\bar{\eta}^*)-\frac{1}{2a}\{\mathrm{\bold{D}}_2(\hat {\bar m},\bar m^*)+\mathrm{\bold{D}}_2(\hat {\varsigma},\varsigma^*)\}\nonumber\\\leq &\, \frac{1}{2b'}P_{2,H}\big(\varrho\hat{\bar{\beta}};\sqrt{a'b'}A\lambda_0\big)+\frac{1}{2b'}P_{2,H}\big(\bar{\beta}^*;\sqrt{a'b'}A\lambda_0\big)+\big\| \varrho\bar{\beta}^*\big\|_{2,P(\cdot;\lambda)}-\big\|\varrho\hat {\bar\beta}\big\|_{2,P(\cdot;\lambda)} \nonumber\\&+aC\omega^2_{\bar m}+2a\{\varphi^{-1}(1)\}^2\omega^2_{\varsigma}+\frac{\etaa^2}{2b}\|\mu^*\|_{2}^2+Ca'\omega^2_{\bar\eta}.
\end{align}
Because $P(\cdot;\lambda)\geq P_{H}(\cdot;\lambda)$, taking $b=\etaa/4,b'=1/2,a=2/\etaa,a'=4/(\etaa\land 1)$ gives
\begin{align}
    &\,\mathbb{E}\Big\{\bold{\Delta}_{l}(\hat{\mu},\mu^*)+\frac{\etaa}{2}\mathrm{\bold{ D}}_2(\hat{\mu},\mu^*)\Big\}\nonumber\\\leq &\,2\big\| \varrho\bar{\beta}^*\big\|_{2,P(\cdot;\lambda)}+2\etaa\|\mu^*\|_{2}^2+\frac{2C}{\etaa}\omega^2_{\bar m}+\frac{4}{\etaa}\{\varphi^{-1}(1)\}^2\omega^2_{\varsigma}+\frac{4C}{\etaa\land 1}\omega^2_{\bar\eta}\nonumber\\\lesssim &\,\big\| \varrho\bar{\beta}^*\big\|_{2,P(\cdot;\lambda)}+\etaa\|\mu^*\|_{2}^2+\frac{1}{\etaa}(\omega^2_{\bar m}+\omega^2_{\varsigma})+\frac{1}{\etaa\land 1}\omega^2_{\bar\eta}.
\end{align}

To prove part (b),
we use  $\big\| \varrho\bar{\beta}^*_{\mathcal{J^{*C}}}\big\|_{2,P(\cdot;\lambda)}=\|\varrho(\bar{\beta}-\bar{\beta}^*)_{\mathcal{J^{*C}}}\big\|_{2,P(\cdot;\lambda)}$ and $\big\| \varrho\bar{\beta}^*_{\mathcal{J^*}}\big\|_{2,P(\cdot;\lambda)}-\big\|\varrho \bar\beta_{\mathcal{J^*}}\big\|_{2,P(\cdot;\lambda)}\leq \|\varrho(\bar{\beta}-\bar{\beta}^*)_{\mathcal{J^*}}\big\|_{2,P(\cdot;\lambda)}$ and rewrite    \eqref{middlebound1} as
\begin{align}\label{middlebound}
&\,\mathbb{E}\Big\{\bold{\Delta}_{l}(\hat{\mu},\mu^*)+\frac{\etaa-b}{2}\|\hat{\mu}-\mu^*\|^2_2\Big\}-\frac{1}{a'}\mathrm{\bold{D}}_2(\hat{\bar\eta},\bar{\eta}^*)-\frac{1}{2a}\{\mathrm{\bold{D}}_2(\hat {\bar m},\bar m^*)+\mathrm{\bold{D}}_2(\hat {\varsigma},\varsigma^*)\}\nonumber\\\leq &\, \frac{1}{2b'}P_{2,H}\big(\varrho(\hat{\bar{\beta}}-\bar{\beta}^*)_{\mathcal{J}^*};\sqrt{a'b'}A\lambda_0\big)+\frac{1}{2b'}P_{2,H}\big(\varrho(\hat{\bar{\beta}}-\bar{\beta}^*)_{\mathcal{J}^{*C}};\sqrt{a'b'}A\lambda_0\big)\nonumber\\&\, +\|\varrho(\bar{\beta}-\bar{\beta}^*)_{\mathcal{J^*}}\big\|_{2,P(\cdot;\lambda)}-\|\varrho(\bar{\beta}-\bar{\beta}^*)_{\mathcal{J^{*C}}}\big\|_{2,P(\cdot;\lambda)}+aC\omega^2_{\bar m}+2a\{\varphi^{-1}(1)\}^2\omega^2_{\varsigma}\nonumber\\&+\frac{\etaa^2}{2b}\|\mu^*\|_{2}^2+Ca'\omega^2_{\bar\eta}.
\end{align}
The condition \eqref{alterregugeneral} implies
\begin{align}\label{realtt1}
    &\alpha  \mathrm{\bold{D}}_2(\mu,\mu^*)+\vartheta P_{2,H}\big(\varrho(\bar{\beta}-\bar{\beta}^*)_{\mathcal{J}^*};\lambda\big)+\|\varrho(\bar{\beta}-\bar{\beta}^*)_{\mathcal{J}^*}\|_{2,P(\cdot;\lambda)}\nonumber\\\leq&\,  \bold{\Delta}_{l}(\mu,\mu^*)+\|\varrho(\bar{\beta}-\bar{\beta}^*)_{\mathcal{J}^{*C}}\|_{2,P(\cdot;\lambda)}-\vartheta P_{2,H}\big(\varrho(\bar{\beta}-\bar{\beta}^*)_{\mathcal{J}^{*C}};\lambda\big)+K\lambda^2J^*.
\end{align}
With   $b'=1/(2\vartheta)$, we add
 \eqref{middlebound} and  \eqref{realtt1} to get
\begin{align}
 &\,\mathbb{E}\Big\{(\etaa+\alpha-b)\mathrm{\bold{D}}_2(\hat\mu,\mu^*)\Big\}-\frac{1}{a'}\mathrm{\bold{D}}_2(\hat{\bar\eta},\bar{\eta}^*)-\frac{1}{2a}\{\mathrm{\bold{D}}_2(\hat {\bar m},\bar m^*)+\mathrm{\bold{D}}_2(\hat {\varsigma},\varsigma^*)\}\nonumber\\\leq&\,  K\lambda^2J^*+aC\omega^2_{\bar m}+2a\{\varphi^{-1}(1)\}^2\omega^2_{\varsigma}+\frac{\etaa^2}{2b}\|\mu^*\|_{2}^2+Ca'\omega^2_{\bar\eta},
\end{align}
where $\lambda=\sqrt{a'b'}A\lambda_0$ and $A\geq A_0$ with $A_0$ given in the Lemma \ref{basiclemma}. Now, setting $b=(\etaa+\alpha)/4, a'=4/\{(\etaa+\alpha)\land \vartheta\}, a=2/(\etaa+\alpha)$ gives
\begin{align}
   &\,\mathbb{E}\big\{\frac{\etaa+\alpha}{2}\mathrm{\bold{ D}}_2(\hat\mu,\mu^*)\big\}\nonumber\\\leq&\,  \frac{2KA^2}{\{(\etaa+\alpha)\land \vartheta\}\vartheta} {\omega_{\bar\eta}}^2 J^*\log(ep)+\frac{2C\omega^2_{\bar m}}{(\etaa+\alpha)}+\frac{4\{\varphi^{-1}(1)\}^2\omega^2_{\varsigma}}{(\etaa+\alpha)}+\frac{2\etaa^2\|\mu^*\|_{2}^2}{\etaa+\alpha}+\frac{4C\omega^2_{\bar\eta}}{(\etaa+\alpha)\land \vartheta}.\nonumber
\end{align}
The proof is complete.
\end{proof}

\begin{lemma}\label{basiclemma}
Let  $\epsilon=[\epsilon_i]$ be an $n$-dimensional random vector (not necessarily mean centered or having independent components) satisfying  $\|\epsilon\|_{\psi_2}\leq \omega$. Suppose that $\bar{X}\in \mathbb{R}^{n\times 2p}$ satisfies $\|\bar{X}\|_2\leq \varrho$. Let $\lambda_0=\omega\sqrt{\log(ep)}$. Then there exist universal constants $A_0,C,c>0$ such that for any $a\geq 2b>0$, $A_1\geq A_0$, the following event \begin{align}\label{basicboundlemma}
    \sup_{\bar{\beta} \in \mathbb{R}^{2p}} 2\langle \epsilon,\bar{X}\bar{\beta}\rangle -\frac{1}{a}\|\bar{X}\bar{\beta}\|^2_2-\frac{1}{b}P_{2,H}(\varrho\bar{\beta};\sqrt{ab}A_1\lambda_0)\geq a\omega^2 t
\end{align}
occurs with probability at most $C\exp(-ct)p^{-cA^2_1}$.
\end{lemma}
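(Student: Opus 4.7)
The plan is to combine a variational (quadratic-completion) step with a peeling argument over the group sparsity level and a sub-Gaussian projection bound for $\epsilon$. Fix any subset $J \subseteq \{1,\ldots,p\}$ with $|J| = s$ and let $\bar{X}_J \in \mathbb{R}^{n\times 2s}$ denote the submatrix collecting the columns $\{\bar X_k, \bar X_{p+k} : k\in J\}$; then for $\bar\beta$ with $\mathcal{J}(\bar\beta)\subseteq J$, the vector $\bar X\bar\beta$ lives in a subspace of dimension at most $2s$, and since the penalty term $-(1/b)P_{2,H}(\varrho\bar\beta;\lambda)$ is non-positive, dropping it and maximizing the concave quadratic $2\langle\epsilon,\bar X\bar\beta\rangle - \|\bar X\bar\beta\|_2^2/a$ over $\eta \in \mathcal{R}(\bar X_J)$ gives the bound $a\|P_{\bar X_J}\epsilon\|_2^2$, where $P_{\bar X_J}$ is the orthogonal projection onto $\mathcal{R}(\bar X_J)$.

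Next I would establish a sub-Gaussian tail bound of the form $\mathbb{P}(\|P_{\bar X_J}\epsilon\|_2^2 \geq C\omega^2(2s+u)) \leq 2\exp(-cu)$. Writing $P_{\bar X_J} = U_J U_J^T$ with $U_J$ orthonormal, each coordinate of $U_J^T\epsilon$ inherits a $\psi_2$-norm bounded by $\omega$ from the vector assumption $\|\epsilon\|_{\psi_2}\leq\omega$ through \eqref{orliczvecdef}; a standard Hanson--Wright or $\varepsilon$-net argument on the unit sphere of the $(2s)$-dimensional range then yields the claimed concentration. Taking a union bound over the $\binom{p}{s} \leq (ep/s)^s \leq (ep)^s$ choices of $J$ and selecting $u = c_0 A_1^2 s\log(ep) + t$ produces a failure probability at most $C\exp(-ct)(ep)^{-c_1 A_1^2 s}$ per sparsity level; summing over $s = 0, 1, \ldots, p$ gives a geometric series bounded by $C\exp(-ct)p^{-cA_1^2}$ once $A_1 \geq A_0$.

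On the resulting high-probability event and fixing a $\bar\beta$ with group support of size $s$, the quadratic part is at most $Ca\omega^2(s\log(ep) + t) = C s a\lambda_0^2 + Ca\omega^2 t$. The subtracted penalty is, by construction, at least $s\cdot a A_1^2\lambda_0^2/2$ whenever each of the $s$ active groups satisfies $\|\varrho\bar\beta_k\|_2 \geq \sqrt{ab}A_1\lambda_0$, because $(1/b)P_H(\sqrt{ab}A_1\lambda_0;\sqrt{ab}A_1\lambda_0) = a A_1^2\lambda_0^2/2$. Picking $A_0$ so that $A_0^2/2 \geq C$ absorbs the peeling cost and leaves only the $Ca\omega^2 t$ residual, delivering the $a\omega^2 t$ tail claimed in \eqref{basicboundlemma}.

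The main obstacle is handling groups that are only \emph{partially} active, i.e., $0 < \|\varrho\bar\beta_k\|_2 < \sqrt{ab}A_1\lambda_0$, where $P_H$ has not saturated at $\lambda^2/2$ and the previous ``per-active-group'' accounting breaks down. For such a group the negative quadratic $-\|\bar X_k\beta_k+\bar X_{p+k}b_k\|_2^2/a$ combined with the subtracted $(1/b)P_H(\|\varrho\bar\beta_k\|_2;\lambda) = -\|\varrho\bar\beta_k\|_2^2/(2b)+\lambda\|\varrho\bar\beta_k\|_2/b$ contributes, and the hypothesis $a\geq 2b$ together with $\|\bar X_k\beta_k + \bar X_{p+k}b_k\|_2 \leq \varrho\|\bar\beta_k\|_2$ (implied by $\varrho\geq \|\bar X\|_2$) gives $-\|\bar X\bar\beta\|_2^2/a + \|\varrho\bar\beta_k\|_2^2/(2b) \geq (1/(2b) - 1/a)\|\varrho\bar\beta_k\|_2^2 \geq 0$. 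After completing the square, the partially active groups contribute non-positively to the objective and can be discarded in the peeling, so the argument collapses to the active-support case treated above. This step pins down the constant $A_0$ and explains the role of the regime $a\geq 2b$.
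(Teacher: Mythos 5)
Your skeleton (stratify by the size of the group support, reduce the penalized quadratic supremum to $a\|P_{\bar X_J}\epsilon\|_2^2$, concentrate that via a net/Hanson--Wright bound, union bound over $\binom{p}{s}$ supports, and sum a geometric series in $s$) is exactly how the paper treats the problem \emph{after} it has replaced the hard penalty $P_{2,H}$ by the $\ell_0$-type penalty $P_{2,0}(\bar\beta;\lambda)=\sum_k(\lambda^2/2)1_{\bar\beta_k\neq 0}$, and those steps are sound (they are the content of Lemma \ref{supboundlm2} and display \eqref{unionboundJ}). The genuine gap is your treatment of the sub-threshold (``partially active'') groups, which is precisely what makes the reduction to $P_{2,0}$ nontrivial. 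Your displayed inequality $-\|\bar X\bar\beta\|_2^2/a+\|\varrho\bar\beta_k\|_2^2/(2b)\ge(1/(2b)-1/a)\|\varrho\bar\beta_k\|_2^2$ is a \emph{lower} bound on a piece of the objective, whereas an \emph{upper} bound on the supremum is what is needed; it also presumes $\|\bar X\bar\beta\|_2^2\le\varrho^2\|\bar\beta_k\|_2^2$, which fails as soon as more than one group is active (and $\|\bar X\bar\beta\|_2$ can vanish on the null space of $\bar X$ while $\|\bar\beta_k\|_2>0$); and it leaves the linear term $2\langle\epsilon,[\bar X_k,\bar X_{p+k}]\bar\beta_k\rangle$ of a sub-threshold group entirely unaccounted for. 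More structurally, your step that pairs ``sup of the quadratic over the stratum'' with ``penalty evaluated per active group'' amounts to bounding the supremum of a sum by $\sup(\text{quadratic})-\inf(\text{penalty})$ over the stratum $\{\bar\beta: J(\bar\beta)=s\}$; since $P_{2,H}$ tends to $0$ as the active groups shrink, that infimum is $0$ and the decoupling yields nothing --- the penalty only ``pays'' $s\lambda^2/(2b)$ on the fully saturated sub-region.

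The paper closes this gap with a different device: it shows the events $\{\sup l_H\ge at\omega^2\}$ and $\{\sup l_0\ge at\omega^2\}$ coincide by invoking a structural fact about optimizers (Lemma 5 of \cite{she2016finite}): under $a\ge 2b$, there is a global optimizer of the $P_{2,H}$-penalized quadratic whose groups are each either exactly zero or of norm at least the threshold, so $P_{2,H}$ saturates to $P_{2,0}$ at that optimizer and the supremum is unchanged. That is where $a\ge 2b$ actually enters --- not in a per-group completion of squares. If you wish to avoid citing that lemma, a workable substitute is to use $-\tfrac1bP_H(t;\lambda)\le-\tfrac{\lambda t}{2b}$ for $0\le t\le\lambda$ and dominate the sub-threshold groups' linear terms on the event $\max_k\|[\bar X_k,\bar X_{p+k}]^T\epsilon\|_2\le\lambda\varrho/(4b)$, but you must then also split $\|\bar X\bar\beta\|_2^2$ between the large-group and small-group parts of $\bar\beta$ and absorb the cross terms, none of which appears in your write-up.
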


\begin{proof}
Let $P_0 = (\lambda^2/2)1_{t\ne 0}$. Define $l_H(\bar{\beta})=2\langle\epsilon,\bar{X}\bar{\beta}\rangle-\frac{1}{a}\|\bar{X}\bar{\beta}\|^2_2-\frac{1}{b}P_{2,H}(\bar{\beta};\sqrt{ab}A_0\lambda_0),l_0(\bar{\beta})=2\langle\epsilon,\bar{X}\bar{\beta}\rangle-\frac{1}{a}\|\bar{X}\bar{\beta}\|^2_2-\frac{1}{b}P_{2,0}(\bar{\beta};\sqrt{ab}A_0\lambda_0)$. Introduce two events $\varepsilon_H=\{\sup_{\bar{\beta}\in \Gamma}l_H(\bar{\beta})\geq at\omega^2\}$, and $\varepsilon_0=\{\sup_{\bar{\beta}\in \Gamma}l_0(\bar{\beta})\geq at\omega^2\}$

First, we use an optimization technique to prove that $\varepsilon_H=\varepsilon_0$. Since $P_0\geq P_H$,  $P_{2,0}\geq P_{2,H}$ and thus $\varepsilon_0\subset \varepsilon_H$. The occurrence of $\varepsilon_H$ implies that $l_H(\bar{\beta}^o)\geq a t\omega^2$ for any $\bar{\beta}^o$ defined by
\begin{align}
    \bar{\beta}^o\in \argmin_{\bar{\beta}\in \mathbb{R}^{2p}} \frac{1}{a}\|\bar{X}\bar{\beta}\|^2_2-2\langle\epsilon,\bar{X}\bar{\beta}\rangle+\frac{1}{b}P_{2,H}(\bar{\beta};\sqrt{ab}A_0\lambda_0).
\end{align}
From    Lemma 5 in \cite{she2016finite}, under
  $\|\bar{X}\|_2\leq 1$,  there exists a globally optimal solution $\bar{\beta}^0$ to the problem
\begin{align}
    \min_{\bar{\beta}\in \mathbb{R}^{2p}}\frac{1}{2}\|y-\bar{X}\bar{\beta}\|^2_2+P_{2,H}(\bar{\beta};\lambda),
\end{align}
such that for any $j=1,\ldots,p$, either $\bar{\beta}^0_j=0$ or $\|\bar{\beta}^0_j\|_2\geq\lambda$. Therefore, with $a\geq 2b>0$, there exists at least one global minimizer $\bar{\beta}^{oo}$ satisfying $P_{2,H}(\bar{\beta}^{oo};\sqrt{ab}A_1\lambda_0)=P_{2,0}(\bar{\beta}^{oo};\sqrt{ab}A_1\lambda_0)$ and thus $l_H(\bar{\beta}^{oo})=l_0(\bar{\beta}^{oo})$. This means $\sup_{\bar{\beta}\in \mathbb{R}^{2p}} l_0(\bar{\beta})\geq l_0(\bar{\beta}^{oo})=l_{H}(\bar{\beta}^{oo})\geq a t\omega^2$, and so  $\varepsilon_H\subset\varepsilon_0$. It suffices to prove $\mathbb{P}(\varepsilon_0)\leq C\exp(-ct)p^{-cA^2_1}$.

  Next, we   use Lemma \ref{supboundlm2} to bound the tail probability of $R$ defined by
 \begin{align}\label{rdefappend}
     R= \sup_{1\leq J\leq p} \sup_{\bar\beta\in \Gamma_J}\big\{\langle\epsilon,\bar{X}\bar{\beta}\rangle-\frac{1}{2a}\|\bar{X}\bar{\beta}\|^2_2-\frac{1}{2b}P_{2,0}(\bar\beta;\sqrt{ab}A_1\lambda_0)\big\},
 \end{align}
where $P_{2,0}(\bar\beta;\lambda_0)=P_{0}(J;\lambda_0)=(1/2)J\lambda^2_0$ for $\bar\beta \in \Gamma_J$ and $\Gamma_J=\{\bar{\beta}\in\mathbb{R}^{2p}:J(\bar{\beta})=J\}$  (in the trivial case of $J=0$, the quantity inside the braces is 0).

 By  a scaling argument, for any $a>0$,
\begin{align}\label{ineqdr2}
  \langle\epsilon,\bar{X}\bar{\beta}/\|\bar{X}\bar{\beta}\|_2\rangle\|\bar{X}\bar{\beta}\|_2\leq \frac{a}{2}\sup_{\alpha\in \Gamma_J}(\langle\epsilon,\alpha\rangle)^2+\frac{1}{2a}\|\bar{X}\bar{\beta}\|^2_2.
\end{align}
  Applying Lemma \ref{supboundlm2} with $G=2,q=2$ results in
 \begin{align}
   \mathbb{P}\big(  \sup_{\bar\beta\in \Gamma_J}\langle\epsilon,\bar{X}\bar{\beta}\rangle-\frac{1}{2a}\|\bar{X}\bar{\beta}\|^2_2-\frac{a}{2}LJ\log (ep)\omega^2\geq \frac{a}{2}\omega^2t\big)\leq C\exp(-ct),
\end{align}
or  \begin{align}
   \mathbb{P}\big(  \sup_{\bar\beta\in \Gamma_J}\langle\epsilon,\bar{X}\bar{\beta}\rangle- \frac{1}{2a}\|\bar{X}\bar{\beta}\|^2_2- aLP_{2,0}(\bar\beta;\lambda_0)\geq a\omega^2t\big)\leq C\exp(-ct).
\end{align}
 Set $A_1\geq \sqrt{2L}$. Noticing   that (i) $(A^2_1/2)P_0(J;\lambda_0)\geq LP_0(J;\lambda_0)+cA^2_1  P_0(J;\lambda_0)$ for some $c>0$, and (ii) $J\log(ep)\geq \log p+J$ for any $J\geq 1$, we get
\begin{align}\label{unionboundJ}
   &\, \mathbb{P}\big(R\geq a\omega^2t\big)\nonumber\\\leq &\,\sum^p_{J=1}\mathbb{P}\big(  \sup_{\bar\beta\in \Gamma_J}\langle\epsilon,\bar{X}\bar{\beta}\rangle-\frac{1}{2a}\|\bar{X}\bar{\beta}\|^2_2-\frac{1}{2b}P_{2,0}(\bar\beta;\sqrt{ab}A_1\lambda_0)\geq a\omega^2t\big)\nonumber\\= &\,\sum^p_{J=1}\mathbb{P}\big(  \sup_{\bar\beta\in \Gamma_J}\langle\epsilon,\bar{X}\bar{\beta}\rangle-\frac{1}{2a}\|\bar{X}\bar{\beta}\|^2_2-\frac{1}{4}aA_1^2J\log (ep)\omega^2_{\bar\eta}\geq a\omega^2t\big)\nonumber\\\leq &\,\sum^p_{J=1}\mathbb{P}\big(  \sup_{\bar\beta\in \Gamma_J}\langle\epsilon,\bar{X}\bar{\beta}\rangle-\frac{1}{2a}\|\bar{X}\bar{\beta}\|^2_2-\frac{a}{2}LJ\log (ep)\omega^2\geq a\omega^2t+caA^2_1 J\log (ep)\omega^2\big)\nonumber\\\leq &\,\sum^p_{J=1}C\exp(-ct)\sum^p_{J=1}\exp\{-cA_1^2(J+\log p)\}\nonumber\\\leq &\,C\exp(-ct)p^{-cA^2_1},
\end{align}
where the last inequality is due to the sum of geometric sequence.
\end{proof}

\begin{lemma}\label{supboundlm2}

Given a  matrix $X\in \mathbb{R}^{n\times pG}$ with   a block form of $X=[X_1,\ldots,X_p]$ with $X_j\in\mathbb{R}^{n\times G},1\leq j\leq p$,
 let $X_{\mathcal{J}}$ denote the submatrix formed by the column blocks of $X$ indexed by $\mathcal{J}$. Define $\Gamma_{\mathcal{J}}=\{\alpha\in \mathbb{R}^{pG}: \|\alpha\|_2\leq 1, \alpha\in \mathcal{R}(X_{\mathcal{J}})\}$ and  $\Gamma_{J}=\bigcup_{|\mathcal{J}|=J}\Gamma_{\mathcal{J}}$, where $1\leq J\leq p$.
 Let $\epsilon=[\epsilon_i]$ be an $n$-dimensional random vector. (i) Assume $\epsilon$ satisfies $\|\epsilon\|_{\psi_q}\leq \omega$ for some $q\geq 1$. Then for any $t>0$,
\begin{align}
    &  \mathbb{P}\big( \sup_{\alpha\in \Gamma_{\mathcal{J}}}(\langle\epsilon,\alpha\rangle)^2-LJG\omega^2-L(JG)^{\frac{2}{q}}\omega^2\geq \omega^2t\big)\leq C\exp\big(-ct^{\frac{q}{2}}\big), \label{lemmaass1medb}\\
    &  \mathbb{P}\big( \sup_{\alpha\in \Gamma_J}(\langle\epsilon,\alpha\rangle)^2-L\big[JG+(JG)^{\frac{2}{q}}+(2J)^{\frac{2}{q}}\{\log (ep)\}^{\frac{2}{q}}\big]\omega^2\geq \omega^2t\big)\leq C\exp\big(-ct^{\frac{q}{2}}\big),\label{lemmaass1finalb}
\end{align}
where   $C$ is a universal constant  and  $c,L$ are constants   depending on $q$ only.
(ii) Assume that  $\epsilon_1,\ldots,\epsilon_n$  are independent, centered,  and  $\|\epsilon_i\|_{\psi_q}\leq \omega$ for some $q\in(0,2]$. Then \eqref{lemmaass1medb} and \eqref{lemmaass1finalb} are replaced by  \begin{align}
&     \mathbb{P}\Big( \sup_{\alpha\in \Gamma_{\mathcal{J}}}(\langle\epsilon,\alpha\rangle)^2-LJG\omega^2\geq \omega^2t\Big)\leq C\exp\big(-ct^{\frac{q}{2}}\big), \label{lemmaunionbound1}
\\
 &   \mathbb{P}\Big( \sup_{\alpha\in \Gamma_J}(\langle\epsilon,\alpha\rangle)^2-L\big[JG+(2J)^{\frac{2}{q}}\{\log(ep)\}^\frac{2}{q}\big]\omega^2\geq \omega^2t\Big)\leq C\exp\big(-ct^{\frac{q}{2}}\big).\label{lemmameanzerobound}
\end{align}
\end{lemma}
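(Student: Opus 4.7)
The plan is to reduce both suprema to squared norms of projections of $\epsilon$ and then control the resulting random variables by combining a mean bound with a covering-based Orlicz-norm estimate (for part (i)) or a Hanson-Wright-type concentration inequality (for part (ii)). For a fixed index set $\mathcal{J}$ of size $J$, write $P_{\mathcal{J}} = X_{\mathcal{J}} (X_{\mathcal{J}}^T X_{\mathcal{J}})^{+} X_{\mathcal{J}}^T$ for the orthogonal projection onto $\mathcal{R}(X_{\mathcal{J}})$ (of rank $r_{\mathcal{J}} \le JG$), so that $\sup_{\alpha \in \Gamma_{\mathcal{J}}} \langle \epsilon, \alpha \rangle = \|P_{\mathcal{J}}\epsilon\|_2$. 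Picking an orthonormal basis $u_1, \ldots, u_{r_{\mathcal{J}}}$ of $\mathcal{R}(X_{\mathcal{J}})$, I would work with the decomposition $\|P_{\mathcal{J}}\epsilon\|_2^2 = \sum_{j=1}^{r_{\mathcal{J}}} (u_j^T \epsilon)^2$.

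For the fixed-$\mathcal{J}$ bound in (i), the vector Orlicz hypothesis $\|\epsilon\|_{\psi_q}\le\omega$ yields $\|u_j^T\epsilon\|_{\psi_q} \le \omega$, so $\mathbb{E}(u_j^T\epsilon)^2 \lesssim \omega^2$ and hence $\mathbb{E}\|P_{\mathcal{J}}\epsilon\|_2^2 \lesssim r_{\mathcal{J}}\omega^2 \le JG\,\omega^2$, which accounts for the baseline term $LJG\omega^2$. For the deviation I would cover the unit sphere in $\mathcal{R}(X_{\mathcal{J}})$ with a $(1/2)$-net $\mathcal{N}$ of cardinality at most $5^{r_{\mathcal{J}}}$, use the standard doubling $\|P_{\mathcal{J}}\epsilon\|_2 \le 2 \max_{\alpha \in \mathcal{N}} |\langle\epsilon,\alpha\rangle|$, and invoke an Orlicz-maximum estimate (a variant of Lemma~2.2.2 in van der Vaart and Wellner) to obtain $\big\|\sup_{\alpha \in \Gamma_{\mathcal{J}}}|\langle\epsilon,\alpha\rangle|\big\|_{\psi_q} \lesssim \omega r_{\mathcal{J}}^{1/q}$, so that $\big\|\sup_{\alpha \in \Gamma_{\mathcal{J}}}\langle\epsilon,\alpha\rangle^2\big\|_{\psi_{q/2}} \lesssim \omega^2 (JG)^{2/q}$, giving the $L(JG)^{2/q}\omega^2$ term together with the $\psi_{q/2}$-tail. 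Combining the mean and fluctuation via an $(a+b)^{q/2}$ split then yields \eqref{lemmaass1medb}. For \eqref{lemmaass1finalb} I would take a union bound over the $\binom{p}{J} \le (ep/J)^J$ possible index sets; the entropy $J \log(ep/J)$ enters the $\psi_q$-norm of the maximum over the enlarged net and, after squaring, contributes the extra term $L(2J)^{2/q}\{\log(ep)\}^{2/q}\omega^2$.

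Part (ii) is where independence and centering buy sharper control. Here I would apply Lemma~\ref{lemmagotze} to the degree-two polynomial $f(\epsilon)=\epsilon^T P_{\mathcal{J}}\epsilon$: the first-order derivative tensor vanishes in expectation by centering, and the second-order tensor is $P_{\mathcal{J}}$ with Hilbert-Schmidt norm $\sqrt{r_{\mathcal{J}}} \le \sqrt{JG}$ and operator norm $1$. Lemma~\ref{lemmagotze} then furnishes the sub-Weibull Hanson-Wright-type bound $\mathbb{P}(|\epsilon^T P_{\mathcal{J}}\epsilon - \mathbb{E}\epsilon^T P_{\mathcal{J}}\epsilon|>\omega^2 t) \le 2\exp\bigl(-c\min(t^2/r_{\mathcal{J}},\, t^{q/2})\bigr)$. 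Together with $\mathbb{E}\epsilon^T P_{\mathcal{J}}\epsilon \lesssim JG\,\omega^2$ (from $\mathbb{E}\epsilon_i^2 \lesssim \omega^2$), this reproduces \eqref{lemmaunionbound1} without the $(JG)^{2/q}$ correction, since the sub-Gaussian regime of the Hanson-Wright estimate already absorbs the low-$t$ fluctuations. The union bound over $\binom{p}{J}$ subsets is handled exactly as in (i) and produces \eqref{lemmameanzerobound}.

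The main obstacle is securing the precise tail scale $\omega^2$ (rather than $\omega^2 r_{\mathcal{J}}^{2/q}$) on the $\omega^2 t$ term. A direct Markov step from the $\psi_{q/2}$-norm bound only yields a tail of the form $\exp(-c t^{q/2}/r_{\mathcal{J}})$, which is weaker than the claimed $\exp(-ct^{q/2})$. Reconciling this in part (i) will require either showing that the baseline $LJG\omega^2 + L(JG)^{2/q}\omega^2$ already dominates in the low-$t$ regime (so only the high-$t$ asymptotic has to be analysed, where a careful $(a+b)^{q/2}$ expansion recovers $t^{q/2}$) or invoking a Bernstein-Orlicz / generic-chaining refinement that separately tracks a variance proxy and a sub-Weibull parameter. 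In part (ii) this separation is automatic from the two-regime form of Lemma~\ref{lemmagotze}; in part (i) it will be the delicate step, resting on the combination of the deterministic bound $\mathbb{E}\|P_{\mathcal{J}}\epsilon\|_2^2 \lesssim JG\omega^2$ with a chaining-style argument applied to the centered process $\|P_{\mathcal{J}}\epsilon\|_2^2 - \mathbb{E}\|P_{\mathcal{J}}\epsilon\|_2^2$.
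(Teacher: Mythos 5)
Your reduction to $\|P_{\mathcal{J}}\epsilon\|_2$, the mean-plus-fluctuation split, and the union bound over $\binom{p}{J}$ subsets all match the skeleton of the paper's proof, and your part (ii) is essentially the paper's argument: the paper applies a generalized Hanson--Wright inequality (Sambale 2020, Theorem 2.1) to $\|U^T\epsilon\|_2^2$ rather than Lemma \ref{lemmagotze}. Note, though, that Lemma \ref{lemmagotze} as stated controls the $d=2$ term only through the Hilbert--Schmidt norm $\sqrt{r_{\mathcal{J}}}$ of $P_{\mathcal{J}}$, so it yields a tail of order $\exp\bigl(-c(t/\sqrt{r_{\mathcal{J}}})^{q/2}\bigr)$ rather than the two-regime bound with $\|P_{\mathcal{J}}\|_2=1$ that you assert; you need a Hanson--Wright form that carries the operator-norm regime to obtain $\exp(-ct^{q/2})$ free of $r_{\mathcal{J}}$, which is exactly what the cited theorem supplies.

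The genuine gap is in part (i), and you have diagnosed it yourself without closing it: a single-scale $(1/2)$-net combined with the Orlicz maximal inequality gives $\bigl\|\sup_{\alpha}|\langle\epsilon,\alpha\rangle|\bigr\|_{\psi_q}\lesssim\omega\, r_{\mathcal{J}}^{1/q}$, and Markov applied to that quantity produces a tail of order $\exp(-ct^{q/2}/r_{\mathcal{J}})$, not the claimed $\exp(-ct^{q/2})$. The paper avoids this by working with the linear process before squaring. It writes $\epsilon=\epsilon_c+M$ with $M=\mathbb{E}[\epsilon]$, bounds the deterministic contribution $\|U^TM\|_2\le C\sqrt{JG}\,\omega$ (this, rather than $\mathbb{E}\|P_{\mathcal{J}}\epsilon\|_2^2$ as in your sketch, is the source of the $LJG\omega^2$ term), and then applies a chaining bound for $\psi_q$-processes (Theorem 5.36 of Wainwright) to $\{\langle\epsilon_c,\alpha\rangle:\alpha\in\Gamma_{\mathcal{J}}\}$ with the induced metric $d(\alpha,\alpha')=\omega\|\alpha-\alpha'\|_2$ and entropy $\log\mathcal{N}(\varepsilon,\Gamma_{\mathcal{J}},d)\le JG\log(C\omega/\varepsilon)$. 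Chaining delivers a deviation above the Dudley integral $L(JG)^{1/q}\omega$ at the scale of the \emph{diameter} $D=2\omega$, i.e.\ a tail $\exp(-ct^q)$ for a deviation of size $\omega t$ with no $r_{\mathcal{J}}$ in the exponent; squaring and substituting $s=t^2$ then gives \eqref{lemmaass1medb} exactly. So the ``generic-chaining refinement'' you mention in passing is not an optional polish but the essential step; as written, your part (i) does not reach the stated tail.
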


\begin{proof}

First, we show  \eqref{lemmaass1medb} under the assumption that $\|\epsilon\|_{\psi_q}\leq \omega$ for some $q\geq 1$.
Define a  centered random vector $\epsilon_c=\epsilon-M$ with  $M=\mathbb{E}[\epsilon]$, then
\begin{align}
   \sup_{\alpha\in \Gamma_{\mathcal{J}}} \langle\epsilon,\alpha\rangle\leq\sup_{\alpha\in \Gamma_{\mathcal{J}}}\langle \epsilon_c,\alpha\rangle+\sup_{\alpha\in \Gamma_{\mathcal{J}}}\langle P_{ X_{\mathcal{J}}}M,\alpha\rangle\leq \sup_{\alpha\in \Gamma_\mathcal{J}}\langle\epsilon_c,\alpha\rangle+\|U^TM\|_2,
\end{align}
 where  $P_{ X_{\mathcal{J}}}=UU^T$ is the orthogonal projection matrix onto $\mathcal{R}(X_{\mathcal{J}})$ with $\{U_1,\ldots,U_{JG}\}$ as  an orthonormal basis and $\|P_{ X_{\mathcal{J}}}\|_2=1$. We claim that $\|U^TM\|^2_2=\sum^{JG}_{i=1}(\mathbb{E}[U^T_i\epsilon])^2\leq CJG\omega^2$. In fact,
because $\|U^T_i\epsilon\|_{\psi_q}\leq \omega$,
\begin{align}\label{probtomean}
\mathbb{P}\big(|U^T_i\epsilon|>t\omega\big)\leq \int^{+\infty}_{0}\frac{1}{\psi_q(\frac{t\omega}{\omega})}\rd t=\int^{+\infty}_{0}\exp(-t^q)\rd t<\infty,
\end{align}
and so $\mathbb{E}|U^T_i\epsilon|\leq C\omega,\,1\leq i\leq J$. 

By definition, $\{\langle\epsilon_c,\alpha\rangle:\alpha\in \Gamma_{\mathcal{J}}\}$ is a (centered)      $\psi_q$-process. The induced metric on $\Gamma_{\mathcal{J}}$ is: $d(\alpha,\alpha')=\omega\|\alpha-\alpha'\|_2$.
To bound the metric entropy $\log (\mathcal{N}(\varepsilon,\Gamma_{\mathcal{J}},d)))$, where the $\mathcal{N}(\varepsilon,\Gamma_{\mathcal{J}},d))$ is the smallest cardinality of an $\varepsilon$-net that covers $\Gamma_{\mathcal{J}}$ under the metric $d$, we apply a standard volume argument to get
\begin{align}
  \log (\mathcal{N}(\varepsilon,\Gamma_{\mathcal{J}},d)))\leq \log  (\frac{C\omega}{\varepsilon})^{JG}=JG\log(C\omega/\varepsilon).
\end{align}
By Theorem 5.36 in \cite{wainwright2019high},  we get
\begin{align}
         \mathbb{P}\Big( \sup_{\alpha,\alpha'\in \Gamma_{\mathcal{J}}}|\langle\epsilon_c,\alpha-\alpha'\rangle|-L\int^{D}_{0}\psi^{-1}_q(\mathcal{N}(\varepsilon,\Gamma_{\mathcal{J}},d)) \rd \varepsilon \geq Lt\Big)\leq 2\exp\big(-\frac{t^{q}}{D^q}\big),
\end{align}
for any $t>0$, where $D=\sup_{\alpha,\alpha'\in\Gamma_{\mathcal{J}}}d(\alpha,\alpha')=2\omega$. Equivalently,
\begin{align}
     \mathbb{P}\Big( \sup_{\alpha,\alpha'\in \Gamma_{\mathcal{J}}}|\langle\epsilon_c,\alpha-\alpha'\rangle|-L\int^{2\omega}_{0}\psi^{-1}_q(\mathcal{N}(\varepsilon,\Gamma_{\mathcal{J}},d)) \rd \varepsilon \geq \omega t\Big)\leq C\exp\big(-ct^{q}\big),
\end{align}
which implies
\begin{align}\label{lemma3meanzeromedinq}
   \mathbb{P}\big( \sup_{\alpha\in \Gamma_\mathcal{J}}\langle\epsilon_c,\alpha\rangle-L\int^{2\omega}_{0}\psi^{-1}_q(\mathcal{N}(\varepsilon,\Gamma_{\mathcal{J}},d)) \rd \varepsilon \geq \omega t\Big)\leq C\exp\big(-ct^{q}\big).
\end{align}

Based on \eqref{probtomean} and \eqref{lemma3meanzeromedinq}, we obtain
\begin{align}
    \mathbb{P}\big( \sup_{\alpha\in \Gamma_\mathcal{J}}\langle\epsilon,\alpha\rangle-L\sqrt{JG}\omega-L\int^{2\omega}_{0}\psi^{-1}_q(\mathcal{N}(\varepsilon,\Gamma_{\mathcal{J}},d)) \rd \varepsilon\geq \omega t\big)\leq C\exp\big(-ct^{q}\big),
\end{align}
or
\begin{align}
    \mathbb{P}\big( \sup_{\alpha\in \Gamma_\mathcal{J}}\langle\epsilon,\alpha\rangle-L\sqrt{JG}\omega-L(JG)^{\frac{1}{q}}\omega\geq \omega t\big)\leq C\exp\big(-ct^{q}\big).
\end{align}
Therefore, \begin{align}
    \mathbb{P}\Big( \sup_{\alpha\in \Gamma_{\mathcal{J}}}(\langle\epsilon,\alpha\rangle)^2-LJG\omega^2-L(JG)^{\frac{2}{q}}\omega^2\geq \omega^2t^2\Big)\leq C\exp\big(-ct^{q}\big),
\end{align}
and letting $s=t^2$ gives   \eqref{lemmaass1medb}. With a union bound, we also obtain
\begin{align}
        \mathbb{P}\big( \sup_{\alpha\in \Gamma_J}(\langle\epsilon,\alpha\rangle)^2-LJG\omega^2-L(JG)^{\frac{2}{q}}\omega^2\geq \omega^2t\big)\leq &\, {p\choose J}C\exp\big(-ct^{\frac{q}{2}}\big)\nonumber\\\leq &\, C\exp\big\{-ct^{\frac{q}{2}}+J\log(ep)\big\},
\end{align}
from which it follows
\begin{align}
    \mathbb{P}\Big( \sup_{\alpha\in \Gamma_J}(\langle\epsilon,\alpha\rangle)^2-LJG\omega^2-L(JG)^{\frac{2}{q}}\omega^2-L2^{\frac{2}{q}-1}\{J\log (ep)\}^{\frac{2}{q}}\omega^2\geq \omega^2t\Big)\leq C\exp\big(-ct^{\frac{q}{2}}\big).
\end{align}


Next, we prove \eqref{lemmaunionbound1} assuming  $\epsilon_1,\ldots,\epsilon_n$  are independent, centered,  and $\|\epsilon_i\|_{\psi_q}\leq \omega$ for some $0<q\le 2$.
 By H\"{o}lder's  inequality,
\begin{align}
   \sup_{\alpha\in \Gamma_{\mathcal{J}}} \langle\epsilon,\alpha\rangle=\sup_{\alpha\in \Gamma_{\mathcal{J}}}\langle P_{ X_{\mathcal{J}}}\epsilon,\alpha\rangle\leq \|U^T\epsilon\|_2,
\end{align}
where $P_{ X_{\mathcal{J}}}=UU^T$ is the orthogonal projection matrix onto $\mathcal{R}(X_{\mathcal{J}})$ with $\{U_1,\ldots,U_{JG}\}$ as  an orthonormal basis and $\|P_{ X_{\mathcal{J}}}\|_2=1$.
It remains to get a tail bound for $\|U^T\epsilon\|^2_2$.

 Noticing that    (i) $\mathbb{E}[\epsilon]=0$, (ii)   $\|\epsilon_i\|_{\psi_q}\leq \omega$ implies    $\mathbb{E}[\epsilon^2_i]\leq C\omega^2$, (iii) $\|UU^T\|_2=1$, and (iv) $Tr(UU^T)=JG$, we apply a generalized Hanson-Wright inequality (\cite{sambale2020some}, Theorem 2.1) to obtain\begin{align}\label{lemma3mean0medianineq1}
    \mathbb{P}\big( \|U^T\epsilon\|^2_2-LJG\omega^2\geq \omega^2t\big)\leq C\exp\big(-ct^{\frac{q}{2}}\big),
\end{align}
where $C$ can be $2$  and  $c,L$ are constants   depending on $q$ only. This results in \begin{align}\label{lemmamidleboundresult}
    \mathbb{P}\Big( \sup_{\alpha\in \Gamma_{\mathcal{J}}}(\langle\epsilon,\alpha\rangle)^2-LJG\omega^2\geq \omega^2t\Big)\leq C\exp\big(-ct^{\frac{q}{2}}\big).
\end{align}
Finally, we prove \eqref{lemmameanzerobound}. Applying the union bound on  \eqref{lemmamidleboundresult} yields
\begin{align}\label{unionbondfirst}
     \mathbb{P}\big( \sup_{\alpha\in \Gamma_J}(\langle\epsilon,\alpha\rangle)^2-LJG\omega^2\geq \omega^2t\big)\leq {p\choose J}C\exp\big(-ct^{\frac{q}{2}}\big)\leq C\exp\big\{-ct^{\frac{q}{2}}+J\log(ep)\big\}.
\end{align}
Let $t'=t^{\frac{q}{2}}-(1/c)J\log(ep)$. Then \eqref{unionbondfirst} can be rewritten as
\begin{align}\label{supineq1}
     \mathbb{P}\big( \sup_{\alpha\in \Gamma_J}(\langle\epsilon,\alpha\rangle)^2-LJG\omega^2\geq \omega^2\{t'+\frac{1}{c}J\log (ep)\}^{\frac{2}{q}}\big)\leq C\exp(-ct').
\end{align}
Given $0<q\leq 2$, by the convexity of $x^\frac{2}{q}$ on $\mathbb{R}_+$, we have  $(a+b)^\frac{2}{q}\leq 2^{\frac{2}{q}-1}(a^\frac{2}{q}+b^\frac{2}{q})$ for any $a,b\geq0$, and thus
\eqref{supineq1} becomes
\begin{align}
     \mathbb{P}\big( \sup_{\alpha\in \Gamma_J}(\langle\epsilon,\alpha\rangle)^2-LJG\omega^2-L2^{\frac{2}{q}-1}\{J\log (ep)\}^{\frac{2}{q}}\omega^2\geq 2^{\frac{2}{q}-1}\omega^2(t')^\frac{2}{q}\big)\leq C\exp(-ct'),
\end{align}
or equivalently,
\begin{align}
     \mathbb{P}\big( \sup_{\alpha\in \Gamma_J}(\langle\epsilon,\alpha\rangle)^2-LJG\omega^2-L2^{\frac{2}{q}-1}\{J\log (ep)\}^{\frac{2}{q}}\omega^2\geq \omega^2s\big)\leq C\exp\big(-cs^{\frac{q}{2}}\big).
\end{align}
The proof is now complete.
\end{proof}

\section{Further Extensions} \label{app:furtherext}
We showcase two {nonparametric} statistical applications of skewed pivot-blend, one based on data ranks and the other based on kernels.

\paragraph{Skewed rank-based estimation.}
Recently, there has been a lot of interest in rank-based  nonparametric estimation methods that minimize    $\sqrt{12}\sum^n_{i=1}r_i \{\mbox{R}(r_i)/(n+1) -\frac{1}{2} \}$ \citep{Jaeckel1972},
where $\mbox{R}(r_i)$ denotes the rank of $r_i$ among $r_1,\ldots,r_n$, or equivalently, the $\ell_1$ loss on the spread of residuals: $
\frac{1}{ n(n-1) }\sum_{i\neq j}|r_i-r_j|
$ \citep{hettmansperger1978statistical,hettmansperger1998robust,Wang2020}. 
Applying the technique of U-statistics (with kernel $h(r_i, r_j)= |r_i-r_j|$) can show that the criterion results in estimators with desired asymptotic properties.

The adoption of the symmetric (and relatively robust) $\ell_1$-loss function heavily depends on the assumption that $r_i$ follows an i.i.d. distribution (when assessed against the statistical truth). In turn, it implies that the distribution of  differences  $r_i - r_j$ is symmetrical, as in the case of a double-exponential.

 Nonetheless,   the assumption may not  hold in real-world applications \textbf{beyond} the i.i.d. setting. Such deviations can  result in extra skewness of $|r_i - r_j|$ that cannot be captured by an exponential (or half-normal) distribution. We introduce a skewed criterion that operates on the \textit{absolute} differences $|r_i - r_j|$ and $m, \sigma, \nu>0$:
\begin{align}\label{skewrloss}
&\,\sum_{i\neq j}\Big\{ \big(\frac{|r_i-r_j|-m}{\sigma} + m\big) \,1_{m(1-\sigma)\le |r_i-r_j|\leq m}+ \big(\frac{|r_i-r_j|-m}{\nu} +m\big)\,1_{|r_i-r_j|>m}\Big\}\nonumber\\&\,\quad\quad+ n(n-1)\log \{ \sigma [1-\exp(-m)  ]+\nu \exp(-m) \}.
\end{align}
\eqref{skewrloss} results from applying  skewed pivot-blend  to the exponential density (instead of   the double-exponential density). A regularization term (such as an $\ell_1$ penalty) can be incorporated to capture structural parsimony. 

\paragraph{Kernel-assisted nonparametric skew estimation.}
Assume  $y-X\beta^*\sim \mbox{SP}^{(\phi^*)}(\sigma^{*},\nu^{*},m^{*})$, where the functional form of the density  $\phi^{*}$ is also unknown.    We can employ  a backward-forward scheme for    {nonparametric} estimation, along with  \textit{explicit} capture of skewness that  aligns with the theme of this paper.

To estimate   $\sigma^{*},\nu^{*},m^{*}$,       kernel may be employed to approximate    $\phi^{*}$, but the data  follow  $\mbox{SP}^{(\phi^*)} $. We can address this by using   backward pivot-blend. Holding $\beta, \sigma, \nu, m$ constant for now, and referring to the definitions of $\tilde{r}_i$ ($1\le i \le n$), $L(m)$, and $R(m)$ in Remark \ref{conversionremark},  introduce a kernel density estimator with appropriate weights to estimate $\phi^*$ based on the transformed residuals $\tilde r_i$:
 \begin{align}\label{rform}
 \begin{split}
      \phi_K(t;h)=\,\frac{1}{h}\sum^{n}_{i=1}&\Big\{\frac{\nu}{L(m)\nu+R(m)\sigma}K_h\big(\frac{r_i-m}{\sigma}+m-t\big)1_{r_i\leq m}\\&+\frac{\sigma}{L(m)\nu+R(m)\sigma}K_h\big(\frac{r_i-m}{\nu}+m-t\big)1_{r_i>m}\Big\}.
 \end{split}
 \end{align}
where   $K_h(t)=K\big(t/h\big)$, the kernel function $K(t)$ can be  any continuous symmetric function with $\int^{\infty}_{-\infty} K(t),dt=1$, and $h>0$ represents the bandwidth parameter.
The kernel approach in conjunction with (forward) skewed pivot-blend gives rise to a nonparametric skew-estimation problem, which warrants further investigation.  

\section{More Experiments}
\label{appendixb}
In this part, we  consider some  densities which may not be unimodal or symmetric (as shown in   Figure \ref{fig:ex3ex4ex5}). The experimental setup is the same as in Section  \ref{subsec:basicsimu}.

\begin{figure}
\begin{subfigure}{0.32\textwidth}
  \includegraphics[width=\textwidth, height=3.6cm]{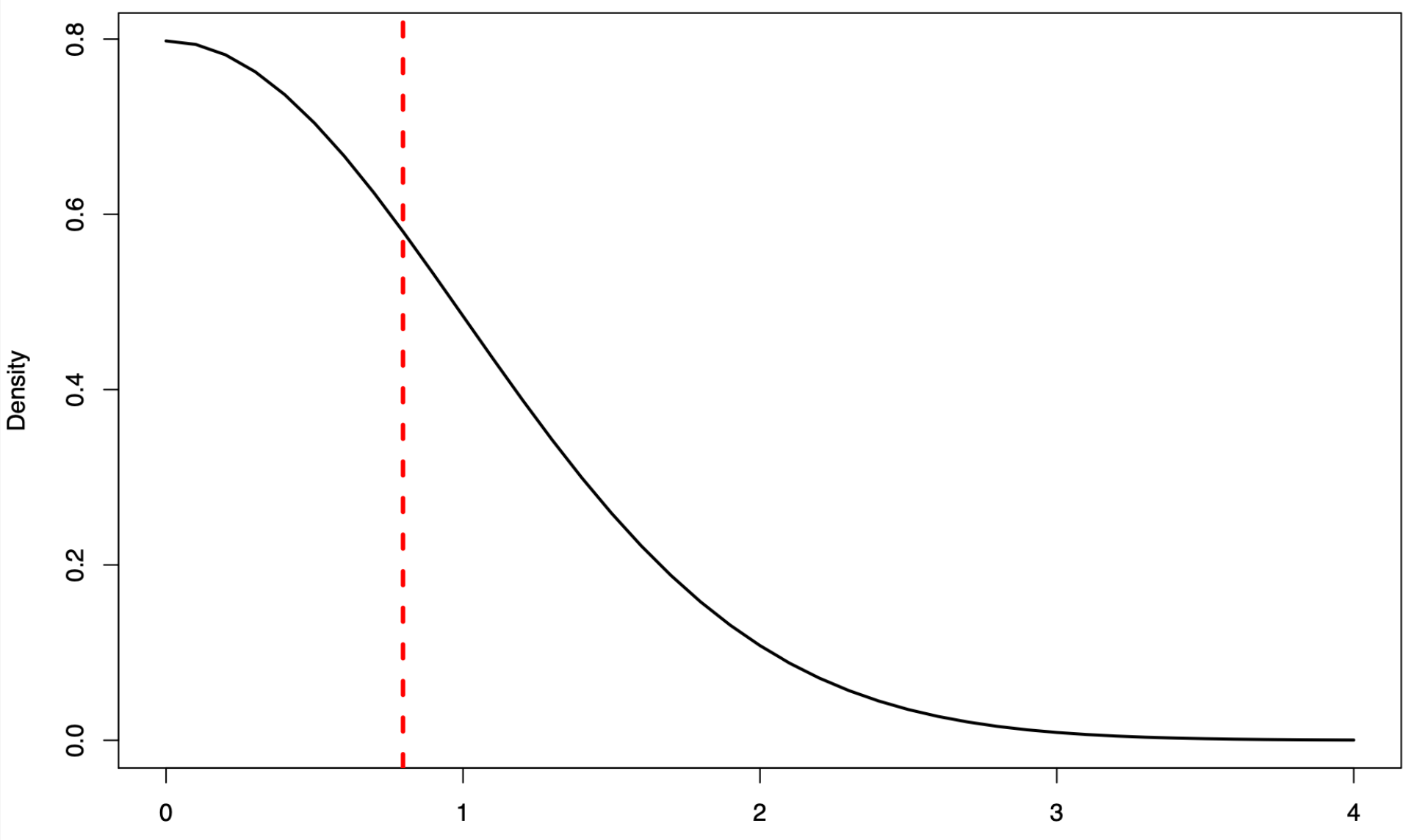}
  \caption{\scriptsize  $\phi$: half-normal,  $m^*$: mean}
  \label{halfnormal}
\end{subfigure}
\hfill
\begin{subfigure}{0.32\textwidth}

  \includegraphics[width=\textwidth, height=3.6cm]{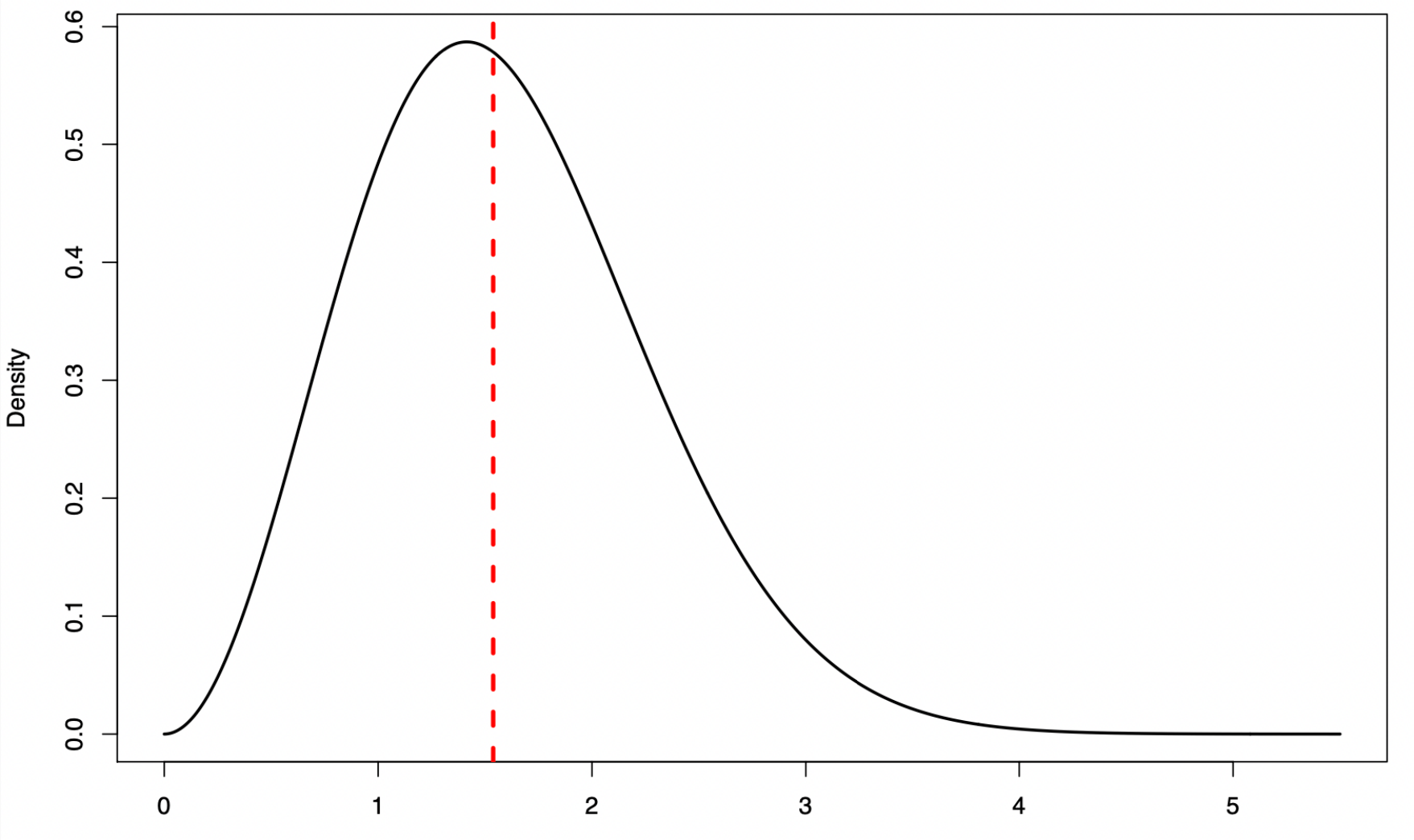}
  \caption{\scriptsize  $\phi$: Maxwell,   $m^*$: median}
  \label{maxwell}
\end{subfigure}
\hfill
\begin{subfigure}{0.32\textwidth}

  \includegraphics[width=\textwidth, height=3.6cm]{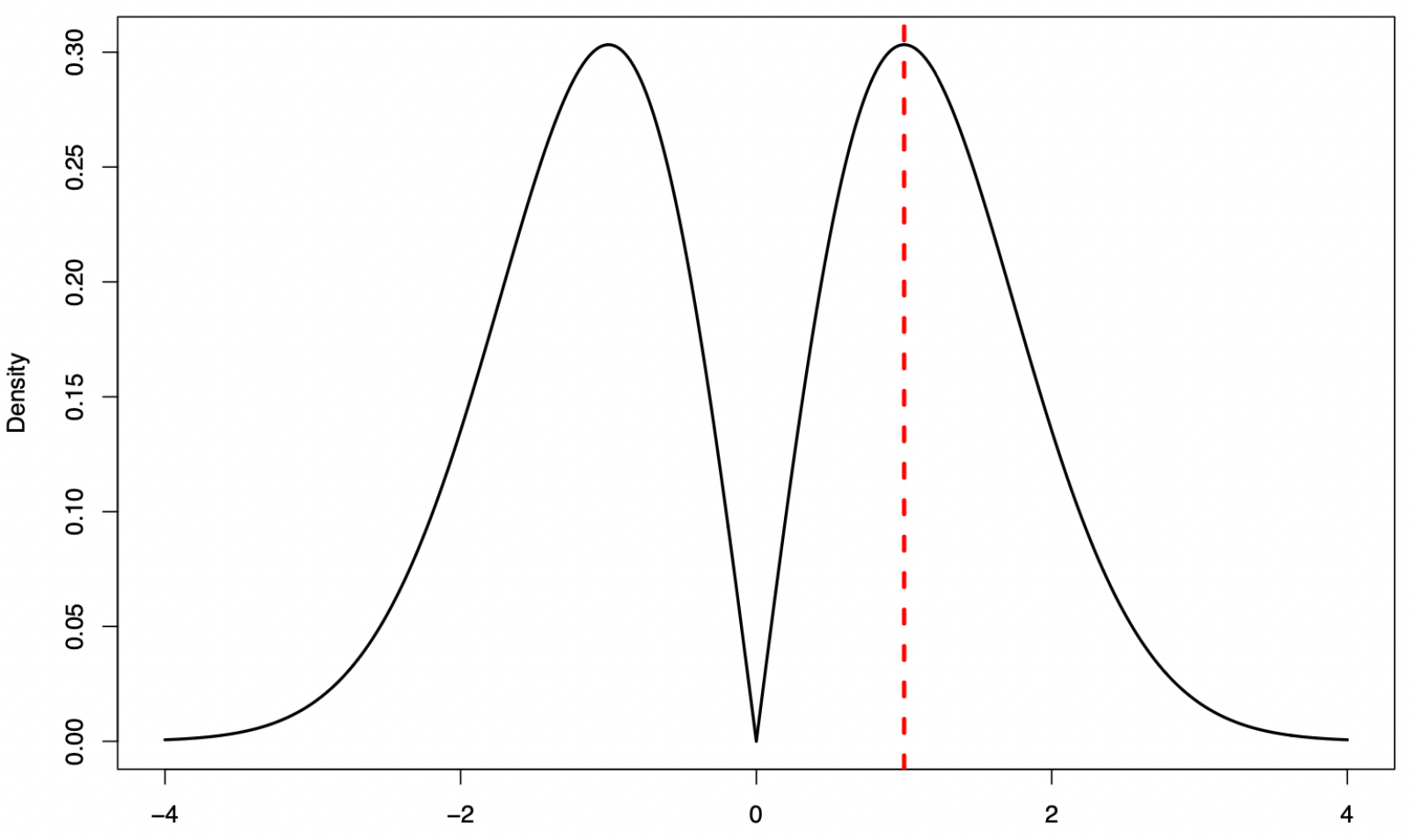}
  \caption{\scriptsize $\phi$: double Rayleigh,   $m^*$:  right mode}
  \label{doublerayleigh}
\end{subfigure}

\caption{\footnotesize Plots of  $\phi$ for \textbf{Ex 3--Ex 5}. The choices  of $m^*$ are indicated in red dashed lines.}
\label{fig:ex3ex4ex5}
\end{figure}

\textbf{Ex 3.} (Half-normal with the   mean as the pivotal point): Let $\phi$ be the half-normal density (with scale 1).  We set $n=700 $,   $\kappa=0.1$, $m^*$ as the mean,   $\nu^*=0.1$ and  $\sigma^*=0.2,0.3,0.4$.

\textbf{Ex 4.} (Maxwell-Boltzmann with the  median as the pivotal point): Here, $\phi$ is the Maxwell-Boltzmann density which is widely used in statistical mechanics  \citep{huang2008statistical}. We set   $n=500$,   $\kappa=0.2$, $m^*$ as the median,  $\nu^*=0.1$ and  $\sigma^*=0.4,0.5,0.6$.

\textbf{Ex 5.} (Double Rayleigh with $m^*$ as the  right mode): Let $\phi$ be the double Rayleigh density (with scale 1),    $n=500$, $\kappa=0.1$,   $m^*$ be the  right mode,     $\nu^*=0.1$ and  $\sigma^*=0.2,0.3,0.4$.

\begin{table}[!ht]
\scriptsize
\centering
\resizebox{\columnwidth}{!}{
\begin{tabular}{ccccccccccccc}
\hline\hline\noalign{\vskip 1mm}
     \multicolumn{13}{c}{Skewed half-normal}\\
      \cline{3-13}
          &       &   \multicolumn{3}{c}{$\sigma^*/\nu^*=2$ }                    &  & \multicolumn{3}{c}{$\sigma^*/\nu^*=3$ }                     &  & \multicolumn{3}{c}{$\sigma^*/\nu^*=4$ }
     \\
     \cline{3-5} \cline{7-9} \cline{11-13}

 &    & Err($\beta$)  & Err($\sigma$) & Err($\nu$) &   & Err($\beta$)  & Err($\sigma$) & Err($\nu$) &    & Err($\beta$)  & Err($\sigma$) & Err($\nu$) \\
 \hline
QR$^*$ &  & 0.80  &---  & --- &  & 0.80 & ---   & --- &          &  0.80 &   ---    &  ---    \\
BQR$^*$ &  &0.80  & 0.49  & 0.62&  & 0.80 & 0.51   & 0.64 &          &  0.80 &   0.53   &  0.65   \\
AME  &  &0.64  & 0.99  & 1.22 &  & 0.56 & 1.0   & 2.02 &          &  0.48 &   1.0    &  2.82   \\
ZQR   &  &0.65  &0.97  & 0.09 &  & 0.59 & 0.95   & 0.37 &          &  0.52 &   0.95    &  0.67  \\
 SPEUS  &  &0.14  &0.11  & 0.1 &  & 0.08 & 0.08   & 0.13 &          &  0.06 &  0.09    &  0.17  \\
  \hline\noalign{\vskip 1.5mm}
 \multicolumn{13}{c}{Skewed Maxwell-Boltzmann}\\   \cline{3-13}
          &       &   \multicolumn{3}{c}{$\sigma^*/\nu^*=4$ }                    &  & \multicolumn{3}{c}{$\sigma^*/\nu^*=5$ }                     &  & \multicolumn{3}{c}{$\sigma^*/\nu^*=6$ }
     \\
     \cline{3-5} \cline{7-9} \cline{11-13}

 &    & Err($\beta$)  & Err($\sigma$) & Err($\nu$) &   & Err($\beta$)  & Err($\sigma$) & Err($\nu$) &    & Err($\beta$)  & Err($\sigma$) & Err($\nu$) \\
  \hline
QR$^*$  &  &1.59  &---  & --- &  & 1.59 & ---   & --- &          &  1.59 &   ---    &  ---    \\
BQR$^*$  &  &1.59  &0.41 & 0.74 &  & 1.59 & 0.43   & 0.75 &          &  1.59 &   0.44    &  0.75   \\
AME   &  &1.47  &0.22  & 0.78 &  & 1.46 & 0.19   & 0.84 &          &  1.46 &  0.17   &  0.91   \\
ZQR  &  &1.49  &0.58  & 0.30 &  &1.48 & 0.58   & 0.32 &          &  1.48 &   0.57    &  0.34   \\
 SPEUS  &  &0.18  &0.14  & 0.20 &  & 0.17 & 0.13  & 0.21 &          &  0.16 &   0.12    & 0.30  \\
 \hline\noalign{\vskip 1.5mm}
 \multicolumn{13}{c}{Skewed double Rayleigh}\\   \cline{3-13}
          &       &   \multicolumn{3}{c}{$\sigma^*/\nu^*=2$ }                    &  & \multicolumn{3}{c}{$\sigma^*/\nu^*=3$ }                     &  & \multicolumn{3}{c}{$\sigma^*/\nu^*=4$ }
     \\
     \cline{3-5} \cline{7-9} \cline{11-13}

 &    & Err($\beta$)  & Err($\sigma$) & Err($\nu$) &   & Err($\beta$)  & Err($\sigma$) & Err($\nu$) &    & Err($\beta$)  & Err($\sigma$) & Err($\nu$) \\
  \hline
QR$^*$  &  &1.0  &---  & --- &  & 1.0 & ---   & --- &          &  1.0 &   ---    &  ---    \\
BQR$^*$   &  &1.0  &0.68  & 0.27 &  & 1.0 & 0.70   & 0.26 &          &  1.0 &  0.71    & 0.26     \\
AME    &  &0.87  &1.78  & 1.76 &  &0.88 & 2.02   & 1.9 &          &  0.89 &   2.15   &  1.92   \\
ZQR  &  &0.67  &0.50  & 1.67 &  & 0.50 & 0.48   & 2.69 &          &  0.32 &   0.45   &  3.78    \\
 SPEUS  &  &0.22  &0.03  & 0.18 &  & 0.15 & 0.03   & 0.24 &          &  0.10 &  0.03  & 0.26 \\
        \hline\hline
\end{tabular}
}
\caption{\footnotesize Performance comparison for skewed half-normal  with pivotal point at the  mean,  skewed Maxwell with pivotal point at the median, and skewed double Rayleigh with pivotal point at the right mode ({Ex 3}--{Ex 5})\label{halfnormalmaxelldoublerayleigh}}
\end{table}

 Table \ref{halfnormalmaxelldoublerayleigh} illustrates significant issues with conventional methods in capturing skewness, even when asymmetry is centered around the median, mean, or mode. Specifically, QR$^*$ and BQR$^*$ failed to accurately recover the true $\beta^*$ despite using the true quantile as input.  Moreover, in all three cases, AME or AME provided at least one misleading estimate of the scales. In contrast, our proposed method demonstrated remarkable performance, with the associated $\beta$-error being at most $1/3$ of that of the other methods. \\


We also conducted  experiments  on the air pollution data of
Leeds from 1994 to 1998
\citep{heffernan2004conditional,southworth2020package}. The dataset contain
578 measurements of the daily maximum levels of  ozone  (O3), nitrogen dioxide (NO2), nitrogen oxide (NO), sulfur dioxide (SO2) and particulate matter  (PM10). We forecasted PM10 levels for the summer months (April to July) using  other air pollutants. We considered the Gumbel model and its skewed pivot-blend enhancement SPEUS, in addition to ZQR, AME, and ESN.
The p-values from the Kolmogorov-Smirnov tests for these models were  8e-2, 0.71,  9e-2,  0.20, and 8e-8,  respectively.  The results may appear surprising, considering the popularity of the skewed Gumbel distribution  for modeling such extreme values \citep{boldi2007mixture}. Figure \ref{airqqplot-ori} presents   the residual Q-Q plots, demonstrating  that our skew-reinforced method significantly improves data fit compared to the classical Gumbel model.

\begin{figure}[htp!]
  \includegraphics[width=.45\textwidth]{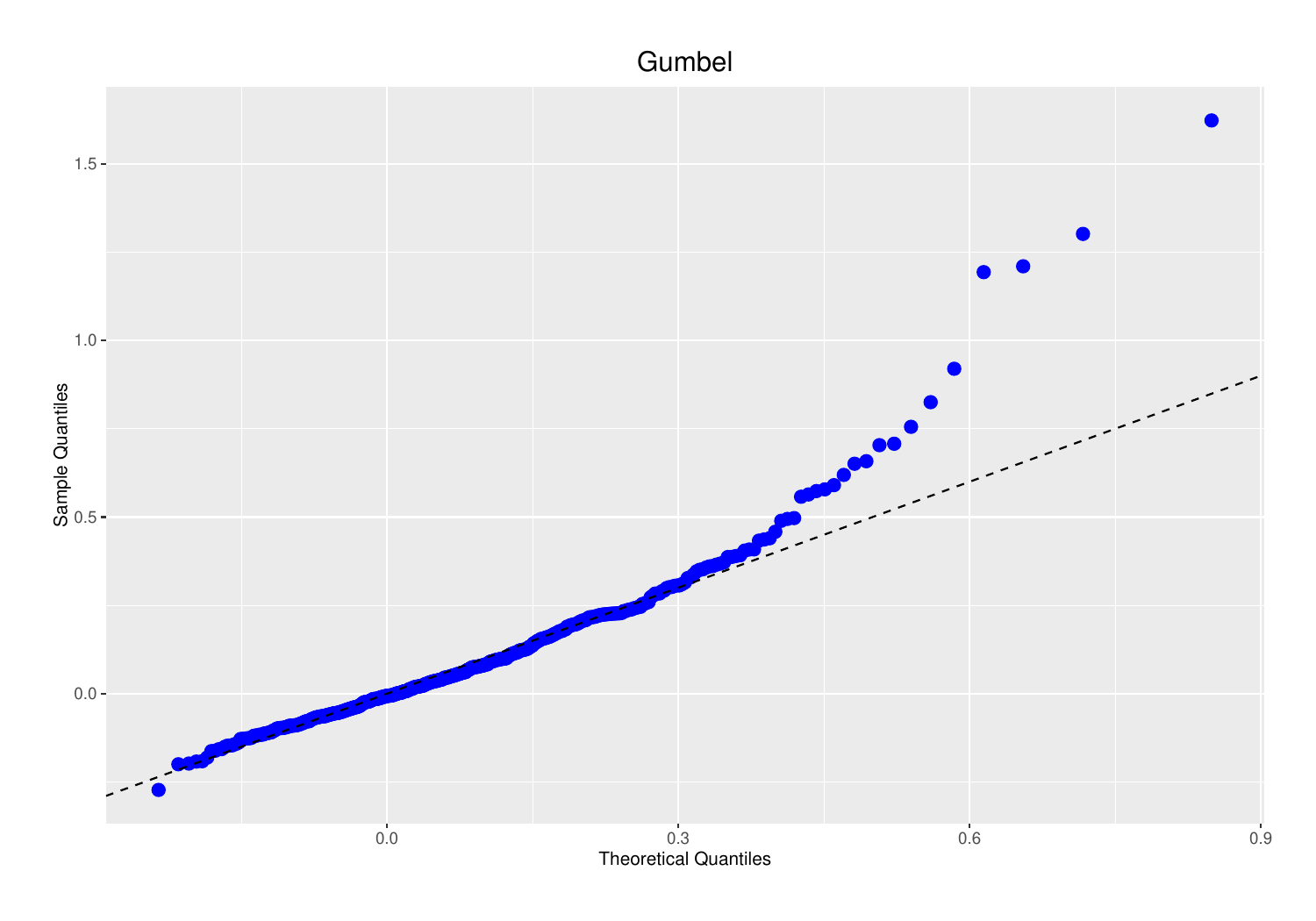}
  \includegraphics[width=.45\textwidth]{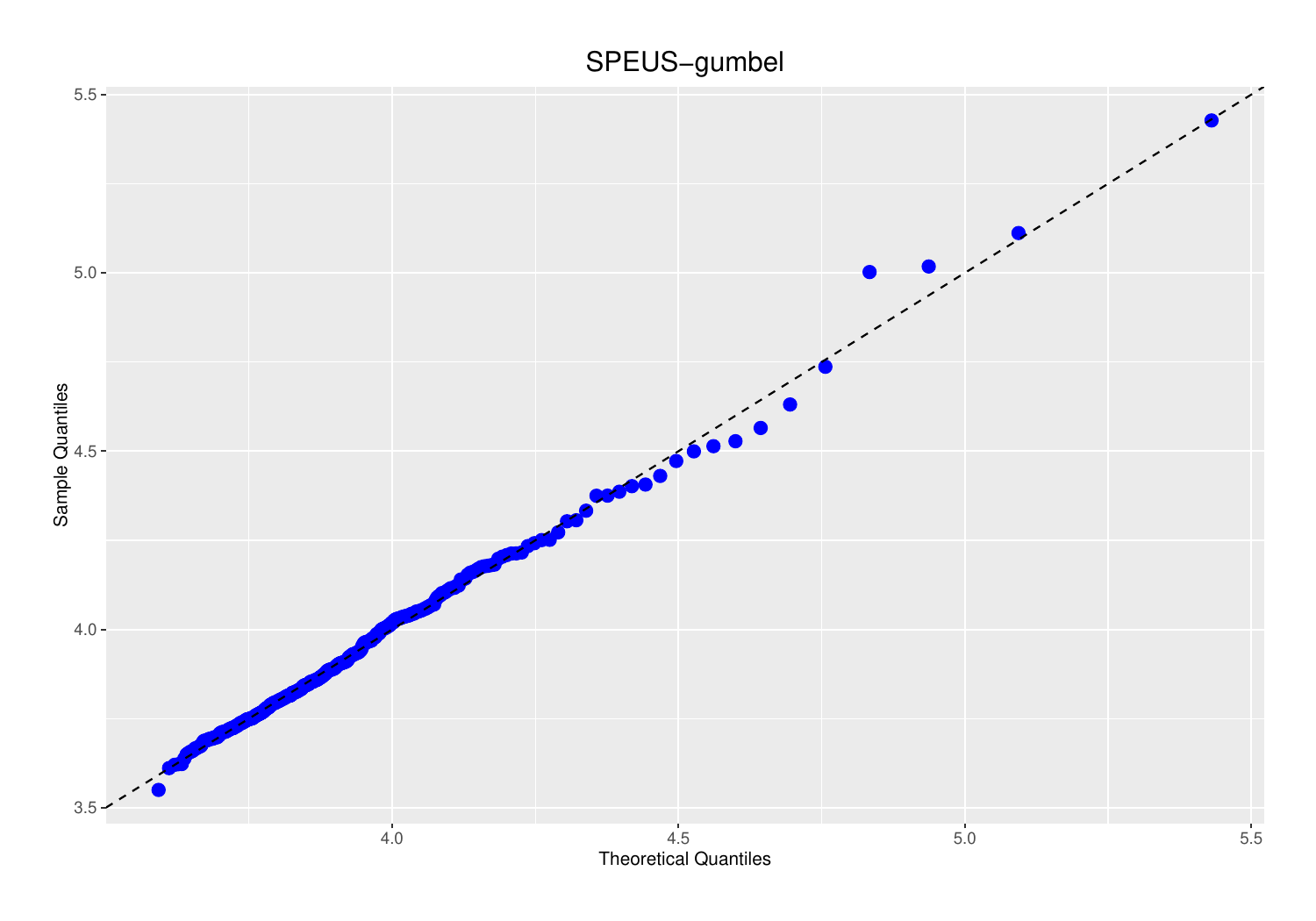}
\caption{\footnotesize Q-Q plots of residuals for Gumbel (left) and SPEUS (right) on air pollution  data.}
\label{airqqplot-ori}
\end{figure}

In real-world applications, data scientists frequently confront challenges related to skewness. However,   fitting a   common distribution that inadequately addresses these distortions can lead to subpar model fits and  misleading inferences. The skewed pivot-blend technique refines skewness management in these distributions, thus enhancing the accuracy and reliability of the models.

{
\bibliographystyle{apalike}
\bibliography{speus}
}

%
%

\end{document}